\documentclass[awide]{article}
\usepackage[margin=3cm]{geometry}

\usepackage{tikz}
\usepackage{amssymb}
\usepackage{amsmath}
\usepackage{amsthm}

\usepackage[normalem]{ulem}
\usepackage{endnotes,microtype,xspace,graphicx,fancyvrb,multirow}
\usepackage{xcolor}
\usepackage{xspace}
\usepackage{balance}
\usepackage{pifont}
\usepackage{braket}
\usepackage{anyfontsize}
\usepackage{booktabs}
\usepackage{fp}
\usepackage{array}
\usepackage{balance}
\usepackage{textcomp}
\usepackage{stmaryrd}
\usepackage{enumitem}
\usepackage{subcaption}
\usepackage{float}
\usepackage{stfloats}
\usepackage{underscore}
\usepackage{multirow}
\usepackage{wrapfig}
\usepackage{caption}
\usepackage{textgreek}

\usepackage{centernot}
\usepackage[vlined,ruled]{algorithm2e}
\usepackage{etoolbox}

\makeatletter
% Remove right hand margin in algorithm
\patchcmd{\@algocf@start}% <cmd>
  {-1.5em}% <search>
  {0pt}% <replace>
  {}{}% <success><failure>
\makeatother
\usepackage{latexsym}

%%%%%%%%%%%%%%%%%%%%%%%%%%%%%%%%%%%%%%%%%%%%%%%%%%%%%%%%%%%%%%%%%%%%%%%%%%%%%%%%
%% tikz/pgf
%%%%%%%%%%%%%%%%%%%%%%%%%%%%%%%%%%%%%%%%%%%%%%%%%%%%%%%%%%%%%%%%%%%%%%%%%%%%%%%%
\usepackage{tikz}
\usetikzlibrary{positioning, calc, decorations.markings,arrows,decorations.pathmorphing,fit,spy,shapes,intersections}
%\usepackage[tikz]{ocgx2}

%% hyperlinkable nodes
\tikzset{
    hyperlink node/.style={
        alias=sourcenode,
        append after command={
            let \p1=(sourcenode.north west),
                \p2=(sourcenode.south east),
                \n1={\x2-\x1},
                \n2={\y1-\y2} in
            node [inner sep=0pt, outer sep=1pt,anchor=north west, at=(\p1)]
	    {\hyperlink{#1}{\XeTeXLinkBox{\phantom{\rule{\n1}{\n2}}}}}
                    %xelatex needs \XeTeXLinkBox, won't create a link unless it
                    %finds text --- rules don't work without \XeTeXLinkBox.
                    %Still builds correctly with pdflatex and lualatex
        }
    }
}

%%%%%%%%%%%%%%%%%%%%%%%%%%%%%%%%%%%%%%%%%%%%%%%%%%%%%%%%%%%%%%%%%%%%%%%%%%%%%%%%
%% listings
%%%%%%%%%%%%%%%%%%%%%%%%%%%%%%%%%%%%%%%%%%%%%%%%%%%%%%%%%%%%%%%%%%%%%%%%%%%%%%%%
\usepackage{listings}

\definecolor{darkred}{RGB}{184, 5, 69}
\definecolor{darkgreen}{rgb}{0.0, 0.2, 0.13}
\definecolor{gray}{gray}{0.25}
\definecolor{lightblue}{HTML}{8080FF}

\lstloadlanguages{C}
\lstdefinelanguage
	[RISCV]{Assembler}
	[x86masm]{Assembler}
	{morekeywords={amoswap.w,amoswap.w.aq,amoswap.w.rl,amoswap.w.aq.rl.sc,fence,lw,sw,rw,r,w}}

\lstdefinelanguage
	[ARM]{Assembler}
	[x86masm]{Assembler}
	{morekeywords={SWP,SWPA,SWPL,SWPAL}}	

\lstset{
    basicstyle=\small\ttfamily,
    tabsize=4,
    % print whole listing small
    %keywordstyle=\color{black}\bfseries\underbar,
    % underlined bold black keywords
    %identifierstyle=,
    % nothing happens
    %commentstyle=\color{white}, % white comments
    %stringstyle=\ttfamily,
    % typewriter type for strings
    %showstringspaces=false
}
\lstset{emph={%
    atomic_t,%
    atomic_await_eq, atomic_get_inc, %
    atomic_write, atomic_read,%
    atomic_write_rel, atomic_read_acq,%
    atomic_write_rlx, atomic_read_alx,%
    atomic_inc, atomic_await_neq,%
    atomic_await_eq_acq, atomic_get_inc_rlx, atomic_write_rel, atomic_read_rlx,%
    atomic_xchg,atomic_xchg_acq,atomic_xchg_rel,atomic_xchg_rlx,%
    atomic_int,%
    atomic_fence, atomic_fence_acq, atomic_fence_rel, atomic_fence_rlx,
    %atomic_xchg_acquire,atomic_xchg_relaxed,atomic_xchg_release,% Linux
    },emphstyle={\color{darkred}}%
}%

%% pcal style from https://github.com/UBC-NSS/pgo/blob/master/doc/manual/paper.tex
\lstdefinelanguage{pcal}{
	morekeywords={CHOOSE, CONSTANT, EXCEPT, UNION, SUBSET, LET, IN, CASE, OTHER, ASSUME, ASSUMPTION, AXIOM, THEOREM, assert, await, begin, call, define, do, either, else, elsif, end, goto, if, macro, or, print, procedure, process, return, skip, then, variable, variables, when, while, with},
	morecomment=[l]{\\*},
	morecomment=[s]{(*}{*)},
	morestring=[b]"
}
\lstdefinestyle{compact}{
	basicstyle=\scriptsize\ttfamily,
	%keywordstyle=\scriptsize\color{darkred}\ttfamily,
	keywordstyle=\scriptsize\color{blue}\ttfamily,
	commentstyle=\scriptsize\color{gray}\ttfamily,
	stringstyle=\scriptsize\color{blue}\ttfamily,
	numberstyle=\tiny\color{gray},
	columns=fullflexible,
	%frame=tlrb,
	%numbers=left,
	tabsize=2,
	showstringspaces=false,
	moredelim=**[is][\color{darkred}]{@}{@},
	moredelim=**[is][\color{barrier}]{$}{$},
	%stepnumber=1,              
	%numbersep=5pt,              % how far the line-numbers are from the code
	%numberfirstline=false,
	%tabsize=4,
	%escapechar=|,
	%keepspaces,
}

\lstdefinestyle{mainexample}{
	basicstyle=\small\ttfamily,
	keywordstyle=\small\color{darkred}\ttfamily,
	commentstyle=\small\color{gray}\ttfamily,
	stringstyle=\small\color{blue}\ttfamily,
	numberstyle=\small,
	columns=fullflexible,
	%frame=tlrb,
	%numbers=left,
	tabsize=2,
	showstringspaces=false
}

\lstdefinestyle{mystyle}{
	basicstyle=\scriptsize\ttfamily,
	keywordstyle=\scriptsize\color{darkred}\ttfamily,
	commentstyle=\scriptsize\color{gray}\ttfamily,
	stringstyle=\scriptsize\color{blue}\ttfamily,
	numberstyle=\tiny,
	columns=fullflexible,
	%frame=tlrb,
	numbers=left,
	tabsize=4,
	showstringspaces=false
}

\lstdefinestyle{hlstyle}{
	basicstyle=\scriptsize\color{gray!20}\ttfamily,
	keywordstyle=\scriptsize\color{gray!20}\ttfamily,
	commentstyle=\scriptsize\color{gray!20}\ttfamily,
	stringstyle=\scriptsize\color{gray!20}\ttfamily,
	numberstyle=\tiny,
	columns=fullflexible,
	%frame=tlrb,
	numbers=left,
	tabsize=4,
	showstringspaces=false
}
\lstdefinestyle{tabularstyle}{
	basicstyle=\small\ttfamily,
	keywordstyle=\small\color{darkred}\ttfamily,
	commentstyle=\small\color{gray}\ttfamily,
	stringstyle=\small\color{blue}\ttfamily,
	columns=fullflexible,
	frame=none,
	numbers=none,
	tabsize=2,
	showstringspaces=false
}

\definecolor{barrier}{rgb}{1,0.5,0}
\lstdefinestyle{tutorialcode}{
	language=C,
	emptylines=1,
	breaklines=true,
	basicstyle=\ttfamily\color{black},
	moredelim=**[is][\color{darkred}]{@}{@},
	moredelim=**[is][\color{barrier}]{$}{$},
	numbers=left,              
	stepnumber=1,              
	numbersep=5pt,              % how far the line-numbers are from the code
	numberstyle=\tiny\color{gray},   
	keywordstyle=\color{blue},   
	numberfirstline=false,
	tabsize=4,
	escapechar=|,
	keepspaces,
}

\usepackage[T1]{fontenc}
\usepackage[utf8]{inputenc}
\usepackage{pslatex}

\usepackage{authblk}
% refs and bib
\usepackage{cite}               % order multiple entries in \cite{...}
\usepackage{url}                % allow \url in bibtex for clickable links
\usepackage{xcolor}             % color definitions, to be use for...
\usepackage{hyperref}           % ...clickable refs within pdf...
\hypersetup{                    % ...like so
	linktocpage=true,
  colorlinks,
  linkcolor={green!60!black},
  citecolor={red!70!black},
  urlcolor={blue!70!black}
}
\usepackage{breakurl}           % break too-long urls in refs
\usepackage[capitalise]{cleveref}
\usepackage{tikz-qtree}
\usepackage{multicol}
\usepackage{todonotes} % TODO remove for last version

\usepackage{etoolbox}

\definecolor{barrier}{rgb}{1,0.5,0}
\lstdefinestyle{tutorialcode}{
	language=C,
	emptylines=1,
	breaklines=true,
	basicstyle=\ttfamily\color{black},
	moredelim=**[is][\color{darkred}]{@}{@},
	moredelim=**[is][\color{barrier}]{$}{$},
	numbers=left,              
	stepnumber=1,              
	numbersep=5pt,              % how far the line-numbers are from the code
	numberstyle=\tiny\color{gray},   
	keywordstyle=\color{blue},   
	numberfirstline=false,
	tabsize=4,
	escapechar=|,
	keepspaces,
}

\lstdefinestyle{verbcode}{
	basicstyle=\scriptsize\ttfamily,
	%keywordstyle=\scriptsize\color{darkred}\ttfamily,
	keywordstyle=\scriptsize\ttfamily,
	commentstyle=\scriptsize\ttfamily,
	stringstyle=\scriptsize\ttfamily,
	numberstyle=\tiny\color{gray},
	emphstyle=\scriptsize\ttfamily,
	columns=fullflexible,
	%frame=tlrb,
	%numbers=left,
	tabsize=8,
	showstringspaces=false,
	moredelim=**[is][\bf\ttfamily]{@}{@},
	moredelim=**[is][\bf]{$}{$},
	%emptylines=1,
	%breaklines=true,
	escapechar=|,
	keepspaces,
}
\lstdefinestyle{verbcodelarge}{
	style=verbcode,
	basicstyle=\ttfamily,
	keywordstyle=\ttfamily,
	commentstyle=\ttfamily,
	stringstyle=\ttfamily,
	emphstyle=\ttfamily,
}
\lstdefinestyle{casecode}{
	language=C,
	basicstyle=\scriptsize\ttfamily,
	%keywordstyle=\scriptsize\color{darkred}\ttfamily,
	keywordstyle=\scriptsize\color{blue}\ttfamily,
	commentstyle=\scriptsize\color{gray}\ttfamily,
	stringstyle=\scriptsize\color{blue}\ttfamily,
	numberstyle=\tiny\color{gray},
	keywordstyle=\color{blue},
	columns=fullflexible,
	%frame=tlrb,
	%numbers=left,
	tabsize=2,
	showstringspaces=false,
	moredelim=**[is][\color{darkred}]{@}{@},
	moredelim=**[is][\color{barrier}]{$}{$},
	%emptylines=1,
	%breaklines=true,
	stepnumber=1,
	numberfirstline=false,
	escapechar=|,
	keepspaces,
    numberstyle=\tiny,
	numbersep=5pt,
    xleftmargin=1.5em,
    framexleftmargin=4pt,
    numbers=left,
}

\lstdefinestyle{trcode}{
	style=casecode,
	basicstyle=\small\ttfamily,
	%keywordstyle=\scriptsize\color{darkred}\ttfamily,
	keywordstyle=\small\color{blue}\ttfamily,
	commentstyle=\small\color{gray}\ttfamily,
	stringstyle=\small\color{blue}\ttfamily,
	numberstyle=\tiny\color{gray},
}
 % local packages
\newcommand{\sys}{\mbox{\textsc{VSync}}\xspace}

\newcommand{\semicheckmark}{\cmark\kern-1.1ex\raisebox{.7ex}{\rotatebox[origin=c]{125}{\textbf{--}}}}

\newcommand{\eg}{\emph{e.g.}}
\newcommand{\ie}{\emph{i.e.}}

\newcommand{\mo}[1][]{\text{\color{orange} mo}\xspace}
\newcommand{\rf}[1][]{\text{\color{teal}{}#1{}rf}\xspace}
\newcommand{\po}[1][]{\text{\color{blue} po}\xspace}

\newcommand{\cmark}{\text{\ding{51}}}
\newcommand{\xmark}{\text{\ding{55}}}
\mathchardef\hy="2D

\newcommand{\event}[3]{\langle\!\langle #1, \ #2 \rangle\!\rangle_{#3}}
\newcommand{\execution}[1]{\tikz[baseline=-0.75ex]{\node (fignum) {\tiny #1}; \draw[black] (fignum) circle [radius=3.5pt];}}
\tikzstyle{event}=[inner sep=1pt]
\newcommand{\barrier}[1]{\ifmmode\mathit{#1}\else\emph{#1}\fi}

%In math (short version)
\newcommand{\seqc}{\barrier{sc}}
\newcommand{\rlx}{\barrier{rlx}}
\newcommand{\acq}{\barrier{acq}}
\newcommand{\rel}{\barrier{rel}}

%In text (potentially long version)
\newcommand{\SeqCst}{\seqc\xspace}

\newcommand{\Acquire}{\acq\xspace}
\newcommand{\Release}{\rel\xspace}

%Highlighting

%Flowchart

\tikzstyle{decision} = [diamond, draw, fill=white, text=black, %fill=violet!20,
 text badly centered, outer sep =0pt, inner sep=0pt]
\tikzstyle{block} = [rectangle, draw,text=black,fill=black!10, %, fill=blue!20,
text centered, rounded corners]
\tikzstyle{line} = [draw, -latex']
\tikzstyle{cloud} = [draw, ellipse,fill=blue!5, node distance=3cm,
minimum height=2em]
\colorlet{highlightnewstuffcolor}{yellow}

% exec graph sets
\newcommand{\finitegraphs}{\ensuremath{\mathbb G_\ast^F}}
\newcommand{\infinitegraphs}{\ensuremath{\mathbb G_\ast^\infty}}

% Comments

%performance macros
\newcommand{\qspinopttime}{11 minutes}

% Making things at the right margins:
\makeatletter
\def\thm@space@setup{\thm@preskip=1pt
	\thm@postskip=1pt}
\makeatother

\setlength{\partopsep}{0pt}
\setlength{\parskip}{0pt}
\setlength{\textfloatsep}{5pt}
\setlength{\dbltextfloatsep}{7pt}
\captionsetup{belowskip=0pt}
\captionsetup{skip=2pt}
\usepackage[compact]{titlesec}
%\titlespacing*{hcommandi}{hlefti}{hbefore-sepi}{hafter-sepi}[hright-sepi]
\titlespacing*{\section}{0pt}{*2.5}{*1}
\titlespacing*{\subsection}{0pt}{*1.5}{*1}
\titlespacing*{\subsubsection}{0pt}{*1}{*1}

% et al
\newcommand{\etal}{\textit{et al.}}
% footnote + url
\newcommand{\footnoteurl}[1]{\footnote{\;\url{#1}}}

%environments
\newtheorem{lemma}{Lemma}
\newtheorem{theorem}{Theorem}
\newtheorem{definition}{Definition}
\newtheorem*{definition*}{Definition}

% Custom commands
\newif \ifcomments
%\commentstrue 			% just comment out for false

\ifcomments

\else

\fi

\input{techreport-cmds} % local commands

%opening
\title{\sys: Push-Button Verification and Optimization for Synchronization Primitives on Weak Memory Models\\
(Technical Report)}

\makeatletter
\makeatother

\author[1,2]{Jonas~Oberhauser}
\author[1,2]{Rafael~Lourenco~de~Lima~Chehab}
\author[1,2]{Diogo~Behrens}
\author[1,2]{Ming~Fu}
\author[1,2]{Antonio~Paolillo}
\author[1,2]{Lilith~Oberhauser}
\author[1,2]{Koustubha~Bhat}
\author[2]{Yuzhong~Wen}
\author[2,3]{Haibo~Chen}
\author[1,2]{Jaeho~Kim}
\author[4]{Viktor~Vafeiadis}
\affil[1]{Huawei~Dresden~Research~Center}
\affil[2]{Huawei~OS~Kernel~Lab}
\affil[3]{Shanghai~Jiao~Tong~University}
\affil[4]{Max~Planck~Institute~for~Software~Systems (MPI-SWS)}

\begin{document}
\date{}
\maketitle

\begin{abstract}
	This technical report contains material accompanying our work with same title published at ASPLOS'21~\cite{vsync}.
	%In \S\ref{sec:tutorial} we give a brief tutorial of how to apply \sys to {\em new implementations} of synchronization primitives.
	%Next we present alternative barrier-mode optimization algorithms and compare them with our Linear Relaxation (\S\ref{s:opt}).
	We start in \S\ref{s:spinning} with a detailed presentation of the core innovation of this work, Await Model Checking (AMC).
	The correctness proofs of AMC can be found in \S\ref{s:amc}.
	Next, we discuss three study cases in \S\ref{s:cases},
	presenting bugs found and challenges encountered when applying \sys to {existing code bases}.
	Finally, in \S\ref{s:eval} we describe the setup details of our evaluation and report further experimental results.
	%Finally, in \S\ref{s:loxical} we present a complementary verification strategy currently under development.
\end{abstract}
\pagebreak
\tableofcontents
%-------------------------------------------------------------------------------
%\pagebreak
%\input{tutorial}
%\pagebreak
%\input{opt}
\pagebreak
\hypertarget{s:spinning}{
\section{Await Model Checking in Detail}
}
\label{s:spinning}

AMC is an enhancement of \emph{stateless model checking} (SMC) capable of handling programs with awaits on WMMs.
SMC constructs all possible executions of a program, and filters out those inconsistent with respect to the underlying memory model.
However, SMC falls short when the program has infinitely many or non-terminating executions (\eg, due to await loops) because the check never terminates.
AMC overcomes this limitation by filtering out executions in which multiple iterations of an await loop read from the same writes.

We start introducing basic notation and definitions, including execution graphs, which are used to represent executions.
Next, we explain how awaits lead to infinitely many and/or non-terminating executions, and how AMC overcomes these problems.
We present sufficient conditions under which AMC correctly verifies programs including await termination (AT) and safety.
Finally, we show the integration of AMC into a stateless model checker from the literature.

%\subsection{Detecting Bugs on WMM}
%To detect bugs caused by overly relaxed barriers, we rely on stateless model checking (SMC) \cite{UppsalaMC,GenMC,HMC}, which detects bugs by enumerating and exploring executions of the program.
%These executions are compactly represented by abstract \emph{execution graphs}, each of which captures the observable behavior of several concrete executions.
%This mitigates the state explosion problem of other approaches of automatic verification. 
%Unfortunately it also leads to non-terminating explorations for programs which include await.
\subsection{Notations and Definitions}

%, \ie, $\mathbb G^F= \biguplus_{k\in\{0,1,2,\ldots\}}\mathbb G_k$, is infinite

%\vspace{-1em}
\paragraph{\emph{Executions as graphs.}}
An execution graph $G$ is a formal abstraction of executions,
%that have the same side effects in terms of 
where nodes are events such as reads and writes,
and the edges indicate (\po) program order, (\mo) modification order,
and (\rf) read-from relationships, as illustrated in \cref{ua execution graph}. 
A read event $R^m_T(x,v)$ reads the value $v$ from the variable $x$ by the thread $T$ with the mode $m$, a write event $W^m_T(x,v)$ updates $x$ with $v$, and $W_{init}(x,v)$ initializes. The short notations $R_T(x,v)$ and $W_T(x,v)$ represent the relaxed events $R^{\rlx}_T(x,v)$ and $W^{\rlx}_T(x,v)$ respectively.
Note that the \po is identical in \execution{a} and \execution{b} because it is
the order of events in the program text.
In contrast, \mo and \rf edges differ; \eg, in \execution{a}, $W_{T_1}(l,1)$ precedes $W_{T_2}(l,0)$ in \mo, while in \execution{b} it is the other way around.
Furthermore, the \rf edges indicate that $W_{T_1}(l,1)$ is never read in \execution{a}, while it is read by $R_{T_2}(l,1)$ in \execution{b}.

\begin{figure}[h]
	\hspace{-5mm}
	\begin{minipage}[b]{.5\textwidth}
		\hspace{-4mm}
	\begin{tabular}{c}
		\begin{lstlisting}[language=C,style=mainexample]
locked = 0, q = 0;
		\end{lstlisting}
		\\[1em]
		\begin{tabular}{c||c}
			 $T_1: $ \lstinline[language=C,style=mainexample]|lock| &  $T_2: $ \lstinline[language=C,style=mainexample]|unlock| \\
			{\lstset{showlines=true}
			\begin{lstlisting}[language=C,style=mainexample]
locked = 1;
q = 1;
while (locked == 1);
/* Critical Section */
			\end{lstlisting}}
			&
			{\lstset{showlines=true}
			\begin{lstlisting}[language=C,style=mainexample]
while (q == 0);
locked = 0;
assert (locked == 0);

		\end{lstlisting}}
		\end{tabular}
	\end{tabular}

	\caption{Awaits in one path of a partial MCS lock. $T_1$ signals \lstinline[language=C,style=mainexample]|q = 1| to notify $T_2$ that it enqueued, and $T_2$ waits for the notification, then signals \lstinline[language=C,style=mainexample]|locked = 0| to pass the lock to $T_1$.}\label{unbounded wait}
	%\label{fig-await}
\end{minipage}\hspace{8pt}
\begin{minipage}[b]{.5\textwidth}
	\centering
	\begin{tabular}{cc}
		\begin{tikzpicture}
		\node (fignum) at (-1.65,0.1) {\tiny a};
		\draw[black] (fignum) circle [radius=3.5pt];
		%		\draw[gray,step=0.25] (-1.75,-4.3) grid (2.08, 0.25);
		\clip (-1.75,-4.3) rectangle (2.08, 0.25);
		% Stores
		\node[event] (INITx) at (-0.76,0) {$W_{init}(l,0)$};
		\node[event] (INITy) at (0.76,0) {$W_{init}(q,0)$};
		\node[event] (updx) at (-1,-0.9) {$W_{T_1}(l,1)$};
		\node[event] (updy) at (-1,-1.7) {$W^{\rel}_{T_1}(q,1)$};
		\node[event] (updx2) at (1,-3.3) {$W_{T_2}(l,0)$};
		
		\draw[->,draw=orange] (INITx) -- node[midway,left,orange] {\small mo} (updx);
		\draw[->,draw=orange] (INITy) -- node[right,orange,pos=0.6] {\small mo} (updy);
		\draw[->,draw=orange,inner sep=0] (updx) to[in = 170, out = 210] node[pos=0.7,right,orange] {\small mo} 
		(updx2);
		
		\draw[->,draw=blue] (updx) -- (updy) node[midway,left,blue] {\small po};
		
		% Reads
		\node[event] (y1) at (1,-0.9) {$R^{\acq}_{T_2}(q,0)$};
		\node[event] (y2) at (1,-1.7) {$R^{\acq}_{T_2}(q,0)$};
		\node[event] (y3) at (1,-2.5) {$R^{\acq}_{T_2}(q,1)$};
		
		\node[event] (x) at (1,-4.1) {$R_{T_2}(l,0)$};

		\node[event] (unlock) at (-1,-3.3) {$R_{T_1}(l,0)$};
		
		\draw[->,draw=blue] (y1) -- (y2) node[midway,left,blue] {\small po};
		\draw[->,draw=blue] (y2) -- (y3) node[midway,left,blue] {\small po};
		\draw[->,draw=blue] (y3) -- (updx2) node[midway,left,blue] {\small po};
		\draw[->,draw=blue] (updx2) -- (x) node[midway,left,blue] {\small po};
		
		\draw[->,draw=blue] (updy) -- (unlock) node[midway,left,blue] {\small po};
		
		\draw[->,draw=teal] (updx2) to[in = 30, out = 330] node[midway,left ,teal] {\small rf} (x);
		\draw[->,draw=teal] (INITy) to[in = 30, out = 340] node[midway,left ,teal] {\small rf} (y1);
		\draw[->,draw=teal] (INITy) to[in = 30, out = 340] node[midway,right,teal] {\small rf} (y2);
		\draw[->,draw=teal] (updy) --  node[midway,above,teal] {\small rf} (y3);
		\draw[->,draw=teal] (updx2) --  node[midway,above,teal] {\small rf} (unlock);
		
		\end{tikzpicture}
		&
		\begin{tikzpicture}
		%		\draw[gray,step=0.25] (-1.75,-4.5) grid (2.25, 0.25);
		\node (fignum) at (-1.65,0.1) {\tiny b};
		\draw[black] (fignum) circle [radius=3.5pt];
		\clip (-1.8,-4.3) rectangle (2.08, 0.25);
		% Stores
		\node[event] (INITx) at (-0.76,0) {$W_{init}(l,0)$};
		\node[event] (INITy) at (0.76,0) {$W_{init}(q,0)$};
		\node[event] (updx) at (-1,-0.9) {$W_{T_1}(l,1)$};
		\node[event] (updy) at (-1,-1.7) {$W^{\rel}_{T_1}(q,1)$};
		\node[event] (updx2) at (1,-3.3) {$W_{T_2}(l,0)$};
		
		\draw[->,draw=orange] (INITx) to[out = 310, in=162] node[near end,right,orange] {\small mo} (updx2);
		\draw[->,draw=orange] (INITy) -- node[midway,right,orange,pos=0.6] {\small mo} (updy);
		%one of the mo edges is below
		
		\draw[draw=yellow,opacity=0.5,line width=3mm] (updx)  -- (updy);
		\draw[->,draw=blue,very thick] (updx) -- (updy) node[midway,left,blue] {\small po};
		
		% Reads
		\node[event] (y1) at (1,-0.9) {$R^{\acq}_{T_2}(q,0)$};
		\node[event] (y2) at (1,-1.7) {$R^{\acq}_{T_2}(q,0)$};
		\node[event] (y3) at (1,-2.5) {$R^{\acq}_{T_2}(q,1)$};
		
		\node[event] (x) at (1,-4.1) {$R_{T_2}(l,1)$ \color{red}{\xmark}};
		
		\node[event] (unlockf) at (-1,-3.3) {$R_{T_1}(l,1)$};
		
		\node[event] (forever) at (-1,-4) {$\vdots$};		
		
		\draw[->,draw=blue] (y1) -- (y2) node[midway,left,blue] {\small po};
		\draw[->,draw=blue] (y2) -- (y3) node[midway,left,blue] {\small po};
		\draw[draw=yellow,opacity=0.5,line width=3mm] (y3)  -- (updx2);
		\draw[->,draw=blue,very thick] (y3) -- (updx2) node[midway,left,blue] {\small po};
		\draw[->,draw=blue] (updx2) -- (x) node[midway,left,blue] {\small po};
		
		\draw[->,draw=blue] (updy) -- (unlockf) node[midway,left,blue] {\small po};
		\draw[->,draw=blue] (unlockf) -- (forever.north) node[midway,left,blue] {\small po};
		
		\draw[->,draw=teal] (updx) to[in = 160, out = 211] node[midway,left ,teal] {\small rf} (x);
		\draw[->,draw=teal] (INITy) to[in = 30, out = 340] node[midway,left ,teal]{\small rf} (y1);
		\draw[->,draw=teal] (INITy) to[in = 30, out = 340] node[midway,right,teal] {\small rf} (y2);
		\draw[draw=yellow,opacity=0.5,line width=3mm] (updy) -- (y3);
		\draw[->,draw=teal,very thick] (updy) --  node[midway,above,teal] {\small rf} (y3);
		\draw[->,draw=teal] (updx) to[out=214, in=135]  %node[midway,above,teal] {\small rf} 
									(unlockf);
		
		\draw[draw=yellow,opacity=0.5,line width=3mm] (updx2) to[in = 214, out = 170, looseness=1.2] (updx);
		\draw[->,draw=orange,very thick] (updx2) to[in = 214, out = 170, looseness=1.2] node[pos = 0.3,right,orange] {\small mo} (updx);
		
		\end{tikzpicture}
	\end{tabular}
	\caption{Two execution graphs of \cref{unbounded wait} where $l =$ \lstinline[language=C,style=mainexample]|locked|.} \label{ua execution graph}
\end{minipage}
\end{figure}

\paragraph{\emph{Consistency predicates.}}
A weak memory model $M$ is defined by a consistency predicate $\mathrm{cons}_M$ over graphs, where $\mathrm{cons}_M(G)$ holds iff $G$ is consistent with $M$.
For instance, the `IMM' model used by \sys forbids the cyclic path\footnote{A path is cyclic if it starts and ends with the same node} of {\setlength{\fboxsep}{0pt}\colorbox{yellow!50}{highlighted}} edges in \execution{b} of \cref{ua execution graph} due to the \Release and \Acquire modes, as it forbids all cyclic paths consisting of edges in this order: 1) \po ending in $W^\rel$, 2) \rf ending in $R^\acq$, 3) \po, 4) \mo.
Such a path is compactly written as ``\(\po ; [W^{\rel}] ; \rf ; [R^{\acq}] ; \po ; \mo\)'', and is never cyclic in graphs consistent with IMM.
Thus $\mathrm{cons}_\mathrm{IMM}(\execution{b})$ does not hold.
If say the \rel\ barriers on the accesses to \lstinline[language=C,style=mainexample]|q| would be removed, the graph would be consistent with IMM.

%In common parlance, the write $W_{T_1}^{rel}(y,1)$ and read $R_{T_2}^{acq}(y,1)$ \emph{synchronize} due to their \Release and \Acquire barriers, which forces the update to \lstinline[language=C,style=mainexample]|x| by $T_1$ to \emph{happen before} the update by $T_2$. 

%but forbidden if the stores have at least Release and the reads at least Acquire modes.
%The code in \cref{unbounded wait} seems academic but awaits are often used in synchronization primitives to safely pass data.
%This only works if executions like \execution{b} are avoided by the correct choice of barriers.

\begin{figure}
	\centering
	\begin{minipage}{.4\textwidth}
% To avoid mentioning the need of __read_neq,
% using atomic_await_neq(&l, 1) instead of while(atomic_read(&l) != 0);
        \begin{lstlisting}[language=C]
/* lock acquire */
do {
  atomic_await_neq(&lock, 1);
} while(atomic_xchg(&lock, 1) != 0);

x++; /* CS */

/* lock release */
atomic_write(&lock, 0);
\end{lstlisting}
\caption{TTAS lock example.} \label{fig:ttas-lock}
\end{minipage}
\end{figure}

\paragraph{\emph{Awaits}.}
%We define an await as a loop which when executed in isolation either terminates immediately or never.
Intuitively, an await is a special type of loop which waits for a write of another thread.
%This intuitive definition is too ambiguous to be useful, so we will develop a less intuitive but more rigorous definition.
To make this intuition more precise, imagine a demonic scheduler that prioritizes threads currently inside awaits. %, preventing all other steps of the system.
Under such a scheduler, an await has two possible outcomes: either the write of the other thread is currently visible, and the await terminates immediately; or the write of the other thread is not visible.
In the latter case, the scheduler continuously prevents the write from becoming visible by never scheduling the writer thread, and hence the await never terminates.
%We define awaits to be loops with such behavior, even if they do not intuitively wait for a write of another thread: an await is a loop that, for every possible value of the polled variables, either exits immediately or loops forever when executed in isolation.
%We define {\em await} to be loops with such behavior, even if they do not intuitively wait for a write of another thread:
A more precise definition for {\em await} is a loop that, for every possible value of the polled variables\footnote{%
Polled variables refers the variables read in each loop iteration to evaluate the loop's condition.},
either exits immediately or loops forever when executed in isolation.
%If this write is already available, the loop will be left immediately. If it is not available yet, executing only the waiting loop will not make the write (of another thread) become available, and hence the loop never terminates.
We illustrate this with the two loops of the TTAS lock from  \cref{fig:ttas-lock}.
The inner loop is an await; to show this, we need to consider every potential value $v$ of \lstinline[language=C,style=mainexample]|lock|: 
for $v = 0$, the loop repeats forever, and for $v \not= 0$ the loop exits during the first iteration.
The outer loop is not an await; to show this, we need to find one value $v$ of \lstinline[language=C,style=mainexample]|lock| for which the loop is not executed infinitely often but also not exited immediately. One such value is $v=1$, for which the thread never reaches the outer loop again after entering the inner loop.
%
%The inner loop is an await, because no matter the old value of \lstinline[language=C,style=mainexample]|lock|, the loop either exits immediately 
%(\lstinline[language=C,style=mainexample]|lock == 0|) or
% executes infinitely often
% (\lstinline[language=C,style=mainexample]|lock != 0|).
%In contrast, the inner loop is not an await, because for $\texttt{d} = 2$ the loop neither exits immediately nor executes infinitely often.
%%because the loop is executed more than once but still only finitely often.
%A perhaps less intuitive example is the TTAS lock in \cref{fig:vsyncer-input}, Line~\ref{ln:ttas-xchg}:
%the outer loop is {\em not} an await because as long as \lstinline[language=C,style=mainexample]|lock == 1|, a thread executing in isolation the inner loop never reaches the outer loop.

\paragraph{\emph{Sets of execution graphs.}}\label{infinity mc}
Given a fair scheduler, awaits either exit after a number of failed iterations or (intentionally) loop forever.
We separate execution graphs that satisfy $\mathrm{cons}_M$ into two sets, $\mathbb G^F$ and $\mathbb G^\infty$.
$\mathbb G^F$ is the set of execution graphs where all awaits exit after a finite number of failed iterations.
In \cref{ua execution graph}, for example, \execution{a} is in $\mathbb G^F$ since $\mathrm{cons}_M(\execution{a})$ holds
and the await of $T_2$ exits after two failed iterations. 
Note $\mathbb G^F$ consists exactly of the graphs in which all awaits terminate, but that does not imply that these graphs are finite: there may still be infinitely many steps \emph{outside} of the awaits. We will later state a sufficient condition that excludes such cases.
$\mathbb G^\infty$ is the set of the remaining consistent graphs. In each of these at least one await loops forever. We define:
\begin{definition}[\textbf{Await termination}]\label{def:at} \em  AT holds iff $\mathbb G^\infty$ is $\emptyset$.
\end{definition}
\noindent
Due to the barriers on \lstinline[language=C,style=mainexample]|q|, $\mathrm{cons}_\mathrm{IMM}(\execution{b})$ does not hold, and hence \execution{b} is not in $\mathbb G^\infty$. In fact, with these $\acq$ and $\rel$ barriers, all graphs with an infinite number of failed iterations violate consistency, and hence $\mathbb G^\infty$ is empty; AT is not violated.

Note that we can splice an additional failed iteration into \execution{a} by repeating $R^\acq_{T_2}(q,0)$, resulting in a new graph in $\mathbb G^F$. We generalize this idea. Let $\mathbb{G}_k \subseteq \mathbb{G}^F$ be the set of consistent
execution graphs with a total\footnote{We count here the sum of failed iterations of all executed instances of awaits, including multiple instances by the same thread, \eg, when the inner await loop in the TTAS lock from \cref{fig:ttas-lock} is executed multiple times.} of $k \in \{0,1,\ldots\}$ failed iterations.
Thus, $\mathbb G_0$ is the set of consistent graphs with no failed iterations. With two failed iterations of $T_2$'s await and zero of $T_1$'s await, $\execution a$ has a total of $2+0=2$ failed iterations of awaits, and is thus in $\mathbb{G}_2$.
Let now $G \in \mathbb{G}_k$ with $k>0$, \ie, $G$ has at least one failed iteration; we can always repeat one of its failed iterations to obtain a graph $G' \in \mathbb{G}_{k+1}$ due to the non-deterministic number of iterations of await loops.
Since all $\mathbb G_k$ are disjoint, their union (denoted by $\mathbb G^F = \biguplus_{k \in \Set{0,1,\ldots}} \mathbb G_k$) is infinite despite every set $\mathbb G_k$ being finite.

\subsection{Verifying Await Termination with AMC}
%\subsection{SMC Status Quo}
State-of-the-art stateless model checkers\cite{GenMC,HMC,RCMC} cannot construct all execution graphs in $\mathbb G^F$ or any in $\mathbb G^\infty$.
In order for SMC to complete in finite time, the user has to limit the search space to a finite subset of executions graphs in $\mathbb G^F$.
%Graphs in $\mathbb G^\infty$ cannot be considered at all.
Consequently, SMC cannot verify AT (\cref{def:at}) and can only verify safety within this subset of executions graphs.

%\vspace{-1em}
%\paragraph{Effect of over-constraining.}

%\vspace{-1em}
%\paragraph{Finding the maximal relaxation.}
%Barrier modes of a synchronization primitive are {\em maximally relaxed} if AT and ME still hold, but relaxing any barrier further would violate one of these properties.
%We automatically find such a set of maximally relaxed modes by strategically relaxing barriers, and checking AT and ME after each relaxation.
%Barrier modes that guarantee ME do not guarantee AT, or vice versa, and thus both properties need to be checked independently.
%The goal of this work is to automate the search of this .
%ME can be stated as a safety property, which is easily checked with state-of-the-art model checkers.
%In contrast, AT is a liveness property, which is more challenging.

%\subsection{Maximally relaxed barriers}

%To understand this balance, let us consider the two extreme cases of choosing barrier modes.

%\pagebreak
\newcommand{\innerloop}{\color{purple}{$B$}}
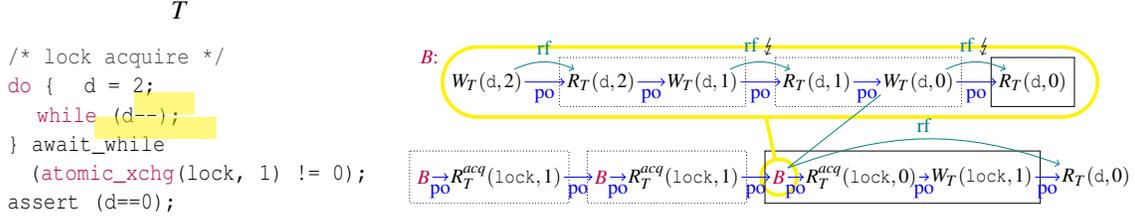
\begin{figure*}
	\centering
	\hfill
	\begin{minipage}{.30\linewidth}
		\centering $T$
		{\lstset{showlines=true,}			
			\begin{lstlisting}[language=C, style=mainexample, escapechar=$]
/* lock acquire */
do { $\bh 1$d = 2;$\eh 1$
  $\!\!\bh 2$while (d--);$\eh 2$
} await_while$\label{whiletas}$
  (atomic_xchg(lock, 1) != 0);
assert (d==0);
\end{lstlisting}}
	\end{minipage}
	\hfill
	\begin{minipage}{.63\linewidth}
		\resizebox{\textwidth}{!}{
		\small 
		\begin{tikzpicture}
		\node[event] (aWd0) at (0,0) {$W_{T}(\texttt{d},2)$};
		\node (IL) at (aWd0) at ($(aWd0)+(-0.9,0.4)$) {\innerloop:};
		\begin{scope}[local bounding box=ingroup1]
		\node[event] (aRd1) at ($(aWd0)+(1.8,0)$) {$R_{T}(\texttt{d},2)$};
		\node[event] (aWd1) at ($(aRd1)+(1.6,0)$) {$W_{T}(\texttt{d},1)$};
		\end{scope}
		\begin{scope}[local bounding box=ingroup2]
		\node[event] (aRd2) at ($(aWd1)+(1.8,0)$) {$R_{T}(\texttt{d},1)$};
		\node[event] (aWd2) at ($(aRd2)+(1.6,0)$) {$W_{T}(\texttt{d},0)$};
		\end{scope}
		\begin{scope}[local bounding box=ingroup3]
		\node[event] (aRd3) at ($(aWd2)+(1.8,0)$) {$R_{T}(\texttt{d},0)$};
		\end{scope}
		\node (rightsep) at ($(aRd3)+(0.8cm,0cm)$) {};
		\node[fit=(aWd0)(rightsep), draw=yellow, ultra thick, name path=testX, rounded rectangle, inner xsep=2pt,inner ysep=10pt] (X) {};
		%	\node[event] (aRx) at ($(aWd2)+(1.8,0)$) {$R_{T_1}(l,1)$};
		\draw[densely dotted] ($(ingroup1.north west) + (-2.5pt, 5.5pt)$) rectangle ($(ingroup1.south east) + (2.5pt, -5.5pt)$);
		\draw[densely dotted] ($(ingroup2.north west) + (-2.5pt, 5.5pt)$) rectangle ($(ingroup2.south east) + (2.5pt, -5.5pt)$);
		\draw[] ($(ingroup3.north west) + (-2.5pt, 5.5pt)$) rectangle ($(ingroup3.south east) + (2.5pt, -5.5pt)$);
		%\draw[loosely dotted] ($(outgroup1.north west) + (-2.5pt, 5.5pt)$) rectangle ($(outgroup1.south east) + (2.5pt, -5.5pt)$);
		\draw[->,draw=blue] (aWd0) -- node[midway,below,blue] {\small po} (aRd1);
		\draw[->,draw=blue] (aRd1) -- node[midway,below,blue] {\small po} (aWd1);
		\draw[->,draw=blue] (aWd1) -- node[midway,below,blue] {\small po} (aRd2);
		\draw[->,draw=blue] (aRd2) -- node[midway,below,blue] {\small po} (aWd2);
		\draw[->,draw=blue] (aWd2) -- node[midway,below,blue] {\small po} (aRd3);
		
		\draw[->,draw=teal] (aWd0) to[in = 155, out = 25] node[midway,above,teal] {\small rf} (aRd1);
		\draw[->,draw=teal] (aWd1) to[in = 155, out = 25] node[midway,above] {\small \rf $\lightning$} (aRd2);
		\draw[->,draw=teal] (aWd2) to[in = 155, out = 25] node[midway,above] {\small \rf $\lightning$} (aRd3);
		\coordinate (AWD0) at (current bounding box.center);
		\coordinate (FinalD) at (aWd2.south west);
		
		%[scale=0.8, every node/.style={scale=0.8}]
		
		\begin{scope}[local bounding box=outgroup1]
		\node[event] (aB) at ($(aWd0)+(-1,-1.5)$) {\innerloop};
		\node[event] (aRx) at ($(aB)+(1.3,0)$) {$R^\acq_{T}(\texttt{lock} ,1)$};
		\end{scope}
		\draw[densely dotted] ($(outgroup1.north west) + (-2.5pt, 5.5pt)$) rectangle ($(outgroup1.south east) + (2.5pt, -5.5pt)$);
		\draw[->,draw=blue] (aB) -- node[midway,below,blue] {\small po} (aRx);

		\begin{scope}[local bounding box=outgroup2]
		\node[event] (bB) at ($(aRx)+(1.5,0)$) {\innerloop};
		\node[event] (bRx) at ($(bB)+(1.3,0)$) {$R^\acq_{T}(\texttt{lock},1)$};
		\end{scope}
		\draw[densely dotted] ($(outgroup2.north west) + (-2.5pt, 5.5pt)$) rectangle ($(outgroup2.south east) + (2.5pt, -5.5pt)$);
		\draw[->,draw=blue] (aRx) -- node[midway,below,blue] {\small po} (bB);
		\draw[->,draw=blue] (bB) -- node[midway,below,blue] {\small po} (bRx);
		
		\begin{scope}[local bounding box=outgroup3]
		\node[event] (cB) at ($(bRx)+(1.5,0)$) {\innerloop};
		\node[event] (cRx) at ($(cB)+(1.3,0)$) {$R^\acq_{T}(\texttt{lock},0)$};
		\node[event] (cWx) at ($(cRx)+(1.9,0)$) {$W_{T}(\texttt{lock},1)$};
		\end{scope}
		\coordinate (AB) at (cB);
		\draw[yellow, ultra thick, name path=testY] (AB) circle[radius=8pt];
		
		\draw[] ($(outgroup3.north west) + (-2.5pt, 5.5pt)$) rectangle ($(outgroup3.south east) + (2.5pt, -5.5pt)$);
		\draw[->,draw=blue] (bRx) -- node[midway,below,blue] {\small po} (cB);
		\draw[->,draw=blue] (cB) -- node[midway,below,blue] {\small po} (cRx);
		\draw[->,draw=blue] (cRx) -- node[midway,below,blue] {\small po} (cWx);

		\node[event] (Rd) at ($(cWx)+(1.8,0)$) {$R_{T}(\texttt{d},0)$};
		\coordinate (FINALB) at (cB.north east);
		\draw[->,draw=blue] (cWx) -- node[midway,below,blue] {\small po} (Rd);
		\draw[->,draw=teal] (cB.north east) to[out=20, in=160] node[midway,above,teal] {\small rf} (Rd.north west);

		\path[name path=connectfigs] (AB) -- (AWD0);
		\node[name intersections={of=testX and connectfigs, by={INT1}}] (int1) at ($(INT1)$) {};
		\draw[ultra thick,yellow,name intersections={of=testY and connectfigs, by={INT2}}] (int1.center) -- (INT2);
		\draw[draw=teal] (FinalD) -- (FINALB);		
		\end{tikzpicture}
		\begin{tikzpicture}[overlay,remember picture]
		\draw[yellow,line width=10pt,opacity=0.5] (begin highlight 1) -- (end highlight 1);
		\draw[yellow,line width=10pt,opacity=0.5] (begin highlight 2) -- (end highlight 2);
		\end{tikzpicture}
		}
	\end{minipage}
	\caption{Execution graph {\innerloop} represents the inner loop of $T$ (marked \colorbox{yellow!50}{yellow}). Failed await iterations are indicated with dotted boxes, the final (non-failed) iteration with a solid box. The inner loop of $T$ violates the Bounded-Effect principle, as \rf-edges (marked with $\lightning$) leave failed await iterations. The outer loop obeys the principle, as \rf-edges only leave the final await iteration.} \label{Bounded-Effect principle violated}
\end{figure*}

\paragraph{\em Key challenges.} For SMC to become feasible in our problem domain, we need to solve three key challenges:
\begin{description}[topsep=2pt, parsep=4pt, itemsep=0pt]
	\item[Infinity:] We need to produce an answer in finite time without a user-specified search space, even though the search space $\mathbb G^F \cup \mathbb G^\infty$ is infinite.
		%, we need to automatically select a finite subset $\mathbb G_\ast$ of finite execution graphs to explore.
	\item[Soundness:] We need to make sure not to miss any execution graph that may potentially uncover a safety bug.
	\item[Await termination:] We need to verify that $\mathbb G^\infty$ is $\emptyset$.
\end{description}
Under certain conditions specified later, AMC overcomes these
challenges through three crucial implications:
\begin{enumerate}[topsep=2pt, parsep=4pt, itemsep=0pt]
	\item The infinite set $\mathbb G^F$ is collapsed into a finite set of finite executions graphs $\finitegraphs \subset \mathbb G^F$.
	Moreover, the infinite execution graphs in $\mathbb G^\infty$ are collapsed into finite execution graphs in a (possibly infinite) set \infinitegraphs{}.
	AMC explores at most all graphs in \finitegraphs{} and up to one graph in \infinitegraphs{}.
\item For all $G\in \mathbb G^F$, there exists $G'\in \finitegraphs{}$ such that $G$ and $G'$ are equivalent. Thus, bugs present in $\mathbb G^F$ are also in \finitegraphs{}.
	\item  Detecting whether there exists a finite execution graph in \infinitegraphs{} is sufficient to conclude whether $\mathbb G^\infty$ is empty.
	%Furthermore, \infinitegraphs{} is empty exactly if $\mathbb G^\infty$ is empty. 
	Thus AMC can stop after exploring one graph in the set \infinitegraphs{} and report an AT violation, and if AMC does not come across such a graph, AT is not violated.
\end{enumerate}
%In the remainder of this section
We now explain how AMC achieves these three implications, as well as the conditions under which it does so.

\paragraph{\em The key to AMC.}
\begin{figure}[b!]
	\centering
	\begin{tabular}{cc}
		\multicolumn{1}{c|}{
		\begin{tikzpicture}[baseline=-60pt,auto]
		\node (fignum) at (-0.6,0.1) {\tiny \textalpha};
		\draw[black] (fignum) circle [radius=3.5pt];
		% Stores
		\node[event] (INIT) at (0.9,0) {$W_\mathit{init}(l,0)$};
		\node[event] (x1) at (0,-0.65) {$W_{T_1}(l,1)$};
		\node[event] (x2) at (1.8,-0.65%-1.3
		) {$W_{T_2}(l,0)$};
		
		\draw[->,draw=orange] (INIT) -- (x1) node[midway,left,orange] {\small mo};
		\draw[->,draw=orange] (x1) -- (x2)  node[midway,above,orange] {\small mo};
		
		% Reads
		\node[event] (ra) at (0,-1.6) {$R_{T_1}(l,1)$};
		\draw[->,draw=teal] (x1) to[out=225,in=135] node[left,teal] {\small rf}  (ra);
		\draw[->,draw=blue] (x1) -- (ra)  node[midway,right,blue] {\small po};
		
		\node[event] (rb) at (0,-2.35) {$R_{T_1}(l,1)$};
		\draw[->,draw=blue] (ra) -- (rb)  node[midway,right,blue] {\small po};
		\draw[->,draw=teal] (x1) to[out=225,in=135] (rb);

		\node[event] (rc) at (0,-3.1) {$R_{T_1}(l,0)$};
		\draw[->,draw=blue] (rb) -- (rc)  node[midway,right,blue] {\small po};
		\draw[->,draw=teal] (x2) to[out=255,in=25]  node[midway,above,teal] {\small rf}  (rc);
		
		\end{tikzpicture}} & 
		\begin{tikzpicture}[baseline=-60pt]
		\node (fignum) at (-0.6,0.1) {\tiny \textbeta};
		\draw[black] (fignum) circle [radius=3.5pt];
		% Stores
		\node[event] (INIT) at (0.9,0) {$W_\mathit{init}(l,0)$};
		\node[event] (x1) at (0,-0.65) {$W_{T_1}(l,1)$};
		\node[event] (x2) at (1.8,-0.65) {$W_{T_2}(l,0)$};
		
		\draw[->,draw=orange] (INIT) -- (x2) node[midway,right,orange] {\small mo};
		\draw[->,draw=orange] (x2) -- (x1)  node[midway,above,orange] {\small mo};
		
		% Reads
		\node[event] (ra) at (0,-1.6) {$R_{T_1}(l,1)$};
		\draw[->,draw=teal] (x1) to[out=225,in=135] node[left,teal] {\small rf}  (ra);
		\draw[->,draw=blue] (x1) -- (ra)  node[midway,right,blue] {\small po};
		
		\node[event] (rb) at (0,-2.35) {$R_{T_1}(l,\lightning)$};
		\draw[->,draw=blue] (ra) -- (rb)  node[midway,right,blue] {\small po};
		
		\end{tikzpicture} \\
		\hline
		\begin{tikzpicture}
		\node (fignum) at (-0.6,0.1) {\tiny 1};
		\draw[black] (fignum) circle [radius=3.5pt];
		% Stores
		\node[event] (INIT) at (0.9,0) {$W_\mathit{init}(l,0)$};
		\node[event] (x1) at (0,-0.65) {$W_{T_1}(l,1)$};
		\node[event] (x2) at (1.8,-0.65) {$W_{T_2}(l,0)$};
		
		\draw[->,draw=orange] (INIT) -- (x1) node[midway,left,orange] {\small mo};
		\draw[->,draw=orange] (x1) -- (x2)  node[midway,above,orange] {\small mo};
		
		% Reads
		\node[event] (ra) at (0,-1.6) {$R_{T_1}(l,1)$};
		\draw[->,draw=teal] (x1) to[out=225,in=135] node[left,teal] {\small rf}  (ra);
		\draw[->,draw=blue] (x1) -- (ra)  node[midway,right,blue] {\small po};
		
		\node[event] (rb) at (0,-2.35) {$R_{T_1}(l,0)$};
		\draw[->,draw=blue] (ra) -- (rb)  node[midway,right,blue] {\small po};
		\draw[->,draw=teal] (x2) to[out=235,in=25]  node[midway,above,teal] {\small rf}  (rb);
		
		\end{tikzpicture} &
		\begin{tikzpicture}
		\node (fignum) at (-0.6,0.1) {\tiny 2};
		\draw[black] (fignum) circle [radius=3.5pt];
		% Stores
		\node[event] (INIT) at (0.9,0) {$W_\mathit{init}(l,0)$};
		\node[event] (x1) at (0,-0.65) {$W_{T_1}(l,1)$};
		\node[event] (x2) at (1.8,-0.65) {$W_{T_2}(l,0)$};
		
		\draw[->,draw=orange] (INIT) -- (x1) node[midway,left,orange] {\small mo};
		\draw[->,draw=orange] (x1) -- (x2)  node[midway,above,orange] {\small mo};
		
		% Reads
		\node[event] (ra) at (0,-1.6) {$R_{T_1}(l,0)$};
		\draw[->,draw=teal] (x2) to[out=245,in=15]  node[midway,above,teal] {\small rf}  (ra);
		\draw[->,draw=blue] (x1) -- (ra)  node[midway,right,blue] {\small po};
		
				\node[event, draw=none] (rb) at (0,-2.35) {\phantom{$R_{T_1}(l,0)$}};
		\end{tikzpicture}
	\end{tabular}
	\caption{Execution graphs of \cref{unbounded wait} where $l =$ \lstinline[language=C,style=mainexample]|locked|.} \label{await AP}
\end{figure}

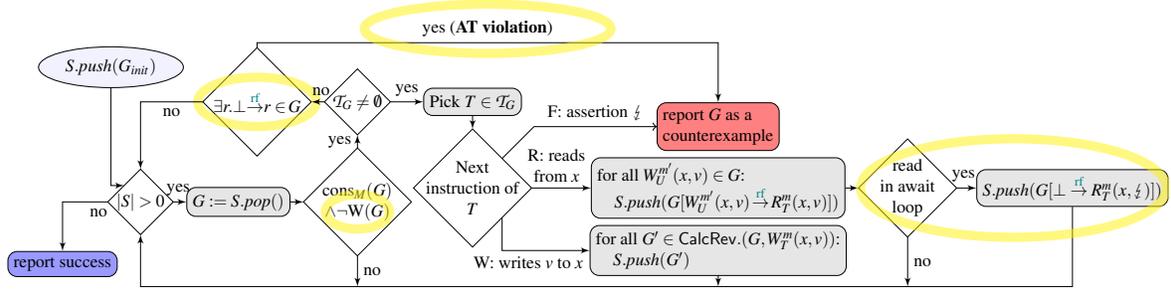
\begin{figure*}[t]
	\centering
	\resizebox{\textwidth}{!}{
	\begin{tikzpicture}[scale = 0.75, every node/.style={scale=0.75}, auto]
	\node [cloud] (start) {$S.\mathit{push}(G_{init})$};
	\node [decision, below right =1.8cm and 0.2cm of start.center] (done) {$\left| S \right| > 0$};
	\node [block, right = 0.2cm of done] (pop) {$G := S.\mathit{pop}()$};
	\node [decision, right= 0.2cm of pop,align=center] (cons) {$\mathrm{cons}_M(G)$\\$\land \neg\mathrm{W}(G)$};
	\draw [draw=highlightnewstuffcolor,line width=4pt,opacity=0.5]($(cons)+(0,-0.2)$) ellipse (0.65cm and 0.35cm);

	\node [decision, above= 0.2cm of cons] (term) {$\mathcal T_G \not= \emptyset$};
	\node [decision, left= 0.2cm of term,align=center] (AT) {$\exists r. \bot{\overset\rf\rightarrow}r\in\!G$};
	\draw [draw=highlightnewstuffcolor,line width=4pt,opacity=0.5](AT) ellipse (1.2cm and 0.5cm);
	\node [block, right = 0.5cm of term] (pick) {Pick $T \in \mathcal T_G$};
	\node [decision, below = 0.2cm of pick,inner sep =-4pt,align=center] (switch) {Next\\instruction of\\$T$};
	
	\path [line] (cons) -- node [midway] {yes} (term);
	
	\node [block,right=1.8 cm of switch.center,align=left] (read) {for all $W^{m'}_U(x,v) \in G$:\\
		\hspace{1em}$S.\mathit{push}(G[W^{m'}_U(x,v) \!\overset{\rf}\rightarrow\! R^{m}_T(x,v)])$}; 
	\path [line] (switch) -- node [above, pos=0.42,align=center] {R: reads\\from $x$} (read);

	\node [block,below=0.1 cm of read,align=left] (write) {for all $G' \in \mathsf{CalcRev.}(G,W^m_T(x,v))$:
		\\\hspace{1em}$S.\mathit{push}(G')$}; 
	\path [line] (switch.south east) |- node [below, pos=0.65] {W: writes $v$ to $x$} (write);

	\node [decision, right = 0.2cm of read,align=center, inner sep=-2pt] (inUA) {read\\in await\\loop};
	\node [block, right = 0.3 cm of inUA, align=left] (UAread) {$S.\mathit{push}(G[\bot \overset{\rf}\rightarrow R^{m}_T(x,\lightning)])$};
	
	\node [block,above = 0.1 cm of read,align=left,fill=red!50,text=black] (error) {report $G$ as a\\counterexample}; 
	
	\path [name path=angledpath%,draw=black
	] (switch.north east) -- ++ (45:2cm);
	\path [name path=horizontalpath%,draw=black
	] (error) -- +(-4cm,0);
	\draw [->,name intersections={of=angledpath and horizontalpath, by=X}] (switch.north east) -- (X) -- node {F: assertion $\lightning$} (error);
	
	\path [line] (AT.north) |- +(0,0.1cm) -| node [near start] (ATyes) {yes (\textbf{AT violation})} (error);
	\draw [draw=highlightnewstuffcolor,line width=4pt,opacity=0.5] (ATyes) ellipse (2.5cm and 0.5cm);
	
	\path [line] (term) -- node [above, near start] {no} (AT);
	\path [line] (AT) -| node [near start] {no} (done);
	\path [line] (term) -- node {yes} (pick);
	\path [line] (pick) -- (switch);
	\path [line] (pop) -- (cons);
	\path [line] (done) -- node [pos=0.38] {yes} (pop);
	
	\path [line] (read) -- (inUA);
	\path [line] (inUA) -- node [pos=0.35] {yes} (UAread);
	
	\path [line, name path=returnarrow] (UAread) -- +(0,-2cm) -| (done);
	\node [block,below left= 0.5cm and 0.1cm of done,fill=blue!40,text=black] (end) {report success};
	\clip (current bounding box.south west) rectangle (current bounding box.north east);

	\path [name path=downinUA] (inUA) -- +(0,-2cm);
	\path [name path=downwrite] (write) -- +(0,-2cm);
	\path [name path=downcons] (cons) -- +(0,-2cm);
	
	\path [line, name intersections={of=downinUA and returnarrow, by=XinUA}] (inUA) -- node{no} (XinUA);
	\path [line, name intersections={of=downwrite and returnarrow, by=Xwrite}] (write) -- (Xwrite);
	\path [line, name intersections={of=downcons and returnarrow, by=Xcons}] (cons) -- node{no} (Xcons);
	\draw [draw=highlightnewstuffcolor,line width=4pt,opacity=0.5] ($(inUA.west)!0.5!(UAread.east)$) ellipse (3.1cm and 1cm);

	\path [line] (done.west) -| node [below,pos=0.1] {no} (end);
	\path [line] (start) |- (done.north west);
	\end{tikzpicture}
}
	\caption{AMC exploration}\label{exploration algorithm}
\end{figure*}
In contrast to existing SMCs, AMC filters out execution graphs that contain
awaits where multiple iterations read from the same writes to the polled variables.
This idea is captured by the predicate $\mathrm{W}(G)$ which defines wasteful executions.
\begin{definition} [\textbf{Wasteful}]\em An execution graph $G$ is wasteful, \ie, $\mathrm{W}(G)$ holds, if an await in $G$ reads the same combination of writes in two consecutive iterations.
\end{definition}

AMC does not generate wasteful executions, as they do not add any additional information.
For instance, AMC does not generate execution graph $\execution{\textalpha} \in \mathbb G_2$ from \cref{await AP} because $\mathrm{W}(\execution\textalpha)$ holds:
$T_1$  reads from its own write twice in the await.
Similarly, AMC does not generate any of the infinitely many variations of \execution{\textalpha} in which $T_1$ reads even more often from that write.
Instead, if we remove all references to \lstinline[language=C,style=mainexample]|q| in the program of \cref{unbounded wait}, AMC generates only the two execution graphs in $\mathbb G_\ast^F = \Set{\execution{1},
\execution{2} }$, in which each write is read at most once by the await of $T_1$.
%This is almost the full set \finitegraphs{} of finite execution graphs AMC considers for this program; the missing ones are obtained by dropping the first read of $T_1$.
$T_1$ can read from at most two different writes, thus there is at most one failed iteration of the await in the execution graphs in \finitegraphs{}, and we have $\finitegraphs{} \subseteq  \biguplus_{k\in\Set{0,1}}\mathbb G_k$; in general, if there are at most $n \in \mathbb{N}$ writes each await can read from, there are at most
$n-1$ failed iterations of any await in graphs in 
$\mathbb{G}^F_\ast$.
If there are at most $a \in \mathbb N$ executed instances of awaits, we have
$\finitegraphs \subseteq  \biguplus_{k\in\Set{0,\ldots,a \cdot (n-1)}}\mathbb G_k$ which is a union of a finite number of finite sets and thus finite. We will later define sufficient conditions to ensure this.

We proceed to discuss how AMC discovers AT violations.
Consider execution graph \execution{\textbeta}. %in which $W_{T_1}(l,1)$ succeeds $W_{T_2}(l,0)$.
%In this \mo,
In the first iteration of the await, $T_1$ reads from $W_{T_1}(l,1)$.
In the next iteration, $T_1$'s read has no incoming \rf-edge; 
coherence forbids $T_1$ from reading an older write and the
await progress condition forbids it from reading the same write.
Since there is no further write to the same location, AMC detects an AT violation and uses the finite graph \execution{\textbeta} as the evidence.
%AMC forbids this (), but also follows the coherence requirements.
In general, if the \mo of every polled variable is finite, then AT violations from $\mathbb G^\infty$ are represented by graphs in $\infinitegraphs$ where some read has no incoming \rf-edge. AMC exploits this fact to detect AT violations. 

\paragraph{\em Conditions of AMC.} \label{amc conditions}
%In general, SMC cannot verify termination, and state-of-the-art SMC will terminate and produce correct results only on terminating and loop-free programs\footnote{Exploration depths are often used to transform programs with loops into loop-free programs, potentially changing the behavior of the program.}.
State-of-the-art SMC only terminates and produces correct results for terminating and loop-free programs\footnote{Exploration depths are often used to transform programs with loops into loop-free programs, potentially changing the behavior of the program.}.
AMC extends the domain of SMC to a fragment of looping and/or non-terminating programs,
on which AMC not only produces correct results but can also decide termination.
With the generic client code provided by \sys, all synchronization primitives we have studied are in this fragment, showing it is practically useful.
The fragment includes
all programs satisfying the following two principles:
%On this fragment AMC not only produces correct results but can also decide termination.
\begin{description}[topsep=2pt, parsep=4pt, itemsep=0pt]
	\item[Bounded-Length Principle:]
		There is a bound $b$ (chosen globally for the program) so that all executions in $\mathbb{G}_1$ of the program have length $\le b$.
\item[Bounded-Effect Principle:] Failed await iterations satisfy the \emph{bounded-effect principle}, that the effect of the loop iteration is limited to that loop iteration. 
\end{description}

Informally, \textbf{the Bounded-Length principle} means that the number of execution steps outside of awaits is bounded, and each individual iteration of a failed await is also bounded.
Obviously, infinite loops in the client code are disallowed by the Bounded-Length principle.

\textbf{The Bounded-Effect principle} means that no side-effects from failed await iterations must be referenced by either subsequent loop iterations, other threads, or outside the loop.
The principle can be defined more precisely in terms of execution graphs: \rf-edges starting with writes\footnote{Only writes that change the value of the variable matter here.} generated by a failed await iteration must go to read events that are generated in the same iteration.
\Cref{Bounded-Effect principle violated} illustrates the principle: \rf-edges from decrements in the failed iterations of the loop body {\innerloop} go to subsequent iterations of the loop, but for the outer loop only the final iteration has outgoing \rf-edges.
The Bounded-Effect principle allows removing any failed iteration from a graph without affecting the rest of the graph since the effects of the failed iteration are never referenced outside the iteration.
This implies that any bugs in graphs from $\mathbb{G}^F$ are also present in graphs in $\mathbb G_1$.
Furthermore, if the Bounded-Effect principle and the Bounded-Length principle hold,
then graphs in $\mathbb G_k$ are bounded for every $k$.
The bound for $\mathbb G_k$ can be computed as \mbox{$b + (k-1) \cdot x$} where $b$ is the bound for $\mathbb G_1$ and $x$ is the maximum number of steps in a failed iteration of an await in $\mathbb G_1$. 

The two principles jointly imply that {\mo}s and the number of awaits are bounded, thus, as discussed before, $\finitegraphs$ is a finite set and $\infinitegraphs$ contains only finite graphs, and AMC always terminates.
In synchronization primitives, awaits either just poll a variable (without side effects) or perform some operation which only changes global state if it succeeds, \eg,
%\begin{minipage}{\linewidth}
%\centering
\lstinline[language=C,style=mainexample]{await_while(q==0);} \quad or \quad \lstinline[language=C,style=mainexample]{await_while(!trylock(&L));}
%\end{minipage}
These awaits satisfy the Bounded-Effect principle. The former does not have any side effects. The latter encapsulates its local side effects inside \lstinline[language=C,style=mainexample]|trylock(&L)|, which can therefore not leave failed iteration of the loop. A global side effect (\ie, acquiring the lock) only occurs in the last iteration of the await (cf. \cref{Bounded-Effect principle violated}).
When called in our generic client code, synchronization primitives also satisfy the Bounded-Length principle: the client code invokes the functions of the primitives only a bounded number of times, and each function of the primitives is also bounded.

%Note that all state-of-the-art model checkers only fully work on programs where threads always terminate and there are no awaits; such programs trivially satisfy our conditions above. Thus AMC extends the domain of SMC to programs with (potentially non-terminating) awaits, as long as the awaits satisfy the Bounded-Effect principle. Note also that loops which can never be iterated infinitely often by a single thread in isolation are not a problem for state-of-the-art SMC (or for AMC). 
%Note also that these conditions are satisfied by all synchronization primivites we have studied.

\paragraph{\em AMC Correctness.}
For programs which satisfy {the Bounded-Length  principle} and {the Bounded-Effect principle}, 1) AMC terminates, 2) AMC detects every possible safety violation, 3) AMC detects every possible non-terminating await, and 4) AMC has no false positives.
See \cref{s:amc} for the formal proof.
%We provide a formal proof of these claims in our technical report\cite{TR}.

\subsection{Implementing AMC} \label{AMCawaitwhile}

We implement AMC on top of GenMC \cite{GenMC,HMC}, a highly
advanced SMC from the literature.
The exploration algorithm in \cref{exploration algorithm} extends GenMC's algorithm  with the {\setlength{\fboxsep}{0pt}\colorbox{yellow!50}{highlighted}} essential changes: 1) detecting AT violations through reads with no incoming \rf-edge; 2) checking if $\mathrm{W}(G)$ holds to filter out graphs $G$ in which an await reads from the same writes in multiple iterations.
AMC builds up execution graphs through a kind of depth-first search, starting with an empty graph $G_\mathit{init}$, which is extended step-by-step with new events and edges.
The search is driven by a stack $S$ of possibly incomplete and/or inconsistent graphs that is initialized to contain only $G_\mathit{init}$.

Each iteration of the exploration pops a graph $G$ from $S$. If the graph $G$
violates the consistency predicate $\mathrm{cons}_M$
or is wasteful (\ie, $\mathrm{W}(G)$ holds), it is discarded and the iteration ends.
Otherwise, a program state is reconstructed by emulating the execution of threads until every thread executed all its events in $G$; \eg, if a thread $U$ executes a read instruction that corresponds to $R_U(x,0)$ in $G$, the emulator looks into $G$ for the corresponding read event and returns the value read by the event, in this case $0$.
We denote the set of runnable threads in the reconstructed program state by $\mathcal T_G$.
Initially, all threads are runnable; a thread is removed from $\mathcal T_G$ once it terminates or if it is stuck in an await.
If the set is not empty, we can explore further and pick some arbitrary thread $T \in \mathcal T_G$ to run next.
We emulate the next instruction of $T$ in the reconstructed program state. %This instruction does not yet have a corresponding event in $G$.
For the sake of brevity, we discuss only three types of instructions:
failed assertions\footnote{The assertion expression \lstinline[language=C,style=mainexample]|assert(x==0)| consists of at least two instructions: the first reads \lstinline[language=C,style=mainexample]|x|, and the second just compares the result to zero and potentially fails the assertion. Note that once the \rf-edge for the event corresponding to the first instruction has been fixed, whether the second instruction is a failed assertion or not is also fixed.}, writes, and reads.
\begin{description}[topsep=2pt, parsep=4pt, itemsep=0pt]
	\item[F:] an assertion failed. We stop the exploration and report $G$. 
\end{description}
If the instruction executes a read or write, a new graph with the corresponding event should be generated. Usually there are several options for the event; for each option, a new graph is generated and pushed on a stack $S$. In particular:
\begin{description}[topsep=2pt, parsep=4pt, itemsep=0pt]
	\item[W:] a write event $w$ is added. In this case, for every existing read $r$ to the same variable, a partial\footnote{For details we refer the reader to the function $\mathsf{CalcRevisits}$ from \cite{GenMC}.} copy of $G$ with an edge $w \overset{\rf[\tiny]}\rightarrow r$ is generated and pushed into $S$.
	\item[R:] a read event $r$ is added; for every write event $w$ in $G$ a copy of $G$ with an additional edge $w \overset{\rf[\tiny]}\rightarrow r$ is generated and pushed into $S$.
\end{description}
Crucially, if the read event $r$ is in an await, an additional copy of $G$ is generated in which $r$ has no incoming \rf-edge (we write this new graph as $G[\bot \overset{\rf[\tiny]}\rightarrow r]$). 
This missing \rf-edge indicates a potential AT violation. It is not an actual AT violation \emph{yet} because a new write $w$ to the same variable might be added by another thread later during exploration.
If such a write is added, it leads to the generation of two types of graphs: graphs in which $r$ is still missing an \rf-edge, and graphs with the edge $w \overset{\rf[\tiny]}\rightarrow r$ where $r$ no longer has a missing \rf-edge (and the potential AT violation got resolved).
Otherwise, if a missing \rf-edge is still present and no other thread can be run, we know that such a write cannot become available anymore; the potential AT violation turns into an actual AT violation.
The algorithm detects the violation after popping a graph $G$ in which no threads can be run ($\mathcal T_G = \emptyset$), but a read without incoming \rf-edge is present ($\bot\overset{\rf[\tiny]}\rightarrow r\in G$), and this missing \rf-edge could not be resolved except through a wasteful execution, \ie, every consistent graph $G'$ obtained by adding the missing \rf-edge and completing the await iteration is wasteful.

\section{Proving the Correctness of Await Model Checking}\label{s:amc}
In this section we formally prove \cref{thm:amc} to show that our AMC is correct. In order to formally prove this theorem, first we define a tiny concurrent assembly-like language in \S\ref{sec:lan}, which follows our Bounded-Length principle and  allows us to map the execution graphs to the execution of instruction sequences. Then we formalize the Bounded-Effect principle in \S\ref{sec:vegas}. Finally we give the formal representation of \cref{thm:amc} and prove it in \S\ref{sec:main}.
Throughout this section we use the notation of \cite{GenMC}.

\begin{theorem}[\textbf{AMC Correctness}]\label{thm:amc}\em
	For programs which satisfy {the Bounded-Length principle} and {the Bounded Effect principle}, 1) AMC terminates, 2) AMC detects every possible safety violation, 3) AMC detects every possible non-terminating await, and 4) AMC has no false positives.
\end{theorem}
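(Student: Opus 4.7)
The plan is to decouple AMC's correctness into four coordinated claims and to route each through one single technical device: a \emph{collapse map} $\pi$ that takes any consistent execution graph $G \in \mathbb{G}^F \cup \mathbb{G}^\infty$ to a canonical non-wasteful representative in $\finitegraphs \cup \infinitegraphs$. Using the formal language of \S\ref{sec:lan} and the formal definition of Bounded-Effect in \S\ref{sec:vegas}, I would define $\pi$ by iteratively deleting failed await iterations whose polled-variable reads read from exactly the same writes as those of the next iteration. The crucial verification is that each such deletion leaves the rest of $G$ untouched: Bounded-Effect guarantees that the only \rf-edges originating in a failed iteration terminate inside that same iteration, and Bounded-Length guarantees that iterations contain finitely many events. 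From this one obtains, as a self-contained lemma, that $\pi$ preserves $\mathrm{cons}_M$, preserves the outcome of every assertion, preserves the exit value (if any) of every await, and lands in a graph whose await iterations all read distinct write combinations, i.e., in $\finitegraphs$ when $G\in\mathbb G^F$ and in $\infinitegraphs$ when $G\in\mathbb G^\infty$.

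For termination (1), I would combine the finiteness bound already sketched in \S\ref{amc conditions}. Bounded-Length gives $|\mathbb{G}_1|$ finite; Bounded-Effect propagates this to $|\mathbb{G}_k|\le b+(k-1)x$ for each $k$; non-wastefulness forces $k\le a\cdot(n-1)$ where $a$ is the number of executed await instances (bounded by Bounded-Length) and $n$ bounds the writes to any polled variable (\mo being finite under both principles). Hence $\finitegraphs$ is finite. GenMC's underlying depth-first search is already known to terminate on finite execution-graph sets, and the two highlighted changes in \cref{exploration algorithm} only (i) prune additional graphs via $\neg\mathrm{W}(G)$ and (ii) add at most one $\bot\overset{\rf}\rightarrow r$ alternative per read inside an await, a bounded number of extra branches. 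Hence the enlarged exploration still halts.

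For safety completeness (2) and await-termination completeness (3), I would argue from $\pi$. If a safety violation exists in some $G\in\mathbb{G}^F$, then $\pi(G)\in\finitegraphs$ exhibits the same violation (same surviving assertion events), and GenMC's revisit-based enumeration is complete over $\finitegraphs$, so AMC flags it. If an AT violation exists, i.e., some $G\in\mathbb{G}^\infty$ has an await that never terminates, then along that await some minimal failed iteration must repeat a previously-seen combination of polled writes (by finiteness of \mo); truncating $G$ right after the repeating read yields a finite witness $G'\in\infinitegraphs$ in which the terminal read has no satisfying write other than a wasteful one. This is exactly the $\bot\overset{\rf}\rightarrow r$ configuration that AMC's highlighted AT check detects. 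For soundness / no false positives (4), I would argue the converse along the same exploration: any assertion-failure that AMC reports occurs in a graph it pushed onto $S$ that survived both $\mathrm{cons}_M$ and $\neg\mathrm{W}$, hence corresponds to a real consistent execution; any AT violation AMC reports is raised only after verifying that every completion of the missing \rf-edge leads to a wasteful graph, which by the lemma on $\pi$ means the original program really does have a non-terminating await.

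The main obstacle, I expect, is the collapse lemma itself, specifically the case where a failed iteration is interleaved with events of another thread whose behaviour depends on the \po-position of the failed reads. I need to show that Bounded-Effect excludes this interaction, i.e., that the \po-positions of other threads' events relative to the collapsed iteration are irrelevant to $\mathrm{cons}_M$ because every cross-thread relation goes through \rf\ or \mo, not raw \po. A secondary obstacle is transferring GenMC's completeness theorem to the enlarged graph set: I have to check that the new $\bot$-revisit branches commute cleanly with GenMC's existing $\mathsf{CalcRevisits}$ so that no non-wasteful consistent graph in $\finitegraphs$ is silently omitted. Both steps are done by induction on the number of failed iterations, using Bounded-Length at the base case to ensure each individual iteration is a finite bounded object to which \S\ref{sec:vegas}'s formalization directly applies.
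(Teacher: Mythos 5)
Your proposal is correct and follows essentially the same route as the paper: your collapse map $\pi$ is exactly the paper's iterative deletion of wasteful failed iterations (justified, as you say, by the Bounded-Effect principle confining all \rf- and register-read-from edges to the deleted iteration), your termination argument via bounding the writes and hence the non-wasteful iterations matches the paper's Lemmas on bounded writes and finiteness of $\finitegraphs$, and your truncation-to-a-stagnant-witness / extension-to-an-infinite-graph pair for AT completeness and soundness is the paper's cut/extend duality between $\infinitegraphs$ and $\mathbb G^\infty$. One honest caveat you already flag as your ``main obstacle'' is in fact punted on by the paper as well: preservation of $\mathrm{cons}_M$ under the collapse is explicitly left to a per-model argument rather than proved generically.
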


% vim: ts=4 sts=4 sw=4 et
\subsection{The tiny concurrent assembly-like language}\label{sec:lan}
To have a formal foundation for these proofs we need to provide a formal programming language.
Without such a programming language, execution graphs just float in the air, detached from any program. 
Using the consistency predicate $\mathrm{cons}_M$ we can state that an execution graph can be generated by the weak memory model, but not whether it can be generated by a given program.
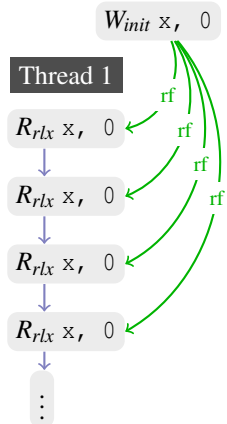
\begin{figure}[H]
	\centering
	\begin{tikzpicture}[
	node distance = 9mm and 2mm,
	%node distance = 5mm and 2mm,
	po/.style={->,thick, blue!50!black!50},
	rf/.style={->,thick, green!70!black},
	mo/.style={->,thick, red},
	ctrl/.style={->,thick, dashed},
	ev/.style={rounded corners, fill=gray!10},
	evx/.style={rounded corners, fill=red!70!black!40},
	event/.style={ev, anchor=west},
	%event/.style={ev},
	thread/.style={text=white, fill=black!70},
	]
	\node[ev] (l) {$W_\mathit{init}$ {\tt x, 0}};
	
	\node[below left=9mm and -4mm of l,ev] (t1 init next) {$R_\rlx$ {\tt x, 0}};
	\node[below=of t1 init next.west,event] (t1 init lock) {$R_\rlx$ {\tt x, 0}};
	\node[below=of t1 init lock.west,event] (t1 xchg read) {$R_\rlx$ {\tt x, 0}};
	\node[below=of t1 xchg read.west,event] (t1 xchg tail) {$R_\rlx$ {\tt x, 0}};
	\node[below right=9mm and 3mm of t1 xchg tail.west,event] (t1 write next) {\vdots};
	
	\draw[po] ($(t1 init next.south west)+(5mm,0)$) -- ($(t1 init lock.north west)+(5mm,0)$);
	\draw[po] ($(t1 init lock.south west)+(5mm,0)$) -- ($(t1 xchg read.north west)+(5mm,0)$);
	\draw[po] ($(t1 xchg read.south west)+(5mm,0)$) -- ($(t1 xchg tail.north west)+(5mm,0)$);
	\draw[po] ($(t1 xchg tail.south west)+(5mm,0)$) -- ($(t1 write next.north west)+(2mm,0)$);
	
	%\draw[ctrl] (t1 xchg read.south east) to[bend left]
	%node[right=-2mm, midway, fill=white, font=\footnotesize] {ctrl+addr}
	%(t1 xchg read.south east|-t1 write next.north west);
	%\draw[ctrl] ($(t21.south west)+(10mm,0)$) to[bend left]
	%node[right=2mm, midway, fill=white, font=\footnotesize] {ctrl+addr}
	%($(t22.north west)+(10mm,0)$);

	\draw[rf] (l) to[bend right=-50]
	node[midway, fill=white, font=\footnotesize] {rf}
	(t1 init next.east);
	
	\draw[rf] (l) to[bend right=-50]
	node[midway, fill=white, font=\footnotesize] {rf}
	(t1 init lock.east);
	
	\draw[rf] (l) to[bend right=-50]
	node[midway, fill=white, font=\footnotesize] {rf}
	(t1 xchg read.east);
	
	\draw[rf] (l) to[bend right=-50]
	node[midway, fill=white, font=\footnotesize] {rf}
	(t1 xchg tail.east);
	
	\node[thread, above=2mm of t1 init next] {Thread 1};
	\end{tikzpicture}
	\caption{Divergent execution graph} \label{non-terminating execution}
\end{figure}
For example, the non-terminating execution graph in \cref{non-terminating execution} consisting only of reads from the initial store is consistent with all standard memory models, but is obviously irrelevant to most programs.
If we were only to decide whether a non-terminating execution graph is consistent with the memory model or not, we could simply always return \lstinline[language=C,style=mainexample]|true| and be done with it.
What we really want to decide is whether a given program (which satisfies our two principles) has a non-terminating execution graph which is consistent with the memory model.
For this purpose we will in this section define a tiny assembly-like programming language, and define whether a program $P$ can generate an execution graph $G$ through a new consistency predicate $\mathrm{cons}^P(G)$.
This will require formally defining an execution-graph driven semantics for the language, in which threads are executed in isolation using values provided by an execution graph.
After we define the programming language, we formally define the Bounded-Effect principle.
We then show that if the Bounded-Effect principle is satisfied, we can always remove one failed iteration of an await from a graph without making the graph inconsistent with the program.
This will allow us to show that graphs in $\mathbb G^F$ can always be ``trimmed'' to a graph in \finitegraphs{} which has the same error events, and thus all safety violations are detected by AMC.
Next we show that we can also add failed iterations of an await; indeed, for graphs in \infinitegraphs{} we can add infinitely many such iterations.
Thus these graphs can be ``extended'' to graphs in $\mathbb G^\infty$ which are consistent with the program.
This implies that there are not false positives.
Finally we show that due to the Bounded-Length dance principle and the Bounded-Effect principle, graphs in \finitegraphs{} and \infinitegraphs{} have a bounded number of failed iterations of awaits, and the remaining steps are bounded as well.
Thus the search space itself must be finite, and AMC always terminates.
\lstdefinelanguage{Lambda}{%
	morekeywords={%
		if,then,else,fix,match,await,step,with % keywords go here
	},%
	morekeywords={[2]int},   % types go here
	otherkeywords={:}, % operators go here
	literate={% replace strings with symbols
		{->}{{$\to$}}{2}
		{lambda}{{$\lambda$}}{1}
		{rlx}{{$^\rlx$}}{2}
		{|}{{$\big|$}}{1}
		{sigma}{{$\sigma$}}{1}
	},
	basicstyle={\sffamily},
	keywordstyle={\bfseries},
	keywordstyle={[2]\itshape}, % style for types
	keepspaces,
	columns=flexible,
	mathescape % optional
}[keywords,comments,strings]%

\newcommand{\mode}{m}
\newcommand{\evt}{e}
\newcommand{\state}{\sigma}
\newcommand\dowhile[2]{\lstinline[language=Lambda]|await(#1,#2)|}
\newcommand\loopcons{\kappa}
\newcommand\inst[2]{\lstinline[language=Lambda]|step(#1,#2)|}
\newcommand{\egen}{\epsilon}
\newcommand{\strans}{\delta}
\newcommand{\update}{\mu}
\begin{figure}[H]
	\begin{align*}
	\textit{(Program)} \ P &:= T_0 \parallel \ldots  \parallel \ T_i \parallel\ldots  \ \parallel T_n && \text{Composition of parallel threads $T_i$} \\
	\textit{(Thread)} \ T &:= S_0 ; \ \ldots ; \ S_n && \text{Sequence of statements $S_i$} \\
	\textit{(Stmts)} \ S &:=  \dowhile{$n$}{$\loopcons$} \mid \inst{$\egen$}{$\strans$} && \text{Await-loop and non-await-loop steps.}
	\\ \midrule 
	\textit{(LoopCon)} \   \loopcons &\in \mathit{State} \to \Set{0,1} && \text{Loop condition}
	\\  \textit{(EvtGen)} \  \egen &\in \mathit{State} \to \mathit{Events} && \text{Event generator}
	\\  \textit{(StTrans)} \  \strans &\in \mathit{State} \times \mathit{Value}? \to \textit{Update} && \text{State transformer}
	\\ \midrule
	\textit{(Events)} \ \evt &:= R^\mode(x) \mid W^\mode(x,v) \mid F^\mode \mid E && \text{Read, write, fence, error events. $x \in \mathit{Location}$, $v \in \mathit{Value}$.}
	\\ \textit{(Modes)} \ \mode &:= \rlx \mid \rel \mid \acq \mid \seqc && \text{Barrier modes}
	\\ \midrule  
	\textit{(State)}\ \state  & \in \mathit{Register} \to \mathit{Value} && \text{Set of thread-local states.}
	\\
	\textit{(Update)}\ \update  & \in \mathit{Register} \rightharpoonup \mathit{Value} && \text{Updated register values.}
	\\
	\textit{(Register)} & \quad \ldots && \text{Set of thread-local registers.}
	\\  \textit{(Location)} & \quad  \ldots&& \text{Set of shared memory locations.}
	\\ \textit{(Value)} & \quad \ldots  && \text{Set of possible values of registers and memory locations.}
	\end{align*}
	\caption{Compact Syntax and Types of our Language}\label{syntax summary}
\end{figure}

\begin{figure}[H]
	\centering
	\begin{tabular}{lp{0.05\textwidth}l}
		C-like program && Our toy language \\
		{\lstset{showlines=true,}
			\begin{lstlisting}
			x = r1;
			r1 = y;
			if (r1 == 0)
			r2 = x;
			
			
			
			
			\end{lstlisting}}
		&&
		\begin{lstlisting}[language=Lambda]
		step(lambda sigma. Wrlx(x, sigma(r1)), lambda sigma _. [ ]);
		step(lambda sigma. Rrlx(y),        lambda sigma v. [r1 -> v]);
		step(lambda sigma. match sigma(r1) with
		| 0 -> Rrlx(x)
		| _ -> Frlx,
		lambda sigma v. match sigma(r1) with
		| 0 -> [r2 -> v]
		| _ -> [ ])             
		\end{lstlisting}
	\end{tabular}
	\caption{Using lambda functions inside \lstinline[language=Lambda]|step| to implement different control paths} \label{example control flow}
\end{figure}

\subsubsection{Syntax}
Recall that the Bounded-Length principle requires that the number of steps inside await loops and the number of steps outside awaits are bounded.
We define a tiny concurrent assembly-like language which represents such programs, but not programs that violate the Bounded-Length principle.
The purpose of this language is to allow us to prove things easily, not to conveniently program in it.
Thus instead of a variety of statements, we consider only two statements: await loops (\lstinline[language=Lambda]|await|) and event generating instructions (\lstinline[language=Lambda]|step|).
The syntax and types of our language is summarized in \cref{syntax summary}. The event generating instructions \lstinline[language=Lambda]|step| use a pair of two lambda functions to generate events and modify the thread local state.
Thus the execution of the steps yields a sequence $\sigma^{(t)}$ of thread-local states. We illustrate this at hand of the small example in \cref{example control flow}, which we execute starting with the (arbitrarily picked for demonstrative purposes) thread local state
\[ \sigma^0(r) = \begin{cases} 5 & r = \lstinline[language=Lambda]|r1| \\ 0 & r = \lstinline[language=Lambda]|r2| \end{cases} \]
in which the value of \lstinline[language=Lambda]|r1| is 5 and the value of \lstinline[language=Lambda]|r2| is 1.
The first instruction of the program first evaluates \lstinline[language=Lambda]|lambda sigma. Wrlx(x, sigma(r1))| on $\sigma^0$ to determine which event (if any) should be generated by this instruction. In this case, the generated event is
\[ W^\rlx(\lstinline[language=Lambda]|x|,\sigma^0(\lstinline[language=Lambda]|r1|)) = W^\rlx(\lstinline[language=Lambda]|x|,5) \]
which writes 5 to variable \lstinline[language=Lambda]|x|. Next, the function \lstinline[language=Lambda]|lambda _ _. [ ]| is evaluated on $\sigma^0$ to determine the set of changed registers and their new value in the next thread local state $\sigma^1$. This function takes a second parameter which represents the value returned by the generated event in case the generated event is a read. Because the event in this case is not a read, no value is returned, and the function simply ignores the second parameter. The empty list \lstinline[language=Lambda]|[ ]| indicates that no registers should be updated. Thus
\[ \sigma^1 = \sigma^0 \]
and execution proceeds with the next instruction. The second instruction generates the read event
\[ R^\rlx(y) \]
which reads the value of variable \lstinline[language=Lambda]|y|. Assume for the sake of demonstration that this read (\eg, due to some other thread not shown here) returns the value 8.
Now the function \lstinline[language=Lambda]|lambda sigma v. [r1 -> v]| is evaluated on $\sigma^1$ and $v=8$. The result \lstinline[language=Lambda]|[r1 -> 8]| indicates that the value of \lstinline[language=Lambda]|r1| should be updated to 8, and the next state $\sigma^2$ is computed as
\[ \sigma^2(r) = \begin{cases} 8 & r = \lstinline[language=Lambda]|r1| \\ \sigma^1(r) & \text{o.w.} \end{cases} \]
In this state, the third instruction is executed. Because in $\sigma^2$, the value of \lstinline[language=Lambda]|r1| is not 0, the \lstinline[language=Lambda]|match| goes to the second case, in which no event is generated (indicated by \lstinline[language=Lambda]|F(rlx)|, \ie, a relaxed fence which indicates a NOP). Thus again there is no read result of $v$, and the next state $\sigma^3$ is computed simply as
\[ \sigma^3 = \sigma^2 \]

\begin{figure}[H]
\centering
\begin{tabular}{lp{0.04\textwidth}ll}
\toprule 
normal code && \multicolumn{2}{l}{encoding in toy language} \\
&&  event generators & state transformers
\\\midrule
{\lstset{showlines=true,}
\begin{lstlisting}
r1 = x;
\end{lstlisting}}
&&
{\lstset{showlines=true,}
\begin{lstlisting}[language=Lambda]
lambda _. Rrlx(x)
\end{lstlisting}}
&
{\lstset{showlines=true,}
\begin{lstlisting}[language=Lambda]
lambda _ v. [r1 -> v]
\end{lstlisting}}
\\\midrule
{\lstset{showlines=true,}
\begin{lstlisting}
y = r1+2;
\end{lstlisting}}
&&
{\lstset{showlines=true,}
\begin{lstlisting}[language=Lambda]
lambda sigma. Wrlx(x, sigma(r1)+2)
\end{lstlisting}}
&
{\lstset{showlines=true,}
\begin{lstlisting}[language=Lambda]
lambda _ _. [ ]
\end{lstlisting}}
\\\midrule
{\lstset{showlines=true,}
\begin{lstlisting}
if (x==1){
	y = 2;
	r2 = z;
} else {
	r2 = z;
	y = r2;
}


\end{lstlisting}}
&&
{\lstset{showlines=true,}
\begin{lstlisting}[language=Lambda]
lambda _. Rrlx(x)
lambda sigma. match sigma(r1) 
  with 
  | 1 -> Wrlx(y, 2)
  | _ -> Rrlx(z)
lambda sigma. match sigma(r1)
  with
  | 1. Rrlx(z)
  | _. Wrlx(y, sigma(r2))
\end{lstlisting}}
&
{\lstset{showlines=true,}
\begin{lstlisting}[language=Lambda]
lambda _ v. [r1 -> v]
lambda sigma v. match sigma(r1) 
  with
  | 1. [ ]
  | _. [r1 -> v]
lambda sigma v. match sigma(r1) 
  with
  | 1. [r2 -> v]
  | _. [ ]
\end{lstlisting}}
\\\midrule
\begin{lstlisting}
for (r1 = 0; r1 < 3; r1++)
{
	x = r1;
}
\end{lstlisting} 
&&
{\lstset{showlines=true,}
\begin{lstlisting}[language=Lambda]
lambda sigma. Wrlx(x, sigma(r1))
lambda sigma. Wrlx(x, sigma(r1))
lambda sigma. Wrlx(x, sigma(r1))

\end{lstlisting}}
&
{\lstset{showlines=true,}
\begin{lstlisting}[language=Lambda]
lambda sigma _. [r1 -> sigma(r1)+1]
lambda sigma _. [r1 -> sigma(r1)+1]
lambda sigma _. [r1 -> sigma(r1)+1]

\end{lstlisting}}
\\\bottomrule
\end{tabular}
\caption{Example Encodings of Language Constructs as Event-generator/State-transformer Pairs} \label{fig:example encodings}
\end{figure}

\begin{figure}[h]
\centering
\begin{tabular}{lp{0.05\linewidth}l}
\begin{lstlisting}
do_awaitwhile({
    r1 = y;
},x==1)

\end{lstlisting} 
&&
{\lstset{showlines=true,}
\begin{lstlisting}[language=Lambda]
step(lambda _. Rrlx(y), lambda _ v. [r1 -> v]);
step(lambda _. Rrlx(x), lambda _ v. [r2 -> v]);
await(2, lambda sigma. sigma(r2) == 1)
\end{lstlisting}}
\end{tabular}
\caption{Encoding of Do-Await-While as a Program in our Language} \label{fig:await example}
\end{figure}

Note that each thread's program text is finite, and the only allowed loops are awaits.
Thus programs with infinite behaviors or unbounded executions (which violate the Bounded-Length principle) can not be represented in this language.

Each statement generates up to one event that depends on a thread-local state, and modifies the thread-local state based on the previous state and (in case a read event was generated) the result of the read.
This is encoded using two types of lambda functions: the \emph{event generators} that map the current state an event (possibly $F^\rlx$), \ie, have type
\[ \mathit{State} \rightarrow \mathit{Event} \]
and the \emph{state transformers} that map the current state and possibly a read result to an update to thread-local state representing the new value of all the registers that are changed by the instruction, \ie, have type 
\[ \mathit{State} \times \mathit{Value}? \rightarrow \mathit{Update} \]
Here $T?$ is a so called \emph{option} type, which is similar to the nullable types of C\#: each value $v \in T?$ is either a value of $T$ or $\bot$ (standing for ``none'' or ``null''):
\[ v \in T? \iff v \in T \lor v = \bot \]

Await loops are the only control construct in our language. Apart from awaits, the control of the thread only moves forward, one statement at a time. 
Different conditional branches are implemented through internal logic of the event generating instructions: the state keeps track of the active branch in the code, and the event generator and state transformer functions do a case split on this state. 
Bounded loops have to be unrolled. See \cref{fig:example encodings}.

We formalize the syntax of the language. There is a fixed, finite set of threads $\mathcal T$ and each thread $T \in \mathcal T$ has a finite program text $P_T$ which is a sequence of statements.
We denote the $k$-th statement in the program text of thread $T$ by $P_T(k)$. 
A statement is either an an event generating instruction or a do-await-while statement. 
We assume that the set of registers, values, and locations are all finite.

\paragraph{Event Generating Instruction}
An event generating instruction has the syntax

\begin{minipage}{\linewidth}
\centering	\lstinline[language=Lambda]|step($\egen$,$\strans$)|
\end{minipage}
where 
\[ \egen : \mathit{State} \rightarrow \mathit{Event} \]
is an event generator and 
\[ \strans : \mathit{State} \times \mathit{Value}? \rightarrow  \mathit{Update} \]
is a state transformer.
Note that the event generating instruction is roughly a tuple of two functions $\egen$ and $\strans$. When the statement is executed in a thread-local state $s \in \mathit{State}$, we first evaluate $\egen(\sigma)$ to determine which event is generated. If this event is a read, it returns a value $v$ (defined based on reads-from edges in an execution graph $G$), which is then passed to $\strans$ to compute the new values for updated registers in the update $\strans(\sigma,v)$. The next state is then defined by taking all new values from $\strans(\sigma,v)$ and the remaining (unchanged) values from $s$.

\paragraph{Do-Await-While}
A do-await-while statement has the syntax

\begin{minipage}{\linewidth}
\centering	\lstinline[language=Lambda]|await($n$,$\loopcons$)|
\end{minipage}
where $n \in \Set{0,1,2,\ldots}$ is the number of statements in the loop, and the loop condition $\loopcons : \mathit{State} \rightarrow \Set{0,1}$ is a predicate over states telling us whether we must stay in the loop. 
If this statement is executed in thread-local state $s$, we first evaluate $\loopcons(\sigma)$. 
In case $\loopcons(\sigma)$ evaluates to true, the control jumps back $n$ statements, thus repeating the loop; otherwise it moves one statement ahead, thus exiting the loop.

\paragraph{Syntactic Restriction of Awaits}

We add two syntactic restrictions of awaits: 1) no nesting of awaits and 2) an await which jumps back $n$ statements needs to be at least at position $n$ in the program.
\[ P_T(k) = 	\lstinline[language=Lambda]|await($n$,_)| \quad\to\quad n \le k \ \land \ \forall k' \in [k-n:k).\ P_T(k') \not= \lstinline[language=Lambda]|await(_,_)| \]
These restrictions will allow us to easily identify steps in an iteration of an await as steps in the range $[k-n:k)$

\begin{figure}[t]
\centering
\begin{tikzpicture}
\draw (0,0) rectangle (2,0.5) node[pos=.5] (N) {\lstinline[language=Lambda]|await($n$,_);|};
\draw (0,1.2) rectangle (2,1.7) node[pos=.5] (V) {\lstinline[language=Lambda]|await($v$,_);|};
\draw (0,-0.5) -- (0,4);
\draw (2,-0.5) -- (2,4);
\draw [->] (N.east) -- ++(0.5,0) -- node [right] {$n$} ++(0,3.5) -- ++(-0.5,0);
\draw (0,2.8) rectangle (2,3.3) node[pos=.5] (Q) {};
\draw [->,draw=blue!60!black] (V.east) -- ++(0.4,0) -- node [right,text=blue!60!black] {$v$} ++($(Q.center)-(V.center)$) -- ++(-0.4,0);
\node [left=0.9 of N.west] (t) {$k_G^T(t)$};
\draw [->] (t) -- (N.west);

\draw (t |- Q) node (m) {$k_G^T(t)-m$};
\draw [->] (m) -- (N.west |- Q);

\draw (t |- V) node (v) {$k_G^T(t)-m+v$};
\draw [->] (v) -- (N.west |- V);

\node at ($(Q)!0.45!(V)$) {$\vdots$};
\node at ($(V)!0.45!(N)$) {$\vdots$};
\end{tikzpicture}
\caption{Two overlapping awaits}\label{fig:awaits position}
\end{figure}
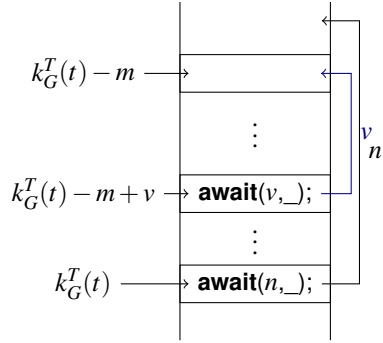

\subsubsection{Semantics}
The semantics of our language consist of two components: an execution graph $G$, which represents the concurrent execution of the events, and local instruction sequences that generate and refer to these events. 
At first glance, the instruction sequences and the event graph interlock like gears: the instruction sequences generate the events in the event graph, \eg, the reads and writes, and the event graph generates the values that are returned by those reads to the instruction sequences.
Of course, the values returned by reads determine which events are generated next by the instruction sequences.
Unfortunately, it is a bit more complex than this: due to weak memory models, the interlock is actually \emph{cyclical}; a write event $w$ of thread $A$ can be generated based on a value returned by a previous read of $A$, which reads from a write event of thread $B$, which is generated based on a previous read of $B$ which reads from $w$.
Thus a simple step-by-step parallel construction of $G$ and the instruction sequences is not possible.

Instead, we follow a more indirect (so called axiomatic) semantics: we take an arbitrary (potentially cyclical) execution graph $G$ and try to justify it ad-hoc by finding local instruction sequences that are consistent with the events in the graph, \ie, 1) every event in $G$ is generated by the instruction sequences, and 2) the instruction sequences use read-results from the graph $G$.
We define this in two steps: at first we ignore for simplicity the consistency predicate $\mathrm{cons}_M(G)$ which states that $G$ is consistent with the memory model, and only check that $G$ can be justified by the program text.
We define this by a predicate $\mathrm{cons}^P(G)$ stating that $G$ is consistent with the program.
Then we make the definition complete by combining the two consistency predicates into a single predicate
\[ \mathrm{cons}^P_M(G) = \mathrm{cons}^P(G) \land \mathrm{cons}_M(G) \]
which states that the execution graph $G$ is consistent with the program $P$ under the weak memory model $M$. We borrow the notation of execution graphs from the work of Vafeiadis et. al \cite{GenMC}, with the minor change that our events include barrier modes.

\paragraph{Defining $\mathrm{cons}^P(G)$}
The semantics of our language is defined with relation to an execution graph $G$, which provides the values individual reads read. With reference to these values, the local program text of each thread can be executed locally.
The graph is consistent with the program text if it contains exactly the events that occur during this local execution.
The local execution of thread $T$ is described by four sequences, which are defined by mutual recursion on the number of executed steps $t \in \Set{0,1,2,\ldots}$: the thread local state $\sigma_G^T(t)$ after $t$ steps, the position of control $k_G^T(t)$ after $t$ steps, the (potential) event generated in the $t$-th step $e_G^T(t)$, and the (potential) read result $v_G^T(t)$ of that event. The definitions of $e_G^T(t)$ and $v_G^T(t)$ are not themselves recursive but refer to $\sigma_G^T(t)$, while themselves being referenced in the definition of $\sigma_G^T({t+1})$.
For this reason, the four sequences do not all have the same length. We denote the number of execution steps of thread $T$ by $N_G^T \in \mathbb N \cup \Set{ \infty }$, where $N_G^T = \infty$ indicates that the thread does not terminate and hence makes infinitely many steps.
The number of steps $N_G^T$ coincides with the length $\lvert e_G^T(t) \rvert$ and $\lvert v_G^T(t) \rvert$ of the sequences $e_G^T(t)$  of events and $v_G^T(t)$ of read results of thread $T$ as every step generates up to one event and returns up to one read result
\[ \lvert e_G^T(t) \rvert = \lvert v_G^T(t) \rvert = N_G^T\]
As usual for these fence post cases, the number of states and positions of control is $N_G^T+1$
\[ \lvert \sigma_G^T(t) \rvert = \lvert k_G^T(t) \rvert = N_G^T+1 \]

\paragraph{Position of Control} The position of control
\[ k_G^T(t) \in \Set{0,1,2,\ldots} \]
is the index of the next statement $P_T(k_G^T(t))$ to be executed by thread $T$ after executing $t$ steps.
All programs start at the first statement, \ie,
\[ k_G^T(0) = 0 \]
and thus the first statement to be executed is the first statement of the program $P_T(k_G^T(0)) = P_T(0)$.
After $t \le N_G^T$ steps, the position of control may leave the program text, \ie, $k_G^T(t)$ may no longer be an index in the sequence $P_T((k))$ of statements of thread $T$
\[ k_G^T(t) \ge \lvert P_T((k)) \rvert \]
In this case, the computation of thread $T$ ends:
\[ k_G^T(t) \ge \lvert P_T((k)) \rvert  \ \to \ N_G^T = t \]
and thus there is no $k_G^T({t+1})$ that needs to be defined. Otherwise, the execution of the $t$-th step changes the position based on the statement $P_T(k_G^T(t))$ executed in that step.
We abbreviate
\[ S_G^T(t) = P_T(k_G^T(t)) \]

If the this statement is an event generating instruction, we always move to the next statement, \ie, 
\[ P_T(k_G^T(t)) = \lstinline[language=Lambda]|step(_,_)| \quad\to\quad k_G^T({t+1}) = k_G^T(t) + 1\]
For a do-await-while \lstinline[language=Lambda]|await($n$,$\loopcons$)|, $k$ is either also incremented by 1 (the loop is exited) or decremented by $n$ (the loop is continued), depending on whether $\loopcons$ evaluates to false or true in the internal state $\sigma_G^T(t)$ of thread $T$ after $t$ steps:
\[ P_T(k_G^T(t)) = \lstinline[language=Lambda]|await($n$,$\loopcons$)| \quad\to\quad  k_G^T({t+1}) = \begin{cases} k_G^T(t) + 1 & \loopcons(\sigma_G^T(t)) = 0 \\ k_G^T(t) - n & \text{o.w.} \end{cases} \]

\paragraph{Event Sequence}
This $t$-th step (with $t < N_G^T$) generates the event
\[ e_G^T(t) \in \mathit{Event} \]
which is either $\egen(\sigma_G^T(t))$  if the statement executed in this step is an event generating instruction \lstinline[language=Lambda]|step($\egen$,_)|
\[ P_T(k_G^T(t)) = \lstinline[language=Lambda]|step($\egen$,_)| \quad\to\quad e_G^T(t) = \egen(\sigma_G^T(t)) \]
or, in case the statement is a do-await-while, a NOP event
\[ P_T(k_G^T(t)) = \lstinline[language=Lambda]|await(_,_)| \quad\to\quad  e_G^T(t) = F^\rlx\]
This event must be in $G$ or $G$ is not consistent with the program.
Recall that $G$ stores the event together with meta data indicating the thread $T$ and the event index $t$ in the program order of $T$.
If no event with this meta data exists in the graph, the graph represents a partial execution of the program. In this case we stop execution of thread $T$ before the event is generated
\[ \langle T,\ t,\ - \rangle \not\in G.\mathrm{E} \quad\to\quad  N_T = t \]
Otherwise, the event and its meta data from the triplet $\langle T,\ t,\ e_G^T(t) \rangle$.
If some event with this meta data exists in the graph, but not this particular event, then the program generated a different event than the one provided by the graph; the graph is inconsistent with the program.
\[ \langle T,\ t,\ e \rangle \in G.\mathrm{E} \ \land \ e \not= e_G^T(t) \quad\to\quad \neg \mathrm{cons}^P(G) \]
Note that $T$ and $t$ already uniquely identify $e_G^T(t)$ in a consistent execution graph. To avoid redundancy we abbreviate
\[ \event{T}{t}{G} = \langle T,\ t,\ e_G^T(t) \rangle  \]

\paragraph{Read Result} The read result
\[ v_G^T(t) \in \mathit{Value}? \]
is the value returned by a read event generated in step $t < N_G^T$ of thread $T$.
If no read event is generated, there is no read result $v_G^T(t)$
\[ e_G^T(t) \not= R^-(-) \quad \to\quad v_G^T(t) = \bot \]
Otherwise,  the read reads from the write $w= G.\mathrm{\rf}(\event{ T}{ t}{ G} \rangle)$ and returns the value $w.\mathrm{val}$ written by that write. 
Note that in the case of a missing \rf-edge, $w$ may be $\bot$ even though $e_G^T(t)$ is a read event. In such cases, we return the read result $\bot$.
However, an instruction that generates a read event usually depends on the read result to compute the next state.
Thus we will define in the next section that the thread-local execution terminates in case $e_G^T(t)$ is a read event but the read result is $\bot$. 
We collect the read result as the value $v_G^T(t)$
\[ e_G^T(t)=R^o(x) \quad\to\quad v_G^T(t) =  \begin{cases} G.\mathrm{\rf}(\event{T}{t}{G}).\mathrm{val} & G.\mathrm{\rf}(\event{T}{t}{G} \rangle) \not= \bot \\ \bot & \text{o.w}  \end{cases} \]

\paragraph{State Sequence}
The thread-local state of thread $T$ after executing $t$ steps
\[ \sigma_G^T(t) \in \mathit{State} \]
contains the values of all thread-local registers. We leave the initial state
\[ \sigma_G^T(0) \]
of thread $T$ uninterpreted. Each step then updates the local state based on the executed statement. Do-await-whiles never change the program state
\[ P_T(k_G^T(t)) = \lstinline[language=Lambda]|await(_,_)| \quad\to\quad  \sigma_G^T({t+1}) = \sigma_G^T(t)\]
For event generating instructions, we consider two cases: the first (regular) case is that the value $v_G^T(t)$ matches the event $e_G^T(t)$ in the sense that $v_G^T(t)$ provides a value if $e_G^T(t)$ is a read.
In this case the event generating instruction \lstinline[language=Lambda]|step(_,$\strans$)| with state transformer $\strans$ updates the state based on $\strans$ under control of the read result $v_G^T(t)$:
\begin{align*} 
P_T(k_G^T(t)) &= \lstinline[language=Lambda]|step(_,$\strans$)| \ \land \ (e_G^T(t) \not= R^-(-) \, \lor \, v_G^T(t) \not=\bot) \\ &\quad\to\quad \sigma_G^T({t+1}) \ =\  \sigma_G^T(t) \ll \strans(\sigma_G^T(t), v_G^T(t)) 
\end{align*}
Here $\ll$ is the update operator which takes all the new (updated) register values from $\strans(\sigma,v)$ and the remaining (unchanged) register values from $s$:
\[ (\sigma \ll \sigma')(r) = \begin{cases}\sigma'(r) & r \in \mathit{Dom}(\sigma') \\ \sigma(r) & \text{o.w.} \end{cases} \]
In the second (irregular) case, step $t$ generated a read event but no read-result was returned. In this case the computation gets stuck. We define that $t$ is the last step. Since steps $0, \ \ldots, \ t$ have been executed the number of steps is thus $N_G^T = t+1$. Thus formally we need to define a state $\sigma_G^T({t+1})$, despite the computation being stuck. We arbitrarily define $\sigma_G^T({t+1}) = \sigma_G^T(t)$.
\[ P_T(k_G^T(t)) = \lstinline[language=Lambda]|step(_,$\strans$)| \ \land \ e_G^T(t) = R^-(-) \  \land \ v_G^T(t) =\bot \quad\to\quad N_G^T = t+1  \ \land \ \sigma_G^T({t+1}) = \sigma_G^T(t) \]

\paragraph{No Superfluous Events}
We have so far checked that every event generated by the program is also in $G$. However, $G$ is only consistent with a program if there is also no event in $G$ that was not generated by the program, \ie, there are no superfluous events.
More precisely, $G$ is consistent with the program $P$ exactly when the set of events $G.\mathrm{E}$ in $G$ is exactly the set of events (plus meta data) generated by $P$:
\[ \mathrm{cons}^P(G) \quad\iff\quad G.\mathrm{E} = \Set{ \event{T}{ t}{G} | T \in \mathcal T,\ t < N_G^T } \]

\paragraph{Register Read-From}
Recall that the Bounded-Effect principle states that there must not be a visible side effect of a failed iteration of an await. 
Between threads, the only potentially visible side effects are the generated stores.
Within a thread, updates to registers can also be visible, provided these registers are not overwritten in the mean-time.
We define a register read-from relation within events of a single thread
\[ G.\mathrm{rrf} \subseteq G.\mathrm{po} \]
which holds between events $e$ and $e'$ exactly when the statement that generated $e'$ ``depends'' on any registers updated by the statement that generated $e$ which have not been overwritten in the meantime.
More precisely, we put a register read-from edge in the graph between the $t$-th and $u$-th steps (with $u\ge t$) if any function in the statement $P_T(k_G^T(u))(G)$ executed by the $u$-th step depends on the visible output of the $t$-th to the $u$-th step:
\[ \event{ T }{ t }{ G } \overset{\mathrm{rrf}}\longrightarrow \event{T}{ u}{ G } \quad\iff\quad u \ge t \, \land \, \exists f \in F(P_T(k_G^T(u))(G)). \ \mathit{depends\hy on}(f, \mathit{vis}_G^T(t,u)) \]
To define what it means to ``depend'' on these registers, we look at the functions $\egen$, $\strans$, $\loopcons$ of the statement and see whether the registers can affect the functions.
We collect the functions in statement $S$ in a set $F(S)$ defined as follows
\[ F(S) = \begin{cases} \Set{ \egen, \strans } & S = \lstinline[language=Lambda]|step($\egen$, $\strans$)| \\ \Set{ \loopcons } & S = \lstinline[language=Lambda]|await(_, $\loopcons$)| \end{cases} \]
A function $f \in F(S)$ \emph{depends on} a set of registers $R \subseteq \mathit{Register}$ if there are two states which only differ on registers in $R$ on which $f$ produces different results\footnote{For the sake of simplicity we use here curried notation for $f = \strans$, \ie, $\strans(\sigma) \not= \strans(\sigma')$ iff there is a $v$ such that $\strans(\sigma,v) \not= \strans(\sigma',v)$.}
\[ \mathit{depends\hy on}(f,R) \quad\iff\quad \exists \sigma,\, \sigma' \in \mathit{State}. \ (\forall r \not\in R.\ \sigma(r) = \sigma'(r)) \ \land \ f(\sigma) \not= f(\sigma') \]
To unify notation we define an update $\delta_G^T(t)$ for each step $t$ by
\[ \strans_G^T(t) = \begin{cases} \strans(\sigma_G^T(t), v_G^T(t)) & P_T(k_G^T(t)) = \lstinline[language=Lambda]|step(_, $\strans$)| \land (e_G^T(t) \not= R^-(-) \lor v_G^T(t) \not= \bot) \\ \emptyset & \text{o.w.} \end{cases} \]
where $\emptyset$ is the empty update (no registers changed). A straightforward induction shows:
\begin{lemma} \label{lem:strans sequence}
\[ \sigma_G^T(t+1) = \sigma_G^T(t) \ll \delta_G^T(t) \]
\end{lemma}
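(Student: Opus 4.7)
The plan is to prove the equality by a direct case analysis on the statement $P_T(k_G^T(t))$ executed in step $t$, matching each case against the defining clauses of $\sigma_G^T(t+1)$ and $\delta_G^T(t)$. Despite the author calling it a ``straightforward induction,'' no induction on $t$ is actually needed: the claim is purely a re-packaging of the three clauses that define $\sigma_G^T(t+1)$ in the semantics into the single uniform form $\sigma_G^T(t)\ll \delta_G^T(t)$. The induction the author alludes to is the implicit well-founded induction already used when \emph{defining} the mutually recursive sequences $\sigma_G^T$, $k_G^T$, $e_G^T$, and $v_G^T$; once step $t$ is well-defined, the lemma is an unfolding.

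First I would split on whether $P_T(k_G^T(t))$ is an \lstinline[language=Lambda]|await|-statement or a \lstinline[language=Lambda]|step|-statement. In the \lstinline[language=Lambda]|await| case, the semantics clause for \lstinline[language=Lambda]|await(_,_)| gives $\sigma_G^T(t+1) = \sigma_G^T(t)$, while the defining equation of $\delta_G^T(t)$ falls into its ``otherwise'' branch, yielding $\delta_G^T(t)=\emptyset$. Since $\sigma \ll \emptyset = \sigma$ by the defining equation of the update operator (no register is in $\mathit{Dom}(\emptyset)$), both sides coincide.

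Next I would handle the \lstinline[language=Lambda]|step($\egen$,$\strans$)| case by a sub-split on whether the generated event is a read that was left without a read-result. In the regular sub-case ($e_G^T(t) \not= R^-(-)$ or $v_G^T(t) \not= \bot$), the semantics directly states $\sigma_G^T(t+1) = \sigma_G^T(t) \ll \strans(\sigma_G^T(t), v_G^T(t))$, and the defining equation of $\delta_G^T(t)$ picks exactly this $\strans(\sigma_G^T(t), v_G^T(t))$, so the equality is immediate. In the irregular sub-case ($e_G^T(t) = R^-(-)$ and $v_G^T(t) = \bot$), the semantics explicitly sets $\sigma_G^T(t+1) = \sigma_G^T(t)$ and $\delta_G^T(t)$ falls into its ``otherwise'' branch with value $\emptyset$, so again both sides reduce to $\sigma_G^T(t)$.

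I expect no real obstacle here: the entire argument is a bookkeeping exercise to check that the ``otherwise'' clauses in the definition of $\delta_G^T(t)$ line up exactly with the semantics clauses in which $\sigma_G^T(t+1)$ degenerates to $\sigma_G^T(t)$. The only mild subtlety worth writing down explicitly is the identity $\sigma \ll \emptyset = \sigma$, which follows directly from the definition of $\ll$ since the domain of the empty update is empty.
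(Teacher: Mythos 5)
Your proposal is correct: the paper offers no actual proof (it merely asserts ``a straightforward induction shows''), and your case analysis on whether $P_T(k_G^T(t))$ is an \lstinline[language=Lambda]|await| or a \lstinline[language=Lambda]|step| (with the regular/irregular sub-split), together with the observation that $\sigma \ll \emptyset = \sigma$, is exactly the intended unfolding of the definitions. You are also right that no genuine induction on $t$ is needed, since for fixed $t$ both sides are determined by the same defining clauses of the semantics.
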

We define the visible output of the $t$-th step the the $u$-th step to be the set of registers that are updated by the $t$-th step but not by the steps before $u$
\[ \mathit{vis}_G^T(t,u) \ = \ \mathit{Dom}(\strans_G^T(t)) \setminus \bigcup_{u' \in (t:u)} \mathit{Dom}(\strans_G^T(u'))  \]

\paragraph{Iterations of Await}
In this section we define the steps that constitute iterations of await.
These are steps that execute statements with numbers $k' \in [k-n:k]$ where statement number $k$ is a do-await-while statement that jumps back $n$ steps. We enumerate these steps $k$ which are endpoints of await iterations.
\begin{align*}
\mathit{end}_G^T(0) &= \min \Set{ t |  P_T(k_G^T(t)) = \lstinline[language=Lambda]|await(_, _)| }\\ 
\mathit{end}_G^T(q+1) &= \min \Set{t > \mathit{end}_G^T(q) | P_T(k_G^T(t)) = \lstinline[language=Lambda]|await(_, _)| }
\end{align*} 
We denote the length of such an iteration, \ie, the number $n$ of steps jumped back by the do-await-while-statement, by
\[ \mathit{len}_G^T(q) =n \quad\text{where}\quad P_T(k_G^T(\mathit{end}_G^T(q))) = \lstinline[language=Lambda]|await($n$, _)| \]
The start point ($k-n$) of the $q$-th iteration is defined by
\[ \mathit{start}_G^T(q) \ = \ \mathit{end}_G^T(q) - \mathit{len}_G^T(q) \] 

We show that the intervals $[\mathit{start}_G^T(q):\mathit{end}_G^T(q)]$ for all $q$ do not overlap.
For this it suffices to show that only the last step in such an interval executes a do-await-while statement.
\begin{lemma}\label{lem:step forward iteration}
\[ t \in [\mathit{start}_G^T(q):\mathit{end}_G^T(q)) \quad\to\quad P_G^T(k_G^T(t)) \not= \lstinline[language=Lambda]|await(_, _)| \]
\end{lemma}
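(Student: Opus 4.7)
Let $t_* = \mathit{end}_G^T(q)$, let $k_* = k_G^T(t_*)$ be the position of the closing await, and let $n = \mathit{len}_G^T(q)$ be its backward jump, so $P_T(k_*) = \lstinline[language=Lambda]|await($n$, _)|$. Since $[\mathit{start}_G^T(q):\mathit{end}_G^T(q)) = \{t_* - j \mid j \in [1:n]\}$, the statement we want to prove is equivalent to saying that $P_T(k_G^T(t_* - j))$ is not an await for every $j \in [1:n]$. My plan is to strengthen this to a subclaim about positions and then finish with the syntactic restriction on awaits.

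The plan is to first prove the subclaim that for all $j \in [0:n]$, the control actually walks straight back, i.e.\ $k_G^T(t_* - j) = k_* - j$. I would prove this by minimal-counterexample induction: let $j^*$ be the smallest $j \in [0:n]$ for which $k_G^T(t_* - j) \ne k_* - j$ (it must be $\ge 1$), and consider the transition rule at step $t = t_* - j^*$. Since $k_G^T(t+1) = k_* - j^* + 1$ by minimality, the rule forces either $k_G^T(t) = k_* - j^*$ (ruled out by choice of $j^*$) or that $P_T(k_G^T(t))$ is an await \lstinline[language=Lambda]|await($m$,_)| with $k_G^T(t) = k_* - j^* + 1 + m$.

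Now the key step: I apply the syntactic restriction to the two awaits we have in play. The restriction for the await at position $k_* - j^* + 1 + m$ says that no position in $[k_* - j^* + 1 : k_* - j^* + 1 + m)$ is an await; if $j^* \le m$ this interval contains $k_*$, contradicting that $P_T(k_*)$ is the closing await, so we must have $j^* \ge m + 1$. Symmetrically, the restriction for the original await at $k_*$ forbids awaits in $[k_*-n : k_*)$; for this not to be violated at position $k_* - j^* + 1 + m$, I need $k_* - j^* + 1 + m \ge k_*$, i.e.\ $j^* \le m + 1$. The two inequalities pin $j^* = m+1$, which places the second await exactly at $k_*$ and hence forces $m = n$ and $j^* = n + 1$, contradicting $j^* \le n$. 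This closes the subclaim.

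With the subclaim in hand, the lemma follows in one line: for $j \in [1:n]$, the position $k_G^T(t_* - j) = k_* - j$ lies in $[k_*-n:k_*)$, so by the syntactic restriction on the await at $k_*$ it cannot itself be an await. The main obstacle is the induction step of the subclaim, because naive backward induction would merely observe that the previous statement need not be non-await; the real content is the double application of the syntactic restriction to the ``suspect'' await and to the closing await, which squeezes out the only bad configuration.
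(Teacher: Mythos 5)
Your proof is correct, and it reaches the contradiction by a somewhat different (and tighter) route than the paper's. Both arguments ultimately rest on the syntactic restriction on awaits ($n \le k$ and no await at positions $[k-n:k)$), but the paper applies it only once, to the closing await: it picks the \emph{last} step $t$ in the interval executing a do-await-while, argues dynamically that its loop condition cannot hold because control would otherwise have to re-cross $k_G^T(t)$ and execute another await, concludes that the suffix of the iteration advances control linearly, and then places $k_G^T(t)$ inside the forbidden window $[k_G^T(\mathit{end}_G^T(q))-\mathit{len}_G^T(q) : k_G^T(\mathit{end}_G^T(q)))$. You instead establish the stronger invariant $k_G^T(t_*-j)=k_*-j$ by minimal counterexample and apply the syntactic restriction \emph{twice}---to the hypothetical inner await and to the closing one---which squeezes $j^*=m+1$, forces the inner await to coincide with the closing one, and yields $j^*=n+1>n$. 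This buys you two things: you avoid the paper's informal ``would have to cross the position a second time'' argument (whose bookkeeping in the paper contains a couple of off-by-one slips, e.g.\ ``$k_G^T(t+1)=k_G^T(t)$'' where $k_G^T(t)+1$ is meant), and you obtain the exact positions of control throughout the iteration, which is precisely what the paper later extracts from this lemma in the proof of \cref{lem:failed repeat}. The one detail worth spelling out is in your second application of the restriction: ``$k_*-j^*+1+m$ is not in $[k_*-n:k_*)$'' also leaves the escape $k_*-j^*+1+m<k_*-n$ open, which you should dismiss explicitly from $j^*\le n$ and $m\ge 0$ before concluding $j^*\le m+1$.
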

\begin{proof}
Assume for the sake of contradiction that step $t$ executes a do-await-while.
W.l.o.g. $t$ is the last step to do so
\[ t = \max \Set{ u \in [\mathit{start}_G^T(q):\mathit{end}_G^T(q)) | P_G^T(k_G^T(u)) = \lstinline[language=Lambda]|await(_, _)| } \]
Since the remaining steps between $t$ and $\mathit{end}_G^T(q)$ are not do-await-while statements, they move the position of control ahead one statement at a time. Thus the difference in the positions of control is equal to the difference in the ste number.
\[ k_G^T(\mathit{end}_G^T(q))  -  k_G^T(t+1) \ =\  \mathit{end}_G^T(q)-(t+1) \]
Furthermore, the loop condition of step $t$ can not be satisfied since otherwise control would jump back and then have to cross position $k_G^T(t)$ a second time; but this contradicts the assumption that no more do-await-while statements are executed.
Thus
\[ k_G^T(t+1) = k_G^T(t) \]
and we conclude with simple arithmetic
\[ k_G^T(t) \ =\  k_G^T(\mathit{end}_G^T(q))  -  (\mathit{end}_G^T(q) - t) \]
Since $t$ is in the interval $[\mathit{start}_G^T(q):\mathit{end}_G^T(q))$ which by definition has length $\mathit{len}_G^T(q)$, we conclude first
\[ \mathit{end}_G^T(q) - t < \mathit{len}_G^T(q) \]
and then that the await must be positioned in the last $\mathit{len}_G^T(q)$ statements before the $q$-th executed await
\[ k_G^T(t) \in [k_G^T(\mathit{end})-\mathit{len}_G^T(q):k_G^T(\mathit{end})) \]
but this contradicts the assumption that awaits are never nested.
\end{proof}

We conclude that interval number $q+1$ starts after interval number $q$ ends. Monotonicity then immediately implies that the intervals are pairwise disjoint.
\begin{lemma}\label{lem:iterations disjoint}
\[ \mathit{end}_G^T(q)< \mathit{start}_G^T(q+1) \]
\end{lemma}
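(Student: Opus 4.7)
The plan is to reduce the lemma to the previous Lemma \ref{lem:step forward iteration} via a direct case analysis. By the definition of $\mathit{end}_G^T(q)$, the statement executed at step $\mathit{end}_G^T(q)$ is a do-await-while, i.e., $P_T(k_G^T(\mathit{end}_G^T(q))) = \lstinline[language=Lambda]|await(_, _)|$. By definition of $\mathit{end}_G^T(q+1)$ as the minimum over $t > \mathit{end}_G^T(q)$ with an await, we also have the strict monotonicity $\mathit{end}_G^T(q+1) > \mathit{end}_G^T(q)$.

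Next I would apply Lemma \ref{lem:step forward iteration} to the $(q+1)$-th iteration: for every $t \in [\mathit{start}_G^T(q+1) : \mathit{end}_G^T(q+1))$, the statement $P_T(k_G^T(t))$ is \emph{not} a do-await-while. Since step $\mathit{end}_G^T(q)$ \emph{does} execute a do-await-while, this immediately rules out the membership $\mathit{end}_G^T(q) \in [\mathit{start}_G^T(q+1) : \mathit{end}_G^T(q+1))$.

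Combining the two observations, $\mathit{end}_G^T(q)$ must lie in the complement of $[\mathit{start}_G^T(q+1) : \mathit{end}_G^T(q+1))$ within the integers strictly below $\mathit{end}_G^T(q+1)$. This complement is exactly $\{t : t < \mathit{start}_G^T(q+1)\}$, so $\mathit{end}_G^T(q) < \mathit{start}_G^T(q+1)$, which is the desired conclusion.

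I do not expect any real obstacle here: Lemma \ref{lem:step forward iteration} already did the heavy lifting of showing that no await statement is executed strictly inside an iteration interval, and the present lemma only needs to combine that fact with the minimality clause in the definition of $\mathit{end}_G^T(q+1)$. The only mild care is to avoid the degenerate reading in which $\mathit{end}_G^T(q)$ could coincide with $\mathit{start}_G^T(q+1)$; this is excluded because $\mathit{end}_G^T(q)$ executes an await while Lemma \ref{lem:step forward iteration}, applied to the half-open interval, forbids any await strictly below $\mathit{end}_G^T(q+1)$ except possibly at $\mathit{end}_G^T(q+1)$ itself, yielding the strict inequality.
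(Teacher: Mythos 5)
Your proposal is correct and follows essentially the same route as the paper's own proof: both use the fact that step $\mathit{end}_G^T(q)$ executes a do-await-while, apply Lemma~\ref{lem:step forward iteration} (in contrapositive form) to exclude it from the half-open interval of iteration $q+1$, and combine this with the monotonicity $\mathit{end}_G^T(q) < \mathit{end}_G^T(q+1)$ coming from the minimality in the definition of $\mathit{end}_G^T(q+1)$.
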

\begin{proof}
By definition, step $\mathit{end}_G^T(q)$ executes a do-await-while statement
\[  P_T(k_G^T(\mathit{end}_G^T(q))) = \lstinline[language=Lambda]|await(_, _)| \]
By contraposition of \cref{lem:step forward iteration}, the step is not in interval number $q+1$
\[ \mathit{end}_G^T(q) \not\in [\mathit{start}_G^T(q+1):\mathit{end}_G^T(q+1)) \]
Due to monotonicity interval number $q$ ends before interval number $q+1$
\[ \mathit{end}_G^T(q) < \mathit{end}_G^T(q+1) \]
and the claim follows
\[ \mathit{end}_G^T(q) < \mathit{start}_G^T(q+1) \]
\end{proof}

Iteration $q$ is failed if the loop condition in step $\mathit{end}_G^T(q)$ evaluates to $1$:
\[ \mathit{fail}_G^T(q) \ \iff \ \loopcons(\sigma_G^T(\mathit{end}_G^T(q)))=1 \quad\text{where}\quad  P_T(k_G^T(\mathit{end}_G^T(q))) = \lstinline[language=Lambda]|await(_, $\loopcons$)| \]
We plan to cut all failed iterations from the graph which result in a wasteful graph. These are failed iterations $q$ in which the next failed iteration reads from exactly the same stores.
We define this precisely through a predicate $\mathit{WI}_G^T(q)$:
\begin{align*}
\mathit{WI}_G^T(q) \ \iff \  \mathit{fail}_G^T(q) \ \land \ \forall m \le \mathit{len}_G^T(q).&\ e_G^T(\mathit{start}_G^T(q)+m) = \mathit{R}^-(-) 
  \\ & \ \to \ G.\mathrm{rf}(\event{ T}{ \mathit{start}_G^T(q) + m}{G}) = G.\mathrm{rf}(\event{ T}{ \mathit{start}_G^T(q+1) + m}{G})
\end{align*}
Iff a graph has any such iterations, we say that is wasteful. Formally:
\[ \mathrm{W}(G) \ \iff \ \forall T,\,q.\ \neg \mathit{WI}_G^T(q) \]

\subsection{Formalizing the Bounded-Effect principle}\label{sec:vegas}
The Bounded-Effect principle is now easily formalized. We define $\mathit{BE}(G)$ to hold if in graph $G$ no register reads-from arrow leaves a failed iteration of an await.
For the sake of simplicity we fully forbid generating write events in failed iterations of awaits. This is unlikely to be a practical restriction\footnote{It is possible to construct theoretical examples where this restriction changes the logic of the code, but we are not aware of any practical examples. In any case, the restriction can be lifted with considerable elbow grease.}; as reads-from edges from such writes that leave the iteration are anyways forbidden, this directly only affects loop-internal write-read pairs, which can be simulated using registers.
\begin{definition}
Graph $G$ satisfies the Bounded-Effect principle, \ie,
\( \mathit{BE}(G) \), iff for all threads $T$, failed await iterations $q$ with $\mathit{fail}_G^T(q)$, and step numbers $t \in [\mathit{start}_G^T(q):\mathit{end}_G^T(q))$ we have: 
\begin{enumerate}
\item no write event is generated in step $t$ of thread $T$
\[ e_G^T(t) \not= W^-(-,-) \]
\item if the event generated in step $t$ of thread $T$ is register read-from by the event generated in step $u$ of thread $T$, then $u$ is in the same failed iteration $q$
\[ \event{ T}{t}{G} \overset{\mathrm{rrf}}\longrightarrow \event{T}{u}{G} \quad\to\quad u \in [\mathit{start}_G^T(q):\mathit{end}_G^T(q)) \]
\end{enumerate}
\end{definition}
In the remainder of this text, we assume: all graphs that are consistent with the memory model and with $P$ satisfy the Bounded-Effect principle
\begin{equation} \forall G.\ \mathrm{cons}_M^P(G) \quad\to\quad \mathit{BE}(G) \end{equation}

\subsection{Proving the main theorem}\label{sec:main}
We define the sets $\mathbb G^F$ and $\mathbb G^\infty$ of execution graphs with a finite resp. infinite number of failed await iterations by
\begin{align*}
 \mathbb G^F &= \Set{ G | \mathrm{cons}_M^P(G) \ \land \ \forall T.\ \exists q.\ \forall q' \ge q. \ \neg \mathit{fail}_G^T(q') } 
 \\
 \mathbb G^\infty &= \Set{ G | \mathrm{cons}_M^P(G) \ \land \ \exists T.\ \forall q.\ \exists q' \ge q. \ \mathit{fail}_G^T(q') } 
 \\ &= \Set{ G | \mathrm{cons}_M^P(G)} \setminus \mathbb G^F
\end{align*}
Await-termination holds if $\mathbb G^\infty$ is empty
\[ \mathit{AT} \ \iff \ \mathbb G^\infty = \emptyset \]

We now show a series of lemmas that lead us to the main theorem:
\begin{reptheorem}{thm:amc} \leavevmode
\begin{enumerate}
\item The set $\finitegraphs{} = \Set{G \in \mathbb G^F | \neg \mathrm{W}(G)}$ of non-wasteful execution graphs is finite, and every $G \in \infinitegraphs{} = \Set{ G \in \finitegraphs{} | \mathrm{stagnant}(G) }$ which is stagnant is finite.
\item if an error event $E$ exists in a graph $G \in \mathbb G^F$, it also exists in a graph $G' \in \finitegraphs$
\item every graph $G \in \mathbb G^\infty$ can be cut to a graph $G' \in \infinitegraphs{}$
\item every graph $G \in \infinitegraphs{}$ can be extended to a graph $G' \in \mathbb G^\infty$
\end{enumerate}
\end{reptheorem}

In the proofs we ignore the memory model consistency. This can be easily proven on a case by case basis (\ie, for concrete $M$) but a generic proof must rely on certain abstract features of the memory model which are hard to identify in generic fashion. We leave a generic proof with appropriate conditions as future work.

We first show that after a failed iteration, we return to the startof the await and immediately repeat the iteration (possibly with a different outcome).
\begin{lemma} \label{lem:failed repeat}
\begin{align*}
 \mathit{fail}_G^T(q) \quad\to\quad k_G^T(\mathit{end}_G^T(q)+1) &= k_G^T(\mathit{start}_G^T(q)) 
 \\ {}\land \mathit{start}_G^T(q+1) &=  \mathit{end}_G^T(q)+1
 \\ {}\land \mathit{end}_G^T(q+1) &=  \mathit{end}_G^T(q)+1+n
\end{align*}
\end{lemma}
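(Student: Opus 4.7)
The plan is to establish the three equalities sequentially, using the semantics of \lstinline[language=Lambda]|await| together with \cref{lem:step forward iteration} and the syntactic non-nesting restriction on awaits. Let $n = \mathit{len}_G^T(q)$, and abbreviate $s = \mathit{start}_G^T(q)$, $e = \mathit{end}_G^T(q)$.

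First I would establish the position equality. Since $\mathit{fail}_G^T(q)$ holds, $\loopcons(\sigma_G^T(e)) = 1$, so the semantics of the do-await-while statement gives $k_G^T(e+1) = k_G^T(e) - n$. To rewrite $k_G^T(e)$ in terms of $k_G^T(s)$, observe by \cref{lem:step forward iteration} that every step $t \in [s:e)$ executes an event-generating instruction, and hence advances the position of control by exactly one. A one-line induction on $t - s$ then gives $k_G^T(t) = k_G^T(s) + (t - s)$ for all $t \in [s:e]$, and in particular $k_G^T(e) = k_G^T(s) + n$. Combining, $k_G^T(e+1) = k_G^T(s)$, establishing the first equality.

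Next I would locate $\mathit{end}_G^T(q+1)$. Repeating the same inductive argument forward from step $e+1$, as long as the statements being executed are not awaits, each step advances $k_G^T$ by one. The syntactic non-nesting restriction applied to $k = k_G^T(e)$ gives that for every $k' \in [k-n:k)$, $P_T(k')$ is not an await. Since the positions of control at steps $e+1, e+2, \ldots, e+n$ are precisely $k_G^T(s), k_G^T(s)+1, \ldots, k_G^T(s)+n-1 = k_G^T(e)-1$, none of these steps executes an await, so each advances $k_G^T$ by one. At step $e+1+n$ the position of control is $k_G^T(s)+n = k_G^T(e)$, which is an await position. Since this is the first step after $e$ to execute an await, the minimality in the definition of $\mathit{end}_G^T(q+1)$ yields $\mathit{end}_G^T(q+1) = e+1+n$, which is the third equality.

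Finally, since $k_G^T(\mathit{end}_G^T(q+1)) = k_G^T(e)$ and $P_T$ is the same sequence of statements, the await statement executed at step $\mathit{end}_G^T(q+1)$ is the very same one executed at step $e$, so $\mathit{len}_G^T(q+1) = n$. Hence $\mathit{start}_G^T(q+1) = \mathit{end}_G^T(q+1) - n = e + 1$, giving the second equality. The main obstacle is the minimality argument for $\mathit{end}_G^T(q+1)$: one must rule out any intervening await step strictly between $e$ and $e+1+n$, and the cleanest way to do this is via the syntactic non-nesting constraint, which is exactly what makes the induction carrying $k_G^T$ forward go through without interruption.
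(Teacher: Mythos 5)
Your proof is correct and follows essentially the same route as the paper's: the await semantics gives the backward jump of $n$, \cref{lem:step forward iteration} gives linear advancement within iteration $q$, and the non-nesting restriction carries the forward induction through the repeated statements to identify $\mathit{end}_G^T(q+1)$ by minimality. Your treatment of the minimality step is in fact slightly more explicit than the paper's, which merely observes that "the next $n$ statements are exactly the same (non-await) statements."
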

\begin{proof}
By definition the loop condition $\loopcons$ in step $\mathit{end}_G^T(q)$ is satisfied
\[ P_T(k_G^T(\mathit{end}_G^T(q))) = \lstinline[language=Lambda]|await($n$,$\loopcons$)| \ \land \ \loopcons(\sigma_G^T(\mathit{end}_G^T(q))) = 1 \]
Thus by the semantics of the language, control jumps back $n = \mathit{len}_G^T(q)$ steps
\[ k_G^T(\mathit{end}_G^T(q)+1) = k_G^T(\mathit{end}_G^T(q)) - n \]
By \cref{lem:step forward iteration} the previous $n$ steps all do not execute do-await-while statements and thus moved control forward linearly; the first part of the claim follows
\[ k_G^T(\mathit{end}_G^T(q)) - n = k_G^T(\mathit{end}_G^T(q)-n) = k_G^T(\mathit{start}_G^T(q))\]
We next prove the third part of the claim. Observe that the next $n$ statements are exactly the same (non-await) statements, and thus after an additional $n$ steps we have again
\[ k_G^T(\mathit{end}_G^T(q)+1+n) = k_G^T(\mathit{end}_G^T(q)+1)+n = k_G^T(\mathit{end}_G^T(q)) \]
which is a do-await-while. Hence by the definition of $\mathit{end}_G^T$ we have
\[ \mathit{end}_G^T(q+1) = \mathit{end}_G^T(q)+1+n \] 
which is the third part of the claim. For the remaining second part of the claim, note that the do-await-while still jumps back $n$ statements.
Thus by definition of $\mathit{start}_G^T$ and $\mathit{len}_G^T$ we have
\[ \mathit{start}_G^T(q+1) = \mathit{end}_G^T(q+1)  - \mathit{len}_G^T(q+1) = \mathit{end}_G^T(q)+1+n - n = \mathit{end}_G^T(q)+1 \]
which is the claim.
\end{proof}
If the iteration is wasteful, then the next iteration repeats exactly the same events, positions of control, and states (except that registers that were modified in the first iteration may have changed values)
\begin{lemma} \label{lem:failed progress}
Let $q$ be the index of an iteration that is wasteful, and $m$ be the number of steps taken by the thread inside the iteration (without leaving it)
\[ \mathit{WI}_G^T(q) \ \land \ m \le \mathit{len}_G^T(q) \]
Then all of the following hold:
\begin{enumerate}
\item the same events are generated in iterations $q$ and $q+1$ after $m$ steps \label{lem:failed repeat:same e}
\[ e_G^T(\mathit{start}_G^T(q)+m) = e_G^T(\mathit{start}_G^T(q+1)+m) \]
\item the same values are observed \ldots \label{lem:failed repeat:same v}
\[ v_G^T(\mathit{start}_G^T(q)+m) = v_G^T(\mathit{start}_G^T(q+1)+m) \]
\item the same position of control is reached \ldots \label{lem:failed repeat:same k}
\[ k_G^T(\mathit{start}_G^T(q)+m) = k_G^T(\mathit{start}_G^T(q+1)+m) \]
\item if the value of register $r$ is not the same after $m$ steps in the two iterations \label{lem:failed repeat:same state}
\[ \sigma_G^T(\mathit{start}_G^T(q)+m)(r) = \sigma_G^T(\mathit{start}_G^T(q+1)+m)(r) \]
then $r$ must be an output of one of the steps $u$ of the failed iteration $q$ which is still visible after $m$ steps
\[ \exists u \in [\mathit{start}_G^T(q):\mathit{end}_G^T(q)].\ r \in \mathit{vis}(u,\mathit{start}_G^T(q+1)+m) \]
\end{enumerate}
\end{lemma}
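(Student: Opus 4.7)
The plan is to prove all four items simultaneously by induction on $m$, since they are mutually dependent: positions of control and states together determine events, values then follow from events via $\mathit{WI}_G^T(q)$, and the next state follows from all three. The essential tool throughout is the Bounded-Effect principle: no register reads-from edge may leave iteration $q$. I use this to argue that, although the states $\sigma_G^T(\mathit{start}_G^T(q)+m)$ and $\sigma_G^T(\mathit{start}_G^T(q{+}1)+m)$ may disagree precisely on registers that are visible outputs of steps in iteration $q$ (item 4), the event generator, state transformer, and loop condition of the statement executed at those steps all evaluate identically on both states.

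For the base case $m=0$, item 3 is given by \cref{lem:failed repeat}, which says $k_G^T(\mathit{start}_G^T(q{+}1)) = k_G^T(\mathit{start}_G^T(q))$. Item 4 follows by unfolding \cref{lem:strans sequence} over the steps $[\mathit{start}_G^T(q):\mathit{end}_G^T(q)]$: any register on which the two states differ must have been updated by some step $u$ in this range and not overwritten since, and is therefore in $\mathit{vis}_G^T(u, \mathit{start}_G^T(q{+}1))$. Items 1 and 2 at $m=0$ are immediate consequences of the identical-evaluation sub-claim below, applied to the event generator and combined with $\mathit{WI}_G^T(q)$ for the reads-from edge.

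The heart of the inductive step is the following sub-claim: for any function $f$ among the event generator, state transformer, or loop condition of the statement $P_T(k_G^T(\mathit{start}_G^T(q)+m))$ (which equals the statement at $\mathit{start}_G^T(q{+}1)+m$ by item 3 at $m$), the two states give the same result. Suppose otherwise; let $R$ be the set of registers where the states differ. Then $R$ is nonempty and $f$ depends on $R$. By item 4 at $m$, $R \subseteq \bigcup_{u \in [\mathit{start}_G^T(q):\mathit{end}_G^T(q)]} \mathit{vis}_G^T(u, \mathit{start}_G^T(q{+}1)+m)$, and these $\mathit{vis}$-sets are pairwise disjoint because a register in two of them would need to be both updated and not-overwritten at the later $u$. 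Interpolating the two states one $u$ at a time, some $u$ must witness the change in $f$, so $f$ depends on $\mathit{vis}_G^T(u, \mathit{start}_G^T(q{+}1)+m)$. Since $\mathit{end}_G^T(q)$ executes an await and so contributes no update, we have $u \in [\mathit{start}_G^T(q):\mathit{end}_G^T(q))$, yielding a register reads-from edge $\event{T}{u}{G} \overset{\mathrm{rrf}}\longrightarrow \event{T}{\mathit{start}_G^T(q{+}1)+m}{G}$ that leaves iteration $q$, contradicting the Bounded-Effect principle. Given the sub-claim, item 1 at $m{+}1$ falls out immediately, item 2 at $m{+}1$ combines item 1 with $\mathit{WI}_G^T(q)$ which aligns the rf-edges, item 3 at $m{+}1$ follows because either $k$ simply increments or $\loopcons$ is evaluated identically, and item 4 at $m{+}1$ uses \cref{lem:strans sequence}: the updates at step $m$ coincide, so differences are either newly created at step $m$ (excluded, since updates agree) or are carried-over differences from item 4 at $m$ which propagate with the same witness $u$ since $r$ is not overwritten.

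The main obstacle is the bookkeeping around the existential $\mathit{depends\hy on}$ predicate: dependence on a union of disjoint sets does not a priori decompose into dependence on some specific set, so the interpolation argument using the pairwise disjointness of the $\mathit{vis}_G^T(u,\cdot)$ is what lets us extract the offending rrf edge and trigger Bounded-Effect. A secondary subtlety is cleanly handling the await step $\mathit{end}_G^T(q)$, which formally lies in the range but contributes an empty update and no relevant event, so it can be excluded from the interpolation without affecting item 4.
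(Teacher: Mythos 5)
Your proposal is correct and follows essentially the same route as the paper's proof: the base case of item 3 comes from \cref{lem:failed repeat}, items 3 and 4 are established by joint induction using \cref{lem:strans sequence}, and items 1 and 2 follow because the Bounded-Effect principle rules out any dependence of the event generator, state transformer, or loop condition on the registers where the two states differ (which by item 4 are exactly visible outputs of iteration $q$), with $\mathit{WI}_G^T(q)$ supplying the matching \rf-edges for item 2. Your interpolation argument decomposing dependence on a union of $\mathit{vis}$-sets into dependence on a single $\mathit{vis}_G^T(u,\cdot)$, and your explicit exclusion of the await step $\mathit{end}_G^T(q)$ from the witnesses, make precise two points the paper's proof passes over silently, but the underlying argument is the same.
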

\begin{proof}
We first show that claims \ref{lem:failed repeat:same e} and \ref{lem:failed repeat:same v} follow from claims \ref{lem:failed repeat:same k} and \ref{lem:failed repeat:same state}.
We know from claim \ref{lem:failed repeat:same k} that the position of control is the same. Thus also the executed statement is the same
\[ P_T(k_G^T(\mathit{start}_G^T(q)+m)) = P_T(k_G^T(\mathit{start}_G^T(q+1)+m)) \]
We split cases on the type of statement executed by the steps; in the case it is a do-await-while we are done as no event or read-result is generated. In the other case, we have
\[ P_T(k_G^T(\mathit{start}_G^T(q)+m)) = P_T(k_G^T(\mathit{start}_G^T(q+1)+m)) = \lstinline[language=Lambda]|step($\egen$, _)| \]
From the Bounded-Effect principle we know that there is no register reads-from from a step $u \in [\mathit{start}_G^T(q):\mathit{end}_G^T(q)]$ of the failed iteration $q$ to step $\mathit{start}_G^T(q+1)+m$ which is outside that iteration (\cref{lem:iterations disjoint})
\[ \event{T}{u}{G} \centernot{\overset{\mathrm{rrf}}\longrightarrow} \event{T}{\mathit{start}_G^T(q+1)+m}{G} \]
and thus in particular $\egen$ does not depend on the visible outputs of step $u$ to that step
\[ \neg \mathit{depends\hy on}(\egen, \mathit{vis}(u, \mathit{start}_G^T(q+1)+m) \]
From claim \ref{lem:failed repeat:same state} we know that the only differences between the two states are on registers which are such visible outputs. Thus with the definition of $\mathit{depends\hy on}$ we know
\[ \egen(\sigma_G^T(\mathit{start}_G^T(q)+m)) = \egen(\sigma_G^T(\mathit{start}_G^T(q+1)+m)) \]
and thus the generated events are the same, which is claim \ref{lem:failed repeat:same e}
\[ e_G^T(\mathit{start}_G^T(q)+m) = e_G^T(\mathit{start}_G^T(q+1)+m) \]
For claim \ref{lem:failed repeat:same v} we only consider read events
\[ e_G^T(\mathit{start}_G^T(q)+m) = R^-(-) \]
We have by assumption that iteration $q$ is wasteful, thus the two events read from the same store
\[ G.\mathrm{rf}(\event{T}{ \mathit{start}_G^T(q)+m}{G}) = G.\mathrm{rf}(\event{T}{ \mathit{start}_G^T(q+1)+m}{G}) \]
and thus read the same value. Claim \ref{lem:failed repeat:same v} immediately follows.

Claims \ref{lem:failed repeat:same state} and \ref{lem:failed repeat:same k} are shown by joint induction on $m$. In the base case $m=0$ claim \ref{lem:failed repeat:same k} follows immediately from \cref{lem:failed repeat}
\[ k_G^T(\mathit{start}_G^T(q+1)) = k_G^T(\mathit{end}_G^T(q)+1) = k_G^T(\mathit{start}_G^T(q)) \]
For claim \ref{lem:failed repeat:same state} observe that every register $r$ which differs before and after iteration $q$
\[ \sigma_G^T(\mathit{start}_G^T(q))(r) \not= \sigma_G^T(\mathit{end}_G^T(q)+1)(r)\]
must have been modified by some $u \in [\mathit{start}_G^T(q):\mathit{end}_G^T(q)]$ in that iteration
\[ r \in \mathit{Dom}(\strans_G^T(u)) \]
W.l.o.g. $u$ is the last such write, in which case the effect is still visible to step $\mathit{end}_G^T(q)+1$
\[ r \in \mathit{vis}_G^T(u,\mathit{end}_G^T(q)+1) \]
and the claim follows as by \cref{lem:failed repeat}, step $\mathit{end}_G^T(q)+1$ is the start of iteration $q+1$
\[ \mathit{end}_G^T(q)+1 = \mathit{start}_G^T(q+1)\]

In the induction step $m \to m+1$, we know by the induction hypothesis that the position of control is the same after $m$ steps in the respective iteration and that the states are the same (modulo visible outputs). 
As we have shown before, this implies that the read result is also the same
\[ v_G^T(\mathit{start}_G^T(q)+m) = v_G^T(\mathit{start}_G^T(q+1)+m) \]
Analogous to the proof that $\egen$ produces the same event due to the Bounded-Effect principle, one can also conclude that the new position of control must be the same (\ie, claim \ref{lem:failed repeat:same k} holds)
\[ k_G^T(\mathit{start}_G^T(q)+m+1) = k_G^T(\mathit{start}_G^T(q+1)+m+1) \]
and that the state transformer is the same (which apart from the state also depends on the read result)
\[ \strans_G^T(\mathit{start}_G^T(q)+m+1) =\strans_G^T(\mathit{start}_G^T(q+1)+m+1) \]
Assume for the sake of showing the only remaining claim that register $r$ has a different value in the new states
\[ \sigma_G^T(\mathit{start}_G^T(q)+m+1)(r) \not= \sigma_G^T(\mathit{start}_G^T(q+1)+m+1)(r) \]
By \cref{lem:strans sequence} the functions $\strans_G^T$ determine the state change; since the states were updated in the same way, $r$ can not have been updated
\[ r \not\in \mathit{Dom}(\strans_G^T(\mathit{start}_G^T(q)+m+1)) \]
and the difference was already present before the step
\[ \sigma_G^T(\mathit{start}_G^T(q)+m)(r) \not= \sigma_G^T(\mathit{start}_G^T(q+1)+m)(r) \]
By the induction hypothesis this implies that $r$ was a visible output of some step $u$ from iteration $q$ to the previous step
\[ \exists  u \in [\mathit{start}_G^T(q):\mathit{end}_G^T(q)] . \  r \in \mathit{vis}_G^T(u,\mathit{start}_G^T(q+1)+m) \]
and since it is not updated in this step, it is still visible to the next step
\[ \exists  u \in [\mathit{start}_G^T(q):\mathit{end}_G^T(q)] . \  r \in \mathit{vis}_G^T(u,\mathit{start}_G^T(q+1)+m+1) \]
which is the claim.
\end{proof}

Next we show that such a failed iteration can be safely removed. For this we define a deletion operation $G-(T,q)$ which deletes the $q$-th iteration of thread $T$ from the graph. We only define this in case the $q$-th iteration failed. 
In this case by the Bounded-Effect principle there are no write events that are deleted, and thus we do not have to pay attention to deleting writes that are referenced by other reads.
Events of other threads are not affected at all. Neither are events generated before the start of the deleted iteration. For events started after the deleted iteration we simply reduce the event index by the number of events in the deleted iteration ($\mathit{len}_G^T(q)+1$)
\begin{align*}
(G-(T,q)).\mathrm{E}_U &= G.\mathrm{E}_U \quad \text{if} \quad U \not= T
\\
(G-(T,q)).\mathrm{E}_T &= \Set{ \langle T,\ t,\ e \rangle \in G.\mathrm{E}_T | t < \mathit{start}_G^T(q) } \cup \Set{ \langle T,\ t-(\mathit{len}_G^T(q)+1),\ e \rangle \in G.\mathrm{E}_T | t > \mathit{end}_G^T(q) }
\end{align*}
This can also be defined by means of a partial, invertible renaming function
\[ r : G.\mathrm{E} \rightharpoonup (G-(T,q)).\mathrm{E} \]
which maps each non-deleted event to its renamed event in $G-(T,q)$:
\[ r(\langle U,\ t,\ e \rangle ) = \begin{cases} \langle U,\ t,\ e \rangle  & U \not= T \lor t < \mathit{start}_G^T(q) \\ \langle U,\ t-(\mathit{len}_G^T(q)+1),\ e \rangle & U=T \land t > \mathit{end}_G^T(q) \end{cases} \]
We have:
\[ (G-(T,q)).\mathrm{E} = r(G.\mathrm{E}) \]
For the reads-from relationship, we simply re-map the edges between the renamed events:
\[ (G-(T,q)).\mathrm{rf}(e) = r(G.\mathrm{rf}(r^{-1}(e))) \]

We show that this graph still is consistent with the program.
\begin{lemma} \label{lem:cons preserved}
\[ \mathrm{cons}^P(G) \ \land \ \mathit{fail}_G^T(q) \quad\to\quad \mathrm{cons}^P(G-(T,q)) \]
\end{lemma}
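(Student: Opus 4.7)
The plan is to show that executing each thread against $G-(T,q)$ produces exactly the event set $(G-(T,q)).\mathrm{E}$, which is what $\mathrm{cons}^P$ requires. I would split this into two parts: (a) other threads $U \neq T$ are entirely unaffected, and (b) thread $T$'s execution against $G-(T,q)$ matches its execution against $G$ up to the renaming $r$. Part (a) follows because the events of $U$ in $(G-(T,q)).\mathrm{E}_U$ are identical to those in $G.\mathrm{E}_U$, and their incoming $\rf$-edges point to writes outside iteration $q$ (by the first clause of $\mathit{BE}(G)$, iteration $q$ contains no write events, so no such $\rf$-edge is broken by the deletion). Hence $U$'s step-by-step execution against $G-(T,q)$ coincides with that against $G$.

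For part (b), I would argue by induction on step number $t$ in two phases. \emph{Phase 1:} For all $t \le \mathit{start}_G^T(q)$, the executions of $T$ in $G$ and $G-(T,q)$ agree exactly, because the renamed events, positions of control, states, and $\rf$-edges restricted to indices $< \mathit{start}_G^T(q)$ are all untouched. \emph{Phase 2:} For $m \ge 0$, I would prove simultaneously that (i) $k_{G-(T,q)}^T(\mathit{start}_G^T(q)+m) = k_G^T(\mathit{start}_G^T(q+1)+m)$, (ii) the generated events agree, and (iii) the states $\sigma_{G-(T,q)}^T(\mathit{start}_G^T(q)+m)$ and $\sigma_G^T(\mathit{start}_G^T(q+1)+m)$ differ only on registers $r$ that are still-visible outputs of some step $u \in [\mathit{start}_G^T(q):\mathit{end}_G^T(q)]$ in $G$. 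The base case $m=0$ is exactly Lemma~\ref{lem:failed progress} at $m=0$ combined with Phase 1 and the identity $\mathit{start}_G^T(q+1) = \mathit{end}_G^T(q)+1$ from Lemma~\ref{lem:failed repeat}, which lines up the deleted iteration's successor with the first post-deletion step.

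The inductive step of Phase 2 is where the Bounded-Effect principle does the real work: by the second clause of $\mathit{BE}(G)$, no register reads-from edge leaves iteration $q$, so the event generator, state transformer, and loop condition at step $\mathit{start}_G^T(q+1)+m$ in $G$ cannot depend on any poisoned register. By the definition of $\mathit{depends\hy on}$, they therefore produce identical events and identical updates in both executions. The generated event agrees, and since the read-from relation is preserved under the renaming $r$, the read result agrees too; thus the new positions of control match and the new states still differ only on (a possibly smaller set of) poisoned outputs, re-establishing the invariant. This extends the argument beyond iteration $q+1$ — the key extension past what Lemma~\ref{lem:failed progress} proves — and continues until thread $T$ terminates or runs out of events.

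Putting the two phases together, the map $r$ is a bijection between the events generated by $T$ in $G$ and those generated by $T$ in $G-(T,q)$, and it matches the event-set renaming used to define $G-(T,q)$. Combining with part (a) yields $(G-(T,q)).\mathrm{E} = \{\event{T'}{t}{G-(T,q)} \mid T' \in \mathcal{T},\, t < N_{G-(T,q)}^{T'}\}$, which is $\mathrm{cons}^P(G-(T,q))$. The main obstacle I anticipate is the careful bookkeeping in Phase 2's inductive invariant — in particular, showing that the set of ``poisoned'' registers can only shrink (never grow) as subsequent steps overwrite them, and that Bounded-Effect genuinely suffices to rule out any dependence on them at every subsequent step, not just within iteration $q+1$ as in Lemma~\ref{lem:failed progress}.
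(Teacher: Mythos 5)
Your proposal is correct and follows essentially the same route as the paper: the paper also splits into unaffected threads $U \neq T$ versus thread $T$, and its auxiliary claim (that for $t \ge \mathit{start}_G^T(q)$ the position of control, event, and read result in $G-(T,q)$ at step $t$ equal those of $G$ at step $t+\mathit{len}_G^T(q)+1$, with register differences confined to still-visible outputs of deleted steps) is exactly your Phase~2 invariant, proved by the same Bounded-Effect-based induction the paper describes as ``analogous to Lemma~\ref{lem:failed progress}.'' Your explicit remark that clause~1 of $\mathit{BE}(G)$ guarantees no $\rf$-edges into other threads are broken is a detail the paper leaves implicit, but it does not change the structure of the argument.
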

\begin{proof}
For threads other than $T$ there is nothing to show as the event sequences and \rf-edges are fully unchanged.
For $T$, we focus on the steps after the deletion, which may be affected by the change in registers.
We will show: any changes to registers after step $\mathit{start}_G^T(q)$ were visible changes of a register by one of the deleted steps. Other things have not changed.
Thus any dependence on these changed registers would imply a register-read-from relation in the original graph $G$, which is forbidden by the Bounded-Effect principle.
For the sake of brevity we define
\[ G' = G-(T,q) \]
\begin{lemma} \label{lem:deleted:aux}
If $t$ is a step behind the deleted parts in the new graph,
\[ t \ge \mathit{start}_G^T(q) \]
then both of the following hold:
\begin{enumerate}
\item position of control in, event generated by, and read result seen in step $t$ are unaffected by the deletion (relative to the original values of step $t+\mathit{len}_G^T(q)+1$)
\[ k_{G'}^T(t) = k_G^T(t+\mathit{len}_G^T(q)+1) \ \land \ e_{G'}^T(t) = e_G^T(t+\mathit{len}_G^T(q)+1) \ \land \ v_{G'}^T(t) = v_G^T(t+\mathit{len}_G^T(q)+1) \]
\item if $r$ is a register whose value was changed by the deletion
\[ \sigma_{G'}^T(t)(r) \not= \sigma_G^T(t+\mathit{len}_G^T(q)+1)(r) \]
then there is a step $u$ in the original graph which still has a visible effect on $r$ 
\[ \exists u \in [\mathrm{start}_G^T(q):\mathrm{end}_G^T(q)].\ r \in \mathit{vis}_G^T(u,t+\mathit{len}_G^T(q)+1) \]
\end{enumerate}
\end{lemma}
\begin{proof}
The proof is analogous to the proof of \cref{lem:failed progress} and omitted.
\end{proof}
Now to show that $\mathrm{cons}^P$ is preserved we simply consider the sets of events of the individual threads $U \in \mathcal T$ and show that they are not affected:
\begin{equation} \forall U. \ G'.\mathrm{E}_U = \Set{ \event{U}{t}{G'} | t < N_{G'}^U  } \label{lem:deleted:same events} \end{equation}
We split cases on $U \not= T$. For threads $U\not=T$ other than $T$, nothing has changed and the claim follows from the consistency of $G$
\begin{align}
 G'.\mathrm{E}_U &= G'.\mathrm{E}_U \nonumber
 \\ &= \Set{ \event{U}{t}{G} | t < N_{G}^U  } \nonumber
\\ &= \Set{ \event{U}{t}{G'} | t < N_{G'}^U  } \label{lem:deleted:U unchanged}
\end{align}
For thread $T$, we split the set into those events generated before the cut (which have not changed)
\begin{equation} \Set{ \event{T}{t}{G} | t < \mathit{start}_{G}^T(q) } = \Set{ \event{T}{t}{G'} | t < \mathit{start}_{G}^T(q) } \label{lem:deleted:prev unchanged} \end{equation}
and the events generated after the cut, for which \cref{lem:deleted:aux} shows that only indices have changed:
\begin{align}
&\hphantom{{}={}} \Set{ \langle T, \ t-(\mathit{len}_G^T(q)+1), \ e_{G}^T(t) \rangle | t > \mathit{end}_G^T(q) \land t < N_{G}^T } \nonumber
\\ &= \Set{ \langle T, \ t-(\mathit{len}_G^T(q)+1), \ e_{G}^T(t) \rangle | t > \mathit{start}_G^T(q)+(\mathit{len}_G^T(q)+1) \land t < N_{G}^T } \nonumber
\\ &= \Set{ \langle T, \ t-(\mathit{len}_G^T(q)+1), \ e_{G}^T(t) \rangle | t-(\mathit{len}_G^T(q)+1) > \mathit{start}_G^T(q) \land t < N_{G}^T } \nonumber
\\ &= \Set{ \langle T, \ t, \ e_{G}^T(t+(\mathit{len}_G^T(q)+1)) \rangle | t > \mathit{start}_G^T(q) \land t+(\mathit{len}_G^T(q)+1) < N_{G}^T } \nonumber & \text{rebase $t$}
\\ &= \Set{ \langle T, \ t, \ e_{G'}^T(t) \rangle | t > \mathit{start}_G^T(q) \land t < N_{G}^T-(\mathit{len}_G^T(q)+1) } \nonumber & \text{L \ref{lem:deleted:aux}}
\\ &= \Set{ \event{T}{t}{G'} | t > \mathit{start}_G^T(q) \land t < N_{G'}^T } \label{lem:deleted:succ unchanged}
\end{align}
Jointly with \cref{lem:deleted:prev unchanged} this proves \cref{lem:deleted:same events} for $U:=T$:
\begin{align*}
 G'.\mathrm{E}_T &= r(G.\mathrm{E}_T)
 \\ &= \Set{ \event{T}{t}{G} | t < \mathit{start}_{G}^T(q) } \cup \Set{ \langle T, \ t-(\mathit{len}_G^T(q)+1), \ e_{G}^T(t) \rangle | t > \mathit{end}_G^T(q) \land t < N_{G}^T }
 \\ &= \Set{ \event{T}{t}{G'} | t < \mathit{start}_{G}^T(q) } \cup \Set{ \event{T}{t}{G'} \rangle | t \ge \mathit{start}_G^T(q) \land t < N_{G'}^T} & \text{E \eqref{lem:deleted:prev unchanged}, \eqref{lem:deleted:succ unchanged}}
 \\ &= \Set{ \event{T}{t}{G'} | t < N_{G'}^T }
\end{align*}
Together with \cref{lem:deleted:U unchanged} this shows \cref{lem:deleted:same events}. By \cref{lem:deleted:same events} we conclude that the union over all threads $U$ of events in $G'.\mathrm{E}_U$ is equal to the events generated by all threads:
\[ \bigcup_{U \in \mathcal T} G'.\mathrm{E}_U \ = \ \bigcup_{U \in \mathcal T}  \Set{ \event{U}{ t}{G'} \rangle | t < N_{G'}^T } \]
It immediately follows that the set of all events in $G'$ is equal to the set of events generated by all threads
\[ G'.\mathrm{E} = \Set{ \event{U}{ t}{G'} \rangle | U \in \mathcal T,\ t < N_{G'}^T } \]
which is the definition of $\mathrm{cons}^P(G')$, \ie, the claim.
\end{proof}

Next we iteratively eliminate all wasteful iterations of awaits. This takes us from any graph $G \in \mathbb G^F$ to a graph $G' \in \finitegraphs{}$ but preserves at least some error events.
\begin{lemma} \label{lem:cut finite}
\[ G \in \mathbb G^F \ \land \ \langle -, -, E \rangle \in G.\mathrm{E} \quad\to\quad \exists G' \in \finitegraphs{}. \ \langle -, -, E \rangle \in G'.\mathrm{E} \]
\end{lemma}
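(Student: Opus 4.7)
The plan is to prove \cref{lem:cut finite} by strong induction on the total number of failed await iterations in $G$, summed across all threads. This quantity is well-defined and finite because $G \in \mathbb{G}^F$ guarantees each thread has only finitely many failed iterations and $\mathcal T$ is finite, and it gives a well-founded measure since every deletion of a wasteful iteration will strictly decrease it.

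For the base case where $G$ has no failed iterations, no iteration can be wasteful (since $\mathit{WI}_G^T(q)$ requires $\mathit{fail}_G^T(q)$), so $\neg \mathrm{W}(G)$ holds vacuously and we can take $G' = G$. For the inductive step, if $\neg \mathrm{W}(G)$ then again $G' = G \in \finitegraphs{}$ already contains $E$. Otherwise pick any $T, q$ with $\mathit{WI}_G^T(q)$ and let $G' = G - (T,q)$. \Cref{lem:cons preserved} gives $\mathrm{cons}^P(G')$. Two sub-claims remain: (i) $G' \in \mathbb{G}^F$ has strictly fewer failed iterations than $G$, licensing the induction hypothesis; (ii) $E$ survives in $G'.\mathrm{E}$.

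For (i), the deletion removes only events indexed by $[\mathit{start}_G^T(q):\mathit{end}_G^T(q)]$ of thread $T$. Using \cref{lem:failed progress} together with the auxiliary \cref{lem:deleted:aux} from the proof of \cref{lem:cons preserved}, the events, positions of control, and read results of every step after the cut coincide with their pre-cut counterparts modulo an index shift, and state differences are confined to registers that the Bounded-Effect principle forbids any later statement (in particular the loop condition of any subsequent await) from depending on. Hence every iteration of $T$ other than $q$, and every iteration of every other thread, retains its failed/successful status. The total failed-iteration count therefore drops by exactly one. For (ii), if the triple $\langle T',t,E\rangle$ in $G.\mathrm E$ with $T' \ne T$ or $t \notin [\mathit{start}_G^T(q):\mathit{end}_G^T(q)]$, then by the definition of $G - (T,q)$ it is retained verbatim (possibly with a shifted index). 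Otherwise $T'=T$ and $t = \mathit{start}_G^T(q)+m$ for some $m$; since $\mathit{WI}_G^T(q)$ holds, \cref{lem:failed progress} part~\ref{lem:failed repeat:same e} gives $e_G^T(\mathit{start}_G^T(q+1)+m) = E$, so the error event reappears at the corresponding step of iteration $q+1$, which lies outside the deleted range (by \cref{lem:iterations disjoint}) and is therefore preserved by the cut. Applying the induction hypothesis to $G'$ yields the desired $G'' \in \finitegraphs{}$ containing $E$.

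The main obstacle is claim (i): arguing that deleting iteration $q$ does not flip the failed/successful status of any iteration of $T$ that comes after it. This requires threading the Bounded-Effect principle through the same register-propagation argument used in \cref{lem:cons preserved} to show that loop-condition evaluations are unchanged, and carefully tracking the re-indexing induced by the renaming $r$ so that the new iteration numbering on $G'$ lines up with the old one on $G$ modulo a shift of $\mathit{len}_G^T(q)+1$.
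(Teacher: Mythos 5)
Your proposal is correct and follows essentially the same route as the paper: repeatedly delete wasteful iterations (using \cref{lem:cons preserved} for consistency and \cref{lem:failed progress} to see that an error event inside a deleted iteration reappears in the following iteration), terminating because $G \in \mathbb G^F$ has only finitely many failed iterations. Your phrasing as a strong induction on the failed-iteration count, with the explicit check that deletion preserves the failed/successful status of the surviving iterations, is a slightly more careful packaging of the paper's explicit finite sequence $G^0, \ldots, G^{I-1}$, but not a different argument.
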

\begin{proof}
We construct a series \( G^{(i)} \) of graphs in which $i$ wasteful iterations have been eliminated from $G$. Since $G \in \mathbb G^F$ there are only finitely many failed iterations we need to remove, thus the sequence is finite.
We begin graph $G$ in which $0$ iterations have been deleted
\[ G^0 = G \]
and then progressively cut one additional arbitrary failed iteration at a time
\[ G^{i+1} = G^i-\epsilon \Set{ (T,q)  | \mathit{WI}_G^T(q) } \]
The last graph $G^{I-1}$ in the sequence with index $I = \lvert G^{(i)} \rvert$ by definition does not have any wasteful iterations
\[ \nexists T,\,q.\ \mathit{WI}_{G^{I-1}}^T(q)\]
and thus is not wasteful
\[ \neg \mathrm{W}(G^{I-1}) \]
By repeated application of \cref{lem:cons preserved} we conclude that all the graphs in the sequence, including $G^{I-1}$, are consistent with the program
\[ \mathrm{cons}^P(G^{I-1}) \]
and thus $G':=G^{I-1}$ is in $\finitegraphs{}$
\[ G^{I-1} \in \finitegraphs{} \]

It now suffices to show that the error event is preserved (although possibly generated by a different step). 
Assume
\[ \langle -, -, E \rangle \in G.\mathrm{E} \]
By definition we only delete events in wasteful iterations. 
By \cref{lem:failed progress} every event in such an iteration is repeated in the next iteration. The events outside the deleted iteration are maintained (cf. \cref{lem:deleted:aux}).
We conclude: the event $E$ is generated in every graph in the sequence, in particular also in the last one
\[ \langle -, -, E \rangle \in G^{I-1}.\mathrm{E} \]
which is the claim.
\end{proof}

This shows that no bugs are missed by AMC. Our next goal is to show that AMC can terminate.
We first show that the number of writes generated in a graph is bounded.
\begin{lemma} \label{lem:bounded W}
\[ \exists b. \ \forall G. \ \mathrm{cons}_M^P(G) \quad\to\quad \lvert \Set{ (T, t) | e_G^T(t) = W^-(-,-), T \in \mathcal T, t < N_G^T} \rvert \le b \]
\end{lemma}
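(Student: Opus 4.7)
The plan is to combine the Bounded-Effect principle (clause~1, forbidding writes in failed iterations) with the syntactic restrictions on awaits (no nesting, no backward jump past $k-n$) to show that each step statement in each thread's program text is responsible for at most one write event in any consistent graph.

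First I would fix
\[ b \ := \ \sum_{T \in \mathcal T} \lvert P_T \rvert, \]
which is finite since $\mathcal T$ is finite and every $P_T$ is a finite sequence of statements. For an arbitrary $G$ with $\mathrm{cons}_M^P(G)$ and each thread $T$, I would partition the set of writing steps $\Set{ t < N_G^T | e_G^T(t) = W^-(-,-)}$ by the position $k_G^T(t) \in \Set{0,\ldots,\lvert P_T \rvert - 1}$ at which they fire. The main claim is that for every position $k$ there is at most one such $t$, from which the bound $\sum_T \lvert P_T \rvert$ follows by summing over threads.

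The proof of the main claim is a case split on whether $k$ lies in the body of some do-await-while statement in $P_T$. In the first case, $k$ is not in any range $[k'-n:k')$ for a statement $P_T(k') = \lstinline[language=Lambda]|await($n$,_)|$; by the syntactic restriction on awaits the only backward jumps of $k_G^T$ are by awaits jumping into their own body, so $k_G^T$ is monotone outside await bodies and therefore visits $k$ at most once. In the second case, $k$ lies inside a unique await body $[k'-n:k')$ (uniqueness follows from the no-nesting condition). Each visit of control to $k$ during a \emph{failed} iteration $q$ of that await generates no write by clause~1 of $\mathit{BE}(G)$, so only the final, successful iteration can produce a write at $k$; and a small auxiliary lemma shows that control passes $k'$ at most once, so the await is entered at most once per thread, yielding at most one successful iteration, hence at most one write at position $k$.

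The main obstacle I expect is the auxiliary lemma asserting that once $k_G^T(t)$ exceeds the position $k'$ of an await (via a successful iteration), it never returns to any position $\le k'$. This is intuitively true from the ``$n \le k'$ and no-nesting'' restriction, but formally requires an induction on $t$ showing that $k_G^T(t+1) \ge k_G^T(t) - \mathit{len}$ only when $P_T(k_G^T(t))$ is an await, and combining this with the fact that the enclosing await jumps back at most into its own body. Once this monotonicity-modulo-await lemma is in hand, both the ``outside any body'' and ``inside a unique body'' cases close immediately, and the desired uniform bound $b$ is obtained.
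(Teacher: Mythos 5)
Your proposal is correct and takes essentially the same approach as the paper: the same bound $b = \sum_{T \in \mathcal T} \lvert P_T \rvert$, justified by showing each program position yields at most one write, using clause~1 of the Bounded-Effect principle together with the no-nesting restriction on awaits. The paper organizes this as a single contradiction argument (a re-executed write-producing statement would have to lie inside a failed iteration, which cannot write), whereas you split into the inside/outside-await-body cases, but the underlying idea is identical.
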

\begin{proof}
The bound is equal to the sum of the program lengths of each thread 
\[ b:= \sum_{T \in \mathcal T} \lvert P_T \rvert \]
The reason for this is that threads only repeat statements in case they fail a loop iteration; but these failed loop iterations by the Bounded-Effect principle do not produce writes.
We show: if step $t$ of thread $T$ generates a write event, statement $k_G^T(t)$ is never executed again.
\begin{equation} e_G^T(t) = W^-(-,-) \quad\to\quad \forall u >t.\ k_G^T(u) > k_G^T(t) \label{lem:bound:T}\end{equation}
Let $u$ w.l.o.g. be the first step after $t$ in which the position of control is at $k_G^T(t)$ or before
\[ k_G^T(u) \le k_G^T(t)  \ \land \ k_G^T(u-1) > k_G^T(t) \]
By the semantics of the language, step $u-1$ must have been a failed await iteration
\[ P_T(k_G^T(u-1)) = \lstinline[language=Lambda]|await($n$,$\loopcons$)| \ \land \ \loopcons(\sigma_G^T(u-1)) = 1 \]
Since there are no nested awaits, all lines between $k_G^T(u)$ and $k_G^T(u-1)$ are not awaits
\[ \forall k \in [k_G^T(u):k_G^T(u-1)). \ P_T(k) \not= \lstinline[language=Lambda]|await(-,-)| \]
Of course $k_G^T(t)$ is in that interval; thus all steps between $k_G^T(t)$ and $K_G^T(u-1)$ are not awaits.
By the semantics of the language, these steps moved control ahead one statement per step. Thus at most $n$ steps have passed between $t$ and $u-1$
\[ (u-1) - t  = k_G^T(u-1) - k_G^T(t) < n \]
Since step $u-1$ ends an iteration $q$ of an await of length $n$
\[ \mathit{end}_G^T(q) = u-1 \ \land \ \mathit{len}_G^T(q) = n \]
it follows that step $t$ is one of the steps in that iteration
\[ t \in [\mathit{start}_G^T(q):\mathit{end}_G^T(q)] \]
Because the loop condition is satisfied, the iteration is a failed iteration
\[ \mathit{fail}_G^T(q) \]
and we conclude from the Bounded-Effect principle: step $t$ does not produce a write
\[ e_G^T(t) \not= W^-(-,-) \]
which is a contradiction. This proves \cref{lem:bound:T}, \ie, each statement can produce at most one write.
Thus the total number of writes produced by thread $T$ is at most the size $\lvert P_T \rvert$ of the program text of thread $T$
\[ \lvert \Set{ t | e_G^T(t) = W^-(-,-),\ t <N_G^T } \rvert \ \le \ \lvert P_T \rvert  \]
The claim follows:
\[ \lvert \Set{ (T, t) | e_G^T(t) = W^-(-,-),\ T \in \mathcal T,\ t < N_G^T} \rvert \ =\  \sum_{T \in \mathcal T}  \lvert \Set{ t | e_G^T(t) = W^-(-,-),\ t <N_G^T } \rvert \ \le \ \sum_{T \in \mathcal T} \lvert P_T \rvert \]
\end{proof}
In a graph which satisfies the progress condition, each iteration of await reads from a different combination of writes. Since the set of writes is bounded, the possible number of combinations is bounded as well. Thus the number of failed iterations is also bounded.
\begin{lemma} \label{lem:bounded F}
\[ \finitegraphs{} \quad\text{is finite} \]
\end{lemma}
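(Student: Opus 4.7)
The plan is to show that every graph $G \in \finitegraphs$ has bounded size, and then observe that there are only finitely many graphs of bounded size drawn from a finite universe of events. Throughout, I will rely on the fact that $\mathit{Register}$, $\mathit{Value}$, $\mathit{Location}$, and $\mathcal T$ are all finite, so both $\mathit{State}$ and the set of possible events are finite.

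First, I would apply \cref{lem:bounded W} to obtain a uniform bound $b$ on the number of write events in any graph consistent with $P$ and $M$. The next step is the crucial one: bounding the number of failed await iterations per thread. For a fixed thread $T$ and await statement, each failed iteration $q$ is characterized by its pre-iteration state $\sigma_G^T(\mathit{start}_G^T(q))$ together with its combination of reads-from, i.e., the map sending each read position $m \in [0, \mathit{len}_G^T(q))$ to the write $G.\mathrm{rf}(\event{T}{\mathit{start}_G^T(q)+m}{G}) \in G.\mathrm{E}_W \cup \{\bot\}$. Since $\mathit{State}$ is finite, the length of any iteration is bounded by the program text, and the set of writes is bounded by $b$, there are only finitely many such \emph{iteration profiles}; call this bound $C$. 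I would then argue that if the number of failed iterations exceeds $C$, pigeonhole forces two iterations $q < q'$ to share a profile, and by determinism of the language (in the spirit of \cref{lem:failed progress}), the subsequent iteration $q+1$ has the same state and combination as $q'+1$, leading to an infinite repeating pattern of failed iterations — contradicting $G \in \mathbb G^F$. Composing this with non-wastefulness ensures consecutive iterations differ and so the pigeonhole collision is nontrivial.

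With the number of failed iterations per await bounded, the Bounded-Length principle gives that each iteration has bounded length and the number of non-await steps is bounded; since the number of awaits executed is itself bounded (each await statement is invoked a bounded number of times because it sits inside a bounded-length enclosing computation), the total number of events $|G.\mathrm E|$ is bounded by some constant $N$ depending only on $P$, $M$, and $b$. Each event is a triple $\langle T, t, e \rangle$ with $T \in \mathcal T$, $t < N$, and $e$ from a finite universe, and the \rf relation is a partial function from reads to writes inside the graph; thus there are only finitely many possibilities for $G.\mathrm E$ and $G.\mathrm{rf}$, which together determine $G$. Hence $\finitegraphs$ is finite.

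The main obstacle will be making the pigeonhole step rigorous: the formal definition of $\mathrm{W}(G)$ only forbids \emph{consecutive} wasteful iterations, whereas the pigeonhole argument requires exploiting \emph{any} repeated profile. The cleanest route is to show that repeated profiles propagate by determinism — once iterations $q$ and $q'$ share a profile, iterations $q+k$ and $q'+k$ do so for all $k$, producing an infinite sequence of failed iterations in $G$ and contradicting membership in $\mathbb G^F$. This propagation argument is close in spirit to \cref{lem:failed progress} but applied at the level of profile equivalence rather than \rf equality, so it will require a small auxiliary lemma analogous to Lemmas \ref{lem:failed repeat} and \ref{lem:failed progress}.
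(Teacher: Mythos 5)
Your overall skeleton matches the paper's: use \cref{lem:bounded W} to bound the number of writes by $b$, bound the number of failed iterations of each await, conclude via the Bounded-Length principle that $\lvert G.\mathrm{E}\rvert$ is uniformly bounded, and then count graphs over a finite event alphabet. The divergence — and the gap — is in the step you yourself flag as the main obstacle. Your pigeonhole-on-profiles argument does not close: if iterations $q<q'$ share a pre-state and an \rf-combination, determinism of the thread-local semantics only gives you that iterations $q+1$ and $q'+1$ \emph{start in the same state}; it does not give you that they have the same \rf-combination, because the \rf-edges of iteration $q+1$ are free data recorded in the fixed graph $G$, not something computed by the language semantics from the profile of iteration $q$. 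So the ``infinite repeating pattern'' never materializes, and indeed without further input the bound is false at this level of generality: a non-wasteful sequence of \rf-combinations merely has to change between \emph{consecutive} iterations, so it could alternate $c_0, c_1, c_0, c_1, \ldots$ forever, giving arbitrarily many failed iterations with only two profiles and no wasteful pair.

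The missing ingredient is coherence, which is exactly what the paper invokes. Because every iteration of an await polls the same locations, coherence (SC-per-location) forbids a po-later read from reading an \mo-earlier write than an earlier read of that location; hence the \rf-combination of an await can only advance monotonically through the \mo{} of each polled location. Non-wastefulness then forces \emph{strict} advancement between consecutive failed iterations, and since there are at most $n$ polled reads per iteration and at most $b$ writes in total (\cref{lem:bounded W}), the await can fail at most $n\cdot b$ times. This is a genuinely load-bearing use of the memory model (slightly at odds with the paper's blanket remark that memory-model consistency is ignored in these proofs), and your proof cannot succeed without it or some substitute for it; the determinism/propagation lemma you propose in place of it would be analogous to \cref{lem:failed progress} only in appearance, since that lemma's hypotheses include equality of the \rf-edges, which is precisely what you would need to prove.
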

\begin{proof}
We show instead that each $G \in \finitegraphs{}$ has bounded length (bounded by some constant $B$). This implies that the graphs are finite as every $G$ can then be encoded as a pair of a sequence of $B$ events and a sequence of $B$ numbers in the range $[0:b-1]$ indicating \rf-edges. Since the set of events is finite, the number of such encodings is finite, and so is $\finitegraphs{}$.

Assume for the sake of contradiction that no such bound exists. Thus graphs in $\finitegraphs{}$ can be arbitrarily large. Since they are consistent with the program, this means that the program execution can become arbitrarily long.
Since there are only finitely many threads, one thread $T$ can be executed for arbitrarily long. Since the program text $P_T$ has finite length and thus only finitely many awaits, there has to be one await that can be made to fail arbitrarily often.
Let the line number of this await be $k$, and let $G^0$, $G^1$, $G^2$, \ldots be graphs in which the await fails zero times, once, twice, etc.
The await jumps back some number $n$ of statements
\[ P_T(k) = \lstinline[language=Lambda]|await($n$,$\loopcons$)| \]
and thus produces at most $n$ reads. By the progress condition, at least one of them must read from a new write in each iteration. Coherence forbids going back to the previous writes. By \cref{lem:bounded W} the number of available writes in each graph $G^i$ is at most $b$.
Thus there can be at most $n \cdot b$ iterations of this loop. Consider $G^{n\cdot b+1}$, \ie, the graph in which one iteration beyond that has been executed.
Note that after $l$ iterations of the await, at most $n \cdot b-l$ writes are still available to read from.
Thus in the final iteration $n \cdot b + 1$, the thread has at most $-1$ write to read from, which is a contradiction. 
This shows that such a bound $B$ must exist and thus $\finitegraphs{}$ is finite.
\end{proof}

This proves our claims about $\mathbb G^F$/$\finitegraphs{}$. Next we consider the graphs in $\infinitegraphs{}$ which indicate await violations.
\begin{lemma} \label{lem:traverse infinite}
Every graph $G \in \infinitegraphs{}$ is finite.
\end{lemma}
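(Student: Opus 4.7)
The plan is simply to invoke the per-graph length bound that is already established as an intermediate step inside the proof of \cref{lem:bounded F}, together with the set-theoretic inclusion $\infinitegraphs \subseteq \finitegraphs$. Recall that by the statement of the main theorem, $\infinitegraphs = \Set{ G \in \finitegraphs | \mathrm{stagnant}(G) }$, so any $G \in \infinitegraphs$ lies in particular in $\finitegraphs$.

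Concretely, I would first unfold this definition to reduce the goal to: every $G \in \finitegraphs$ has finitely many events. Then I would reuse the counting argument from the proof of \cref{lem:bounded F}: for each thread $T$ and each await statement $P_T(k) = \lstinline[language=Lambda]|await($n$,$\loopcons$)|$, the progress/non-wasteful condition forces each failed iteration to read a genuinely new combination of writes, and coherence forbids re-using an older one, so with at most $b$ writes available (by \cref{lem:bounded W}) there can be at most $n \cdot b$ failed iterations of that particular await. Summing over the finitely many await statements in the finite program texts $P_T$ of the finitely many threads, and adding the at-most $\sum_T \lvert P_T \rvert$ steps outside of awaits, yields a global bound $B$ on the number of events in any $G \in \finitegraphs$. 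Since $G$ also has at most one outgoing \rf-edge per read and no further structure, the graph is finite. Applying this to any $G \in \infinitegraphs \subseteq \finitegraphs$ gives the claim.

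The only real obstacle is cosmetic: the current proof of \cref{lem:bounded F} phrases the per-graph length bound $B$ inline on the way to showing that $\finitegraphs$ is a finite set, rather than isolating it as a reusable fact. The cleanest presentation would either (i) factor out a small auxiliary lemma stating $\forall G \in \finitegraphs.\ \lvert G.\mathrm{E} \rvert \le B$ and reuse it here, or (ii) simply observe in one line that the proof of \cref{lem:bounded F} already establishes this per-graph bound and that the inclusion $\infinitegraphs \subseteq \finitegraphs$ transfers it. No new combinatorial content is needed; in particular, the predicate $\mathrm{stagnant}(G)$ plays no role in bounding the size of $G$, it only restricts which graphs in $\finitegraphs$ end up in $\infinitegraphs$.
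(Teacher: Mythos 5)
Your proposal is correct and matches the paper's own (one-line) proof: the paper likewise just observes that $\infinitegraphs{} \subseteq \finitegraphs{}$ by definition and that the per-graph length bound established in the proof of \cref{lem:bounded F} makes every such graph finite. Your added remark that $\mathrm{stagnant}(G)$ plays no role here is accurate.
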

\begin{proof}
Observe that every $G \in \infinitegraphs{}$ is by definition an element of $\finitegraphs{}$ and thus finite.
\end{proof}
We proceed to show that graphs in $\infinitegraphs{}$ always indicate await terminations. This depends on the definition of stagnancy which we have not shown yet. We define it as follows. Let 
\[ \mathcal V_G = \Set{ T \in \mathcal T | k_G^T(N_G^T) < \lvert P_T \rvert }  \]
be the set of threads which have not terminated (yet) in $G$.
$G$ is stagnant iff all of the following are true:
\begin{enumerate}
\item Some threads have not terminated
\[ \mathcal V_G \not= \emptyset \]
\item All of those threads have just completed a failed await loop iteration
\[ \forall T \in V_G.\ N_G^T = \mathit{end}_G^T(q) \ \land \ \mathit{fail}_G^T(q) \]
\item There is no extension $G'$ of $G$ (with $G.X \subseteq G'.X$ for $X\in \Set{ \mathrm{E}, \rf,\mo}$) where any threads in $V_G$ have terminated and threads read only from stores that are already available in $G$
\[ \mathrm{cons}_M^P(G') \land \mathrm{Dom}(G'.\mathrm{rf}) \subseteq G.\mathrm{E} \ \to \ \mathcal V_G = \mathcal V_{G'} \]
\end{enumerate}

%There is at least one thread $T$ which does not complete its program
%\[ k_G^T(N_G^T) < \lvert P_T \rvert \]
%and for every such thread $T$ holds:
%\begin{enumerate}
%\item  thread $T$ completed a failed iteration $q$ in its final step 
%\[ N_G^T = \mathit{end}_G^T(q) \ \land \ \mathit{fail}_G^T(q) \]
%\item completing the next loop iteration of $T$ must invariably lead to a progress violation. In particular, for all $G'$ which complete the loop iteration by providing the missing events
%\[ G'.\mathrm{E}_U = \begin{cases} G.\mathrm{E}_U & U \not= T \\ G.\mathrm{E}_U \cup \Set{ \langle T,\ u,\ e \rangle | u \in [\mathit{end}_G^T(q)+1 : \mathit{end}_G^T(q)+1+\mathit{len}_G^T(q)], \ e \in \mathit{Events} } & U = T \end{cases} \]
%and where all \rf-edges from $G$ are included
%\[ G.\rf \subseteq G'.\rf  \]
%then $G'$ is either inconsistent or violates the progress condition
%\[ \mathrm{cons}_M^P(G') \quad\to\quad \mathrm{failPr}_{G'}^T(q) \]
%\end{enumerate}

\begin{lemma}\label{lem:extend infinite}
\[ \infinitegraphs{} \not=\emptyset \quad\to\quad \mathbb G^\infty \not=\emptyset \]
\end{lemma}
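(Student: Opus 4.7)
Fix $G \in \infinitegraphs{}$; by stagnancy there is some thread $T \in \mathcal V_G$, and by the second stagnancy condition the last step of $T$ in $G$ ended a failed iteration $q$ of an await. By \cref{lem:failed repeat}, control in $T$ is therefore poised to re-enter the loop body, and the next $\mathit{len}_G^T(q)+1$ steps of $T$ would execute another full iteration of the same await. The plan is to iteratively splice such additional failed iterations into $G$, building an ascending chain of graphs $G = G_0 \subseteq G_1 \subseteq G_2 \subseteq \ldots$ each consistent with $P$ (and with $M$, modulo the assumption the paper makes about memory-model compositionality), and then take the union as our witness for $\mathbb G^\infty$.

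For the step from $G_i$ to $G_{i+1}$, I would pick a thread $T_i \in \mathcal V_{G_i}$ and append one more iteration of its await. The key observation is that by the first and second stagnancy conditions applied to $G_i$ (which one first argues is still stagnant by a structural argument — appending only failed iterations using existing stores preserves the ``no extension terminates anyone'' property), such a thread exists and is poised to re-enter the loop. For every read in the new iteration I pick some write already present in $G_i.\mathrm E$ to supply its value; such a choice exists precisely because the consistency predicate $\mathrm{cons}_M$ is assumed to admit at least one such extension (this is the compositionality hook the paper defers to concrete $M$). The third stagnancy condition then forces the new iteration to be failed: if it were successful, $T_i$ would leave $\mathcal V$ in $G_{i+1}$, contradicting $\mathcal V_{G_i} = \mathcal V_{G_{i+1}}$. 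Hence $G_{i+1} \in \infinitegraphs{}$ and the induction continues.

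Taking $G^* = \bigcup_{i \in \mathbb N} G_i$, I would verify $G^* \in \mathbb G^\infty$ as follows. For each thread $U$, the sequences $e_{G^*}^U$, $k_{G^*}^U$, $\sigma_{G^*}^U$ are determined pointwise by any $G_i$ that already contains step $t$ of $U$, and this is independent of the choice of $i$ because the $G_i$ are extensions of one another; so $\mathrm{cons}^P(G^*)$ reduces to $\mathrm{cons}^P(G_i)$ in the limit. By construction $T$ executes infinitely many iterations of the stuck await, each failing, which places $G^*$ in $\mathbb G^\infty$.

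\emph{Main obstacle.} The delicate point is that $\finitegraphs{}$ and $\infinitegraphs{}$ are defined with respect to $\mathrm{cons}_M^P$, so each $G_{i+1}$ must be shown \emph{memory-model} consistent, not only program consistent. The stagnancy condition gives a contrapositive shape (``if there \emph{is} a consistent extension using only old stores, nobody terminates''), but does not by itself guarantee such an extension exists. Bridging this gap requires a compositionality property of $M$ — essentially that a consistent graph can be extended by one more program step reading from any coherence-maximal available write — which, as the paper already acknowledges for the other parts of the theorem, is best left to a case-by-case argument for concrete memory models such as IMM. Once this is granted, the construction above and the limit argument go through as sketched.
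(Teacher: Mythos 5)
Your proposal is correct in substance and follows essentially the same route as the paper: both take the stagnant graph $G$, observe that every non-terminated thread has just completed a failed await iteration, and extend $G$ with infinitely many further failed iterations whose reads are served by writes already present in $G$, deferring memory-model consistency to a per-model argument exactly as the paper does. The only real difference is packaging: the paper splices in all infinitely many iterations in one shot (directing every new read to the $\mo$-maximal available write and invoking \cref{lem:failed progress} to see that the replicated events are program-consistent), whereas you build an ascending chain one iteration at a time and pass to the union; your limit argument is fine because $\mathrm{cons}^P$ is determined pointwise along each thread's step sequence. One small correction: your intermediate graphs $G_{i+1}$ are \emph{not} in $\infinitegraphs{}$, since appending a failed iteration that reads from the same writes as its predecessor makes the graph wasteful ($\mathrm{W}(G_{i+1})$ holds), and $\infinitegraphs{}\subseteq\finitegraphs{}$ excludes wasteful graphs. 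This does not hurt the argument --- the target set $\mathbb G^\infty$ does not exclude wasteful graphs --- but the induction invariant should be stated as ``$G_i$ is consistent with $P$ and every thread of $\mathcal V_G$ is still stuck at the end of a failed iteration'' rather than $G_i\in\infinitegraphs{}$.
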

\begin{proof}
Assume that $\infinitegraphs{}$ is non-empty. Thus there is some stagnant graph $G \in \infinitegraphs{}$ 
\[ \mathrm{stagnant}(G) \]
The set of threads that have not completed their program in $G$
\[ \mathcal V_G = \Set{ T \in \mathcal T | k_G^T(N_G^T) < \lvert P_T \rvert } \]
is by definition of stagnancy non-empty.

By definition of stagnancy, each of these threads in $\mathcal V_G$ must be in the $q_T+1$-th iteration of an await, and 2) iteration $q_T$ of the same await failed, and 3) there are no other writes to read from that would result in the thread terminating.
We extend the graph to an infinite graph $G'$ by adding failed await loop iterations in which each load reads from the \mo-maximal store to its location.
Let for $T \in \mathcal V_G$ the index of the previous (failed) iteration of the await be $q_T$
\[ \mathit{fail}_G^T(q_T) \ \land \ N_G^T = \mathit{end}_G^T(q_T) \]

We add for $T \in \mathcal V_G$ events for an infinite number of iterations, numbering each with an index $n \in \mathbb N$, and each step inside that iteration with an index $m \le \mathit{len}_G^T(q)$, replicating the same events over and over
\[ G'.\mathrm{E}_T = G.\mathrm{E} \cup \Set{  \langle T,\ N_G^T+n \cdot \mathit{len}_G^T(q) + m,\ e_G^T(\mathit{start}_G^T(q)+m) \rangle | n \in \mathbb N, \ m \le \mathit{len}_G^T(q) } \]
The newly added read events will read from the \mo-maximal stores to their locations. Let 
\[ e = \langle T,\ N_G^T+n \cdot \mathit{len}_G^T(q) + m,\ e_G^T(\mathit{start}_G^T(q)+m) \rangle \]
 be such a read event. We define
\[ G'.\rf(e) = \max_{G.\mo} G.\mathrm{W}_{\mathrm{loc}(e)} \]
Other than that we change nothing.
Obviously this repetition results in a wasteful execution. 
With \cref{lem:failed progress} it is easy to show that all of these new loop iterations are consistent with the program and are themselves failed wasteful iterations
\[ \mathrm{cons}_M^P(G') \]
This graph has infinitely many failed iterations of awaits and is thus in $\mathbb G^\infty$, which is the claim
\[ G' \in \mathbb G^\infty\]
\end{proof}

It only remains to show that every graph $G \in \mathbb G^\infty$ can be cut to a graph in $\infinitegraphs{}$.
\begin{lemma} \label{lem:cut infinite}
\[ \mathbb G^\infty \not= \emptyset \quad\to\quad  \infinitegraphs{} \not= \emptyset \]
\end{lemma}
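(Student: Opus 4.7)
The plan is to begin with an arbitrary $G \in \mathbb G^\infty$ and surgically reduce it to a finite non-wasteful graph $G' \in \infinitegraphs$ while preserving the ``stuck'' nature of the non-terminating threads. The two workhorses will be Lemma \ref{lem:bounded W}, which guarantees a finite bound on the number of writes in any consistent graph, and the iterative wasteful-iteration removal procedure from the proof of Lemma \ref{lem:cut finite}, which preserves both consistency with the program and the set of writes (cut iterations are failed, hence write-free by the Bounded-Effect principle).

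First I would isolate the set $\mathcal S$ of threads that loop forever in $G$, which is non-empty by the definition of $\mathbb G^\infty$. Because each program text $P_T$ is finite and contains finitely many awaits, every $T \in \mathcal S$ must eventually stay inside one fixed await. By Lemma \ref{lem:bounded W}, all writes of $G$ appear before some finite time $t^W$, so for each $T \in \mathcal S$ I can pick an iteration index $q_T$ with $\mathit{end}_G^T(q_T) \ge t^W$ after which $T$'s reads only target writes from the now-frozen pool. I then form $G^0$ by keeping every event of every $T \notin \mathcal S$ and truncating each $T \in \mathcal S$ at $\mathit{end}_G^T(q_T)$. An argument analogous to Lemma \ref{lem:cons preserved} (applied to suffix truncation rather than middle excision) shows that $\mathrm{cons}^P(G^0)$ holds; the Bounded-Effect principle is essential here to guarantee that no rf-edge from the deleted suffix is referenced outside it. Finally, applying the wasteful-iteration trimming of Lemma \ref{lem:cut finite} to $G^0$ yields $G' \in \finitegraphs$ whose write set coincides with $G$'s and in which every $T \in \mathcal S$ still ends on a failed iteration at $\mathit{end}_G^T(q_T)$, since the last surviving iteration cannot itself be wasteful (wastefulness demands a subsequent iteration to compare against). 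This immediately delivers the first two stagnancy conditions: $\mathcal V_{G'} \supseteq \mathcal S \ne \emptyset$, and every thread in $\mathcal V_{G'}$ has just completed a failed await iteration.

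The main obstacle is stagnancy condition (3): that no consistent extension $G''$ of $G'$ with $\mathrm{Dom}(G''.\mathrm{rf}) \subseteq G'.\mathrm{E}$ lets any $T \in \mathcal S$ terminate. The intuitive argument is that such a terminating extension would exhibit a combination of reads from the frozen write-pool making $T$'s loop condition false on a subsequent iteration, but $G$'s infinite tail of failed iterations forbids this via a pigeonhole argument on the finite space of start-states and read-combinations, combined with the Bounded-Effect principle to lift register values across trimmed iterations so the loop condition evaluated in $G''$ agrees with the one evaluated in $G$ at a matching witness iteration. Making this rigorous requires an induction tying every hypothetical exit in $G''$ to an actual witness iteration in $G$'s tail, and as the authors note at the top of Section \ref{sec:main}, it also requires coherence-style assumptions on the memory model to rule out alternative rf and mo choices that would let $G''$'s reads diverge from any actually-witnessed combination. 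I expect this to be the technical crux, and would discharge it per memory model (in particular for IMM as used by \sys{}), taking the same pragmatic stance the authors adopt for the other lemmas in this section.
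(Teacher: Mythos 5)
Your proposal follows essentially the same route as the paper's proof: use \cref{lem:bounded W} to argue the write pool freezes, truncate each non-terminating thread at the end of a failed iteration past that point, remove the remaining wasteful iterations via the machinery of \cref{lem:cons preserved}/\cref{lem:cut finite}, and then verify the three stagnancy conditions, deferring the memory-model-dependent part of condition (3) exactly as the paper does. If anything, you are slightly more careful than the paper on two points it glosses over — that suffix truncation (as opposed to middle excision) still preserves $\mathrm{cons}^P$, and that the final surviving iteration cannot itself be wasteful because wastefulness is defined by comparison with a successor iteration — so there is no gap to report.
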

\begin{proof}
Assume that there are graphs which violate await termination and let $G \in \mathbb G^\infty$ be such a graph.
Analogous to how we showed in \cref{lem:bounded F} that there is some thread $T$ which executes some await in line $k$ an arbitrary number of times, we can show that some thread executes infinitely many steps. Let $\mathcal V$ be the (non-empty) set of these threads
\[ \mathcal V = \Set{ T | N_G^T = \infty } \]
Each of these threads $T \in \mathcal V$ executes some await in line $k_T$ infinitely often
\[ P_T(k_T) = \lstinline[language=Lambda]|await($n$,$\kappa$)|  \ \land \ \forall t.\ \exists u \ge t.\ k_G^T(u) = k_T \]
By \cref{lem:bounded W} the number of writes is bounded, thus (as also shown in \cref{lem:bounded F}) this loop have only finitely many non-wasteful iterations. Starting from some iteration $q_T$, the writes observed by the await never change; iteration $q_T$ and all subsequent iterations are wasteful.
\[ \forall q' \ge q_T. \ \mathit{WI}_G^T(q') \]
Due to fairness, this is only allowed if there is no possibility for the reads to read anything else: otherwise, eventually one of the subsequent iterations would need to read from one of the other available writes and hence not be a wasteful iteration. This allows us to construct a graph $G'\in \infinitegraphs{}$ which stagnates.
We generate $G'$ by cutting down all failed wasteful iterations of threads in $\mathcal T$. Assume w.l.o.g. that the iteration $q_T$ is the first wasteful iteration of thread $T$ (the finitely many preceding wasteful iterations can be deleted iteratively with \cref{lem:deleted:aux})
\[ \neg \mathit{WI}_G^T(q'-1) \]
We define:
\[ G'.\mathrm{E}_T = \Set{ \langle T,\ t,\ e \rangle \in G.\mathrm{E}_T | t \le \mathit{end}_G^T(q_T) } \]
This ensures all conditions of $\mathit{stagnant}$: these threads have not terminated, by definition they just finished a failed await loop iterationevent, and the only available writes force the thread to stay in the loop indefinitely.
Thus we have
\[ \mathrm{stagnant}(G') \]
Furthermore, the graph is still consistent with the program because the beginning of the thread-local execution is exactly the same as for $G$
\[ \mathrm{cons}^P(G') \]
Since all wasteful iterations have been deleted, the graph $G'$ is not wasteful
\[ \neg \mathrm{W}(G') \]
and can thus in the set of graphs $\infinitegraphs{}$ that are searched by AMC
\[ G' \in \infinitegraphs{} \]
which proves the claim.
\end{proof}
The main theorem (\cref{thm:amc}) follows from \cref{lem:cut infinite,lem:cut finite,lem:extend infinite,lem:traverse infinite}.

\pagebreak
\section{Study Cases}
\label{s:cases}
% What the cases show:
%1- how difficult it is for even an expert find the right optimization
%2- how dangerous the porting from x86 to ARM is without fully understanding WMMs
%3- how \sys can help and what are the tricks to apply

In this section, we discuss in detail three study cases:
a bug in the MCS lock of the DPDK library, a bug in the MCS lock of an internal Huawei product, and a comparison of expert-optimization and \sys-optimization of the Linux qspinlock.
We report about bugs found with \sys as well as limitations.
%Adding --no-fence (the problem with the fences)

\subsection{DPDK MCS lock}
\label{s:dpdk}

The Data Plane Development Kit (DPDK\footnote{\url{https://github.com/DPDK/dpdk}}) is a popular set of libraries used to
develop packet processing software in user space.
\sys found a bug in the MCS lock of the current DPDK version (v20.05).
\Cref{fig:dpdk-mcs} shows the part of the implementation that concern us.
At the end of the code, we added the bug scenario in which two threads, Alice and Bob,
are involved.
Alice wants to acquire the lock (see \verb|run_alice()| function), and Bob currently
holds the lock and is about to release it (see \verb|run_bob()|).
Note that we removed the slowpath of \verb|rte_mcslock_unlock()| since the bug
only occurs in the fastpath.
The core of the bug is a missing \rel\ barrier before or at Line~\ref{ln:dpdk-bug}, which
causes Alice to hang and never enter the critical section.

\begin{figure}[ht]
%typedef struct node {
%	struct node *next;
%	int locked;
%} node_t;

\begin{multicols}{2}
    %\begin{lstlisting}[style=compact, language=C, numbers=left, escapechar=|,
    %numberstyle=\tiny, numbersep=5pt,xleftmargin=1.5em,framexleftmargin=4pt]
    \begin{lstlisting}[style=casecode]
/* SPDX-License-Identifier: BSD-3-Clause
 * Copyright(c) 2019 Arm Limited
 */
 typedef struct rte_mcslock {
	struct rte_mcslock *next;
	int locked; /* 1 if the queue locked, 0 otherwise */
} rte_mcslock_t;

static inline void
rte_mcslock_lock(rte_mcslock_t **msl, rte_mcslock_t *me) {
  rte_mcslock_t *prev;

  /* Init me node */
  __atomic_store_n(&me->locked, 1, __ATOMIC_RELAXED); |\label{ln:dpdk-init}|
  __atomic_store_n(&me->next, NULL, __ATOMIC_RELAXED);

  /* If the queue is empty, the exchange operation is |\label{ln:dpdk-comment1}|
   * enough to acquire the lock. Hence, the exchange
   * operation requires acquire semantics. The store to
   * me->next above should complete before the node is
   * visible to other CPUs/threads. Hence, the exchange
   * operation requires release semantics as well. */
  prev = __atomic_exchange_n(msl, me, __ATOMIC_ACQ_REL); |\label{ln:dpdk-xchg}|
  if (prev == NULL) {
  	return;
  }
  __atomic_store_n(&prev->next, me, __ATOMIC_RELAXED); |\label{ln:dpdk-bug}|

  /* The while-load of me->locked should not move above
   * the previous store to prev->next. Otherwise it will
   * cause a deadlock. Need a store-load barrier. */
  __atomic_thread_fence(__ATOMIC_ACQ_REL); |\label{ln:dpdk-fence}|
  while (__atomic_load_n(&me->locked, __ATOMIC_ACQUIRE))
    rte_pause();
}

static inline void
rte_mcslock_unlock(rte_mcslock_t **msl, rte_mcslock_t *me) {
	if (__atomic_load_n(&me->next, __ATOMIC_RELAXED) == NULL) {
        // **ignore this branch**
	}

	/* Pass lock to next waiter. */
	__atomic_store_n(&me->next->locked, 0, __ATOMIC_RELEASE);
}
//---------------------------------------------------------
// bug scenario
//---------------------------------------------------------
// 2 threads: alice and bob.
rte_mcslock_t alice, bob;
// bob has the lock
rte_mcslock_t *tail = &bob;

void run_alice() { rte_mcslock_lock(&tail, &alice); }
void run_bob() { rte_mcslock_unlock(&tail, &bob); }
\end{lstlisting}
\end{multicols}
\caption{Part of the DPDK MCS lock implementation describing the scenario in which
Alice hangs.}\label{fig:dpdk-mcs}
\end{figure}

\begin{figure}[t]
    \begin{minipage}{.48\textwidth}
        \resizebox{\textwidth}{!}{\begin{tikzpicture}[
        node distance = 12mm and 2mm,
        %node distance = 5mm and 2mm,
        po/.style={->,thick, blue!50!black!50},
        rf/.style={->,thick, green!70!black},
        mo/.style={->,thick, red},
        ctrl/.style={->,thick, dashed},
        ev/.style={rounded corners, fill=gray!10},
        evx/.style={rounded corners, fill=red!70!black!40},
        event/.style={ev, anchor=west},
        %event/.style={ev},
        thread/.style={text=white, fill=black!70},
    ]
    \draw[draw=none, thick, dotted, use as bounding box] (-55mm,-120mm) rectangle (55mm,5mm);
    \node[ev] (l) {$W_\mathit{init}$ {\tt locked, 0}};

    \node[below left=5mm and -4mm of l,ev] (t1 init next) {$W_\rlx$ {\tt alice->next, NULL}};
    \node[below=of t1 init next.west,event] (t1 init lock) {$W_\rlx$ {\tt alice->locked, 1}};
    \node[below=of t1 init lock.west,event] (t1 xchg read) {$U^\text{R}_\seqc$ {\tt tail, bob}};
    \node[below=of t1 xchg read.west,event] (t1 xchg tail) {$U^\text{W}_\seqc$ {\tt tail, alice}};
    \node[below=of t1 xchg tail.west,event,evx] (t1 write next) {$W_\rlx$ {\tt bob->next,  alice}};
    \node[below=of t1 write next.west,event] (t1 fence) {$F_\seqc$};
    \node[below=of t1 fence.west,event] (t1 await) {$R_\acq$ {\tt alice->locked, 1}};
    \node[below=of t1 await.west,event] (t1 await2) {$R_\acq$ {\tt alice->locked, 1}};
    \node[anchor=center,rotate=90] at ($(t1 await2.south west)+(5mm,-8mm)$)  (t1 await3) {\ldots};

    \node[below right=5mm and -4mm of l,evx] (t21) {$R_\rlx$ {\tt bob->next, alice}};
    \node[below=of t21.west, event] (t22) {$W_\rel$ {\tt alice->locked, 0}};

    \draw[po] ($(t1 init next.south west)+(5mm,0)$) -- ($(t1 init lock.north west)+(5mm,0)$);
    \draw[po] ($(t1 init lock.south west)+(5mm,0)$) -- ($(t1 xchg read.north west)+(5mm,0)$);
    \draw[po] ($(t1 xchg read.south west)+(5mm,0)$) -- ($(t1 xchg tail.north west)+(5mm,0)$);
    \draw[po] ($(t1 xchg tail.south west)+(5mm,0)$) -- ($(t1 write next.north west)+(5mm,0)$);
    \draw[po] ($(t1 write next.south west)+(5mm,0)$) -- ($(t1 fence.north west)+(5mm,0)$);
    \draw[po] ($(t1 fence.south west)+(5mm,0)$) -- ($(t1 await.north west)+(5mm,0)$);
    \draw[po] ($(t1 await.south west)+(5mm,0)$) -- ($(t1 await2.north west)+(5mm,0)$);
    \draw[po] ($(t1 await2.south west)+(5mm,0)$) -- (t1 await3);

    \draw[ctrl] ($(t1 init lock.south west)+(10mm,0)$) -- ($(t1 xchg read.north west)+(10mm,0)$);
    \draw[ctrl] ($(t1 xchg read.south west)+(10mm,0)$) -- ($(t1 xchg tail.north west)+(10mm,0)$);
    \draw[ctrl] ($(t1 xchg tail.south west)+(10mm,0)$) -- 
    node[midway, right, fill=white, font=\footnotesize] {hb}
    ($(t1 write next.north west)+(10mm,0)$);
    %\draw[ctrl] ($(t1 write next.north west)+(25mm,0)$)
    %to[out=30, in = 180]
    %node[midway, yshift=-4mm, fill=white, font=\footnotesize] {hb}
    %(t21.175);
    \draw[ctrl] ($(t21.south west)+(10mm,0)$) --
    node[midway, right, fill=white, font=\footnotesize] {hb}
    ($(t22.north west)+(10mm,0)$);

    \draw[po] ($(t21.south west)+(5mm,0)$) -- ($(t22.north west)+(5mm,0)$);

    \draw[rf] (t1 init lock.west) to[bend right=50]
    node[midway, fill=white, font=\footnotesize] {rf}
    (t1 await.west);
    \draw[rf] (t1 write next.east) to[out=30,in=200]
    node[midway, fill=white, font=\footnotesize] {rf}
    (t21.west);

    %\draw[mo] (n) to[bend right=50] (t13);
    \draw[mo] (l.south) to[]
    node[midway, fill=white, font=\footnotesize] {mo}
    (t22.north west);
    \draw[mo] (t22.west) to
    node[midway, fill=white, font=\footnotesize] {mo}
    (t1 init lock.east);

    \node[thread, above=2mm of t1 init next] {Alice};
    \node[thread, above=2mm of t21] {Bob};
\end{tikzpicture}
% vim: ts=4 sts=4 sw=4 et}
        \vspace{-3em}
        \caption{{\bf IMM}. Bug results in Alice hanging: Alice writes to {\tt bob->next} with \rlx\ mode,
        and Bob reads \rlx\ mode, which causes allows Bob's write to before the initialization of {\tt me->locked}.}\label{fig:dpdk-imm-bug}
    \end{minipage}\hfill
    \begin{minipage}{.48\textwidth}
        \resizebox{\textwidth}{!}{\begin{tikzpicture}[
    node distance = 12mm and 2mm,
        %node distance = 5mm and 2mm,
        po/.style={->,thick, blue!50!black!50},
        rf/.style={->,thick, green!70!black},
        mo/.style={->,thick, red},
        ctrl/.style={->,thick, dashed},
        ev/.style={rounded corners, fill=gray!10},
        evx/.style={rounded corners, fill=red!70!black!40},
        evc/.style={rounded corners, fill=green!70!black!40},
        event/.style={ev, anchor=west},
        %event/.style={ev},
        thread/.style={text=white, fill=black!70},
    ]
    \draw[draw=none, thick, dotted, use as bounding box] (-55mm,-120mm) rectangle (55mm,5mm);
    \node[ev] (l) {$W_\mathit{init}$ {\tt locked, 0}};

    \node[below left=5mm and -4mm of l,ev] (t1 init next) {$W_\rlx$ {\tt alice->next, NULL}};
    \node[below=of t1 init next.west,event] (t1 init lock) {$W_\rlx$ {\tt alice->locked, 1}};
    \node[below=of t1 init lock.west,event] (t1 xchg read) {$U^\text{R}_\seqc$ {\tt tail, bob}};
    \node[below=of t1 xchg read.west,event] (t1 xchg tail) {$U^\text{W}_\seqc$ {\tt tail, alice}};
    \node[below=of t1 xchg tail.west,event,evc] (t1 write next) {$W_\rel$ {\tt bob->next,  alice}};
    \node[below=of t1 write next.west,event] (t1 fence) {$F_\seqc$};
    \node[below=of t1 fence.west,event] (t1 await) {$R_\acq$ {\tt alice->locked, 1}};
    \node[below=of t1 await.west,event] (t1 await2) {$R_\acq$ {\tt alice->locked, 0}};

    \node[below right=5mm and -4mm of l,evc] (t21) {$R_\acq$ {\tt bob->next, alice}};
    \node[below=of t21.west, event] (t22) {$W_\rel$ {\tt alice->locked, 0}};

    \draw[po] ($(t1 init next.south west)+(5mm,0)$) -- ($(t1 init lock.north west)+(5mm,0)$);
    \draw[po] ($(t1 init lock.south west)+(5mm,0)$) -- ($(t1 xchg read.north west)+(5mm,0)$);
    \draw[po] ($(t1 xchg read.south west)+(5mm,0)$) -- ($(t1 xchg tail.north west)+(5mm,0)$);
    \draw[po] ($(t1 xchg tail.south west)+(5mm,0)$) -- ($(t1 write next.north west)+(5mm,0)$);
    \draw[po] ($(t1 write next.south west)+(5mm,0)$) -- ($(t1 fence.north west)+(5mm,0)$);
    \draw[po] ($(t1 fence.south west)+(5mm,0)$) -- ($(t1 await.north west)+(5mm,0)$);
    \draw[po] ($(t1 await.south west)+(5mm,0)$) -- ($(t1 await2.north west)+(5mm,0)$);

    \draw[ctrl] ($(t1 init lock.south west)+(10mm,0)$) -- ($(t1 xchg read.north west)+(10mm,0)$);
    \draw[ctrl] ($(t1 xchg read.south west)+(10mm,0)$) -- ($(t1 xchg tail.north west)+(10mm,0)$);
    \draw[ctrl] ($(t1 xchg tail.south west)+(10mm,0)$) to[bend right] ($(t1 write next.north west)+(15mm,0)$);
    \draw[ctrl] ($(t1 write next.north west)+(25mm,0)$)
    to[out=30, in = 180]
    node[midway, yshift=-4mm, fill=white, font=\footnotesize] {hb}
    (t21.175);
    \draw[ctrl] ($(t21.south west)+(10mm,0)$) -- ($(t22.north west)+(10mm,0)$);

    \draw[po] ($(t21.south west)+(5mm,0)$) -- ($(t22.north west)+(5mm,0)$);

    \draw[rf] (t1 init lock.west) to[bend right=50]
    node[midway, fill=white, font=\footnotesize] {rf}
    (t1 await.west);
    \draw[rf] (t1 write next.east) to[out=30,in=200]
    node[midway, fill=white, font=\footnotesize] {rf}
    (t21.west);
    \draw[rf,<-] (t1 await2.east) to[out=30, in=-90]
    node[midway, fill=white, font=\footnotesize] {rf}
    (t22.190);

    %\draw[mo] (n) to[bend right=50] (t13);
    \draw[mo] (l.south) to[]
    node[midway, fill=white, font=\footnotesize] {mo}
    (t1 init lock.north east);
    \draw[mo] (t1 init lock.east) to
    node[midway, fill=white, font=\footnotesize] {mo}
    (t22.west);
    \node[thread, above=2mm of t1 init next] {Alice};
    \node[thread, above=2mm of t21] {Bob};

\end{tikzpicture}
% vim: ts=4 sts=4 sw=4 et}
        \vspace{-3em}
        \caption{{\bf IMM}. With the bug fixed, Alice writes to {\tt prev->next} with \rel\ mode,
        and Bob reads with \acq\ mode, creating a synchronizes-with edge, which forces Bob's write to occur after the initialization.}\label{fig:dpdk-imm-fix}
    \end{minipage}
\end{figure}

\paragraph{The bug on IMM.}
\Cref{fig:dpdk-imm-bug} shows an execution graph in which Alice hangs.
AMC gives exactly this execution graph as counter-example for await termination, but in the text form.
The $U_\seqc$ pair of events are the ``read part'' and ``write part'' of the atomic exchange (Line~\ref{ln:dpdk-xchg} of code in \cref{fig:dpdk-mcs}).
To understand why exchange is modeled with two events, remember that atomic exchange is implemented with
load-linked/store-conditional instruction pairs in many architectures.
For example in ARMv8, \verb|atomic_exchange_n(msl, me, __ATOMIC_ACQ_REL)| is compiled to
\begin{verbatim}
  38:	c85ffc02 	ldaxr	x2, [x0]
  3c:	c803fc01 	stlxr	w3, x1, [x0]
  40:	35ffffc3 	cbnz	w3, 38 <run_alice+0x20>
\end{verbatim}
Intuitively, the load instruction is the ``read part'' of the exchange,
whereas the store instruction is ``write part'' of the exchange.
Also note that for the sake of this bug, \verb|__ATOMIC_ACQ_REL| is equivalent to \verb|__ATOMIC_SEQ_CST|, \ie, even with the stronge \seqc\ barrier mode, the bug can still manifest.

Returning to the bug in IMM,
Alice starts by initializing her node, in particular, setting {\tt alice->locked} to 1.
After exchanging the tail, Alice writes to {\tt bob->next}.
Although Bob reads from Alice's write to {\tt bob->next}, IMM allows Alice's write
to {\tt alice->locked} to happen after Bob's write to {\tt alice->locked}
because no happens-before relation is established between Alice and Bob.
The \mo\ relation shows this order of modifications.
If that occurs, Alice's fate is to await {\tt alice->locked} to become 0 forever.
To establish the correct happens-before relation between Alice and Bob,
Alice's write to {\tt bob->next} has to be \rel,
and Bob's read of {\tt bob->next} has to be \acq\ (see \cref{fig:dpdk-imm-fix}).
That causes both events to ``synchronize-with'', guaranteeing Alice's write to {\tt alice->locked} to
happen before Bob's.
Note that in IMM the happens-before relation projected to one memory location, \eg, {\tt alice->locked}, implies the same order in the visible memory updates of that location, \ie, in the modification order \mo of {\tt alice->locked}.
The happens-before relation does not, however, imply an ordering between the writes to distinct memory locations\cite{podkopaev2019bridging}.

\paragraph{The bug on ARM.}
The bug is not exclusive to the IMM model.
\Cref{fig:dpdk-arm-bug} shows an execution graph manually adapted to the ARM memory model (ARM for short).
In this memory model, a global order of events exist; so, we number the events with one possible global order.
ARM allows the write to {\tt alice->locked} (event 7) to happen after the read part of the atomic exchange $U^R_\seqc$ (event 3),
but does not allow it happen after the write part $U^W_\seqc$ (event 8).
Moreover, since Alice's write to {\tt bob->next} is \rlx\ (event 4),
ARM allows the write to {\tt alice->locked} (along with $U^W_\seqc$) to happen after it.
As a consequence, although Bob reads (event 5) from Alice's write to {\tt bob->next} (event 4), the effect of
setting {\tt alice->locked} to 1 (event 7) happens after Bob has set it to 0 (event 6).
In contrast to IMM, making Alice's write to be \rel\ is sufficient in ARM (see \cref{fig:dpdk-arm-fix}) because
the control/address dependency between Bob's events guarantees that writes that happen before the read event also happen before the subsequent dependent events.
%In \cref{fig:dpdk-arm-fix} the control/address dependency between Bob's events guarantees that at event 10 (the write to {\tt alice->locked}) happens after event 6 (the write to {\tt bob->next}).

\paragraph{Validation of the bug.}
So far we were not able to reproduce the effect of the bug on real hardware.
The situation that triggers the bug is very unlikely to happen, but nevertheless
possible and still a potential problem for code using DPDK on ARM platforms.
To have a higher confidence about the bug on ARM, we checked the scenario
of \cref{fig:dpdk-mcs} with Rmem\cite{pulte2017simplifying}, a stateful model checking tool capable of verifying
small pieces of binary code compiled to ARMv8 architecture.
Although Rmem cannot deal with the infinite loop of Alice, we can reproduce the
bug by asserting Bob does not see {\tt alice->locked} being reverted to 1 after
he has set it to 0 -- as expected, the assertion fails.

\paragraph{Discussion.}
The DPDK MCS lock bug is a good example of how understanding WMMs can be
challenging even for experts.
Note the MCS lock has been contributed by ARM Limited to the DPDK project.
In \cref{fig:dpdk-mcs}, Line~\ref{ln:dpdk-comment1}, the developer
considers exactly the situation observed in this bug:
``The store to {\tt me->next} above should complete before the node is
visible to other CPUs/threads. Hence, the exchange operation requires release
semantics as well.''
However, making the exchange \rel\ is not sufficient because the node can also become
visible to another thread via the write at Line~\ref{ln:dpdk-bug},
and nothing stops the store of Line~\ref{ln:dpdk-init} and the write part of the exchange of Line~\ref{ln:dpdk-xchg}
to be reordered after Line~\ref{ln:dpdk-bug}.
Another interesting finding in this code is that, as far as we can verify,
the explicit fence at Line~\ref{ln:dpdk-fence} is useless and can be removed.

\begin{figure}[t]
    \begin{minipage}{.48\textwidth}
        \resizebox{\textwidth}{!}{\begin{tikzpicture}[
        node distance = 12mm and 2mm,
        %node distance = 5mm and 2mm,
        po/.style={->,thick, blue!50!black!50},
        rf/.style={->,thick, green!70!black},
        mo/.style={->,thick, red},
        ctrl/.style={->,thick, dashed},
        ev/.style={rounded corners, fill=gray!10},
        evx/.style={rounded corners, fill=red!70!black!40},
        event/.style={ev, anchor=west},
        %event/.style={ev},
        thread/.style={text=white, fill=black!70},
    ]
    \draw[draw=none, thick, dotted, use as bounding box] (-55mm,-120mm) rectangle (55mm,5mm);
    \node[ev] (l) {$W_\mathit{init}$ {\tt locked, 0}};

    \node[below left=5mm and -4mm of l,ev] (t1 init next) {$W_\rlx$ {\tt alice->next, NULL}};
    \node[below=of t1 init next.west,event] (t1 init lock) {$W_\rlx$ {\tt alice->locked, 1}};
    \node[below=of t1 init lock.west,event] (t1 xchg read) {$U^\text{R}_\seqc$ {\tt tail, bob}};
    \node[below=of t1 xchg read.west,event] (t1 xchg tail) {$U^\text{W}_\seqc$ {\tt tail, alice}};
    \node[below=of t1 xchg tail.west,event,evx] (t1 write next) {$W_\rlx$ {\tt bob->next,  alice}};
    \node[below=of t1 write next.west,event] (t1 fence) {$F_\seqc$};
    \node[below=of t1 fence.west,event] (t1 await) {$R_\acq$ {\tt alice->locked, 1}};
    \node[below=of t1 await.west,event] (t1 await2) {$R_\acq$ {\tt alice->locked, 1}};
    \node[anchor=center,rotate=90] at ($(t1 await2.south west)+(5mm,-8mm)$)  (t1 await3) {\ldots};

    \node[below right=5mm and -4mm of l,ev] (t21) {$R_\rlx$ {\tt bob->next, alice}};
    \node[below=of t21.west, event] (t22) {$W_\rel$ {\tt alice->locked, 0}};

    \draw[po] ($(t1 init next.south west)+(5mm,0)$) -- ($(t1 init lock.north west)+(5mm,0)$);
    \draw[po] ($(t1 init lock.south west)+(5mm,0)$) -- ($(t1 xchg read.north west)+(5mm,0)$);
    \draw[po] ($(t1 xchg read.south west)+(5mm,0)$) -- ($(t1 xchg tail.north west)+(5mm,0)$);
    \draw[po] ($(t1 xchg tail.south west)+(5mm,0)$) -- ($(t1 write next.north west)+(5mm,0)$);
    \draw[po] ($(t1 write next.south west)+(5mm,0)$) -- ($(t1 fence.north west)+(5mm,0)$);
    \draw[po] ($(t1 fence.south west)+(5mm,0)$) -- ($(t1 await.north west)+(5mm,0)$);
    \draw[po] ($(t1 await.south west)+(5mm,0)$) -- ($(t1 await2.north west)+(5mm,0)$);
    \draw[po] ($(t1 await2.south west)+(5mm,0)$) -- (t1 await3);

    \draw[ctrl] (t1 xchg read.south east) to[bend left]
    node[right=-2mm, midway, fill=white, font=\footnotesize] {ctrl+addr}
    ($(t1 write next.north east)+(-3mm,0)$);
    \draw[ctrl] ($(t21.south west)+(10mm,0)$) to[bend left]
    node[right=2mm, midway, fill=white, font=\footnotesize] {ctrl+addr}
    ($(t22.north west)+(10mm,0)$);

    \draw[po] ($(t21.south west)+(5mm,0)$) -- ($(t22.north west)+(5mm,0)$);

    \draw[rf] (t1 init lock.west) to[bend right=50]
    node[midway, fill=white, font=\footnotesize] {rf}
    (t1 await.west);
    \draw[rf] (t1 write next.east) to[out=30,in=200]
    node[midway, fill=white, font=\footnotesize] {rf}
    (t21.west);

    %\draw[mo] (n) to[bend right=50] (t13);
    \draw[mo] (l.south) to[]
    node[midway, fill=white, font=\footnotesize] {mo}
    (t22.north west);
    \draw[mo] (t22.west) to
    node[midway, fill=white, font=\footnotesize] {mo}
    (t1 init lock.east);

    \node[thread, above=2mm of t1 init next] {Alice};
    \node[thread, above=2mm of t21] {Bob};

    \node[circle,font=\scriptsize\bf, inner sep=1pt, fill=white] at (t1 init next.south west) {2};
    \node[circle,font=\scriptsize\bf, inner sep=1pt, fill=white] at (t1 init lock.south west) {7};
    \node[circle,font=\scriptsize\bf, inner sep=1pt, fill=white] at (t1 xchg read.south west) {3};
    \node[circle,font=\scriptsize\bf, inner sep=1pt, fill=white] at (t1 xchg tail.south west) {8};
    \node[circle,font=\scriptsize\bf, inner sep=1pt, fill=white] at (t1 write next.south west) {4};
    \node[circle,font=\scriptsize\bf, inner sep=1pt, fill=white] at (t1 fence.south west) {9};
    \node[circle,font=\scriptsize\bf, inner sep=1pt, fill=white] at (t1 await.south west) {10};
    \node[circle,font=\scriptsize\bf, inner sep=1pt, fill=white] at (t1 await2.south west) {11};
    \node[circle,font=\scriptsize\bf, inner sep=1pt, fill=white] at (t21.south west) {5};
    \node[circle,font=\scriptsize\bf, inner sep=1pt, fill=white] at (t22.south west) {6};
    \node[circle,font=\scriptsize\bf, inner sep=1pt, fill=white] at (l.south west) {1};
\end{tikzpicture}
% vim: ts=4 sts=4 sw=4 et}
        \vspace{-3em}
        \caption{{\bf ARM memory model}. Bug results in Alice hanging: Alice writes to {\tt bob->next} with \rlx\ mode
        causing the initialization of {\tt alice->locked} to be reordered after Bob's write.}\label{fig:dpdk-arm-bug}
    \end{minipage}\hfill
    \begin{minipage}{.48\textwidth}
        \resizebox{\textwidth}{!}{\begin{tikzpicture}[
    node distance = 12mm and 2mm,
        %node distance = 5mm and 2mm,
        po/.style={->,thick, blue!50!black!50},
        rf/.style={->,thick, green!70!black},
        mo/.style={->,thick, red},
        ctrl/.style={->,thick, dashed},
        ev/.style={rounded corners, fill=gray!10},
        evx/.style={rounded corners, fill=red!70!black!40},
        evc/.style={rounded corners, fill=green!70!black!40},
        event/.style={ev, anchor=west},
        %event/.style={ev},
        thread/.style={text=white, fill=black!70},
    ]
    \draw[draw=none, thick, dotted, use as bounding box] (-55mm,-120mm) rectangle (55mm,5mm);
    \node[ev] (l) {$W_\mathit{init}$ {\tt locked, 0}};

    \node[below left=5mm and -4mm of l,ev] (t1 init next) {$W_\rlx$ {\tt alice->next, NULL}};
    \node[below=of t1 init next.west,event] (t1 init lock) {$W_\rlx$ {\tt alice->locked, 1}};
    \node[below=of t1 init lock.west,event] (t1 xchg read) {$U^\text{R}_\seqc$ {\tt tail, bob}};
    \node[below=of t1 xchg read.west,event] (t1 xchg tail) {$U^\text{W}_\seqc$ {\tt tail, alice}};
    \node[below=of t1 xchg tail.west,event,evc] (t1 write next) {$W_\rel$ {\tt bob->next,  alice}};
    \node[below=of t1 write next.west,event] (t1 fence) {$F_\seqc$};
    \node[below=of t1 fence.west,event] (t1 await) {$R_\acq$ {\tt alice->locked, 1}};
    \node[below=of t1 await.west,event] (t1 await2) {$R_\acq$ {\tt alice->locked, 0}};

    \node[below right=5mm and -4mm of l,ev] (t21) {$R_\rlx$ {\tt bob->next, alice}};
    \node[below=of t21.west, event] (t22) {$W_\rel$ {\tt alice->locked, 0}};

    \draw[po] ($(t1 init next.south west)+(5mm,0)$) -- ($(t1 init lock.north west)+(5mm,0)$);
    \draw[po] ($(t1 init lock.south west)+(5mm,0)$) -- ($(t1 xchg read.north west)+(5mm,0)$);
    \draw[po] ($(t1 xchg read.south west)+(5mm,0)$) -- ($(t1 xchg tail.north west)+(5mm,0)$);
    \draw[po] ($(t1 xchg tail.south west)+(5mm,0)$) -- ($(t1 write next.north west)+(5mm,0)$);
    \draw[po] ($(t1 write next.south west)+(5mm,0)$) -- ($(t1 fence.north west)+(5mm,0)$);
    \draw[po] ($(t1 fence.south west)+(5mm,0)$) -- ($(t1 await.north west)+(5mm,0)$);
    \draw[po] ($(t1 await.south west)+(5mm,0)$) -- ($(t1 await2.north west)+(5mm,0)$);

    \draw[ctrl] (t1 xchg read.south east) to[bend left]
    node[right=-2mm, midway, fill=white, font=\footnotesize] {ctrl+addr}
    ($(t1 write next.north east)+(-3mm,0)$);
    \draw[ctrl] ($(t21.south west)+(10mm,0)$) to[bend left]
    node[right=2mm, midway, fill=white, font=\footnotesize] {ctrl+addr}
    ($(t22.north west)+(10mm,0)$);

    \draw[po] ($(t21.south west)+(5mm,0)$) -- ($(t22.north west)+(5mm,0)$);

    \draw[rf] (t1 init lock.west) to[bend right=50]
    node[midway, fill=white, font=\footnotesize] {rf}
    (t1 await.west);
    \draw[rf] (t1 write next.east) to[out=30,in=200]
    node[midway, fill=white, font=\footnotesize] {rf}
    (t21.west);
    \draw[rf, <-] (t1 await2.east) to[out=30, in=-90]
    node[midway, fill=white, font=\footnotesize] {rf}
    (t22.190);

    %\draw[mo] (n) to[bend right=50] (t13);
    \draw[mo] (l.south) to[]
    node[midway, fill=white, font=\footnotesize] {mo}
    (t1 init lock.north east);
    \draw[mo] (t1 init lock.east) to
    node[midway, fill=white, font=\footnotesize] {mo}
    (t22.west);
    \node[thread, above=2mm of t1 init next] {Alice};
    \node[thread, above=2mm of t21] {Bob};

    %\node[circle,font=\scriptsize\bf, inner sep=1pt, fill=white] at (t1 init next.south west) {2};
    %\node[circle,font=\scriptsize\bf, inner sep=1pt, fill=white] at (t1 init lock.south west) {7};
    %\node[circle,font=\scriptsize\bf, inner sep=1pt, fill=white] at (t1 xchg read.south west) {3};
    %\node[circle,font=\scriptsize\bf, inner sep=1pt, fill=white] at (t1 xchg tail.south west) {8};
    %\node[circle,font=\scriptsize\bf, inner sep=1pt, fill=white] at (t1 write next.south west) {4};
    %\node[circle,font=\scriptsize\bf, inner sep=1pt, fill=white] at (t1 fence.south west) {9};
    %\node[circle,font=\scriptsize\bf, inner sep=1pt, fill=white] at (t1 await.south west) {10};
    %\node[circle,font=\scriptsize\bf, inner sep=1pt, fill=white] at (t1 await2.south west) {11};
    %\node[circle,font=\scriptsize\bf, inner sep=1pt, fill=white] at (t21.south west) {5};
    %\node[circle,font=\scriptsize\bf, inner sep=1pt, fill=white] at (t22.south west) {6};
    %\node[circle,font=\scriptsize\bf, inner sep=1pt, fill=white] at (l.south west) {1};

    \node[circle,font=\scriptsize\bf, inner sep=1pt, fill=white] at (t1 init next.south west) {2};
    \node[circle,font=\scriptsize\bf, inner sep=1pt, fill=white] at (t1 init lock.south west) {4};
    \node[circle,font=\scriptsize\bf, inner sep=1pt, fill=white] at (t1 xchg read.south west) {3};
    \node[circle,font=\scriptsize\bf, inner sep=1pt, fill=white] at (t1 xchg tail.south west) {5};
    \node[circle,font=\scriptsize\bf, inner sep=1pt, fill=white] at (t1 write next.south west) {6};
    \node[circle,font=\scriptsize\bf, inner sep=1pt, fill=white] at (t1 fence.south west) {8};
    \node[circle,font=\scriptsize\bf, inner sep=1pt, fill=white] at (t1 await.south west) {9};
    \node[circle,font=\scriptsize\bf, inner sep=1pt, fill=white] at (t1 await2.south west) {11};
    \node[circle,font=\scriptsize\bf, inner sep=1pt, fill=white] at (t21.south west) {7};
    \node[circle,font=\scriptsize\bf, inner sep=1pt, fill=white] at (t22.south west) {10};
    \node[circle,font=\scriptsize\bf, inner sep=1pt, fill=white] at (l.south west) {1};
\end{tikzpicture}
% vim: ts=4 sts=4 sw=4 et}
        \vspace{-3em}
        \caption{{\bf ARM memory model}. To fix the bug, Alice has to writes to {\tt bob->next} with \rel\ mode,
        forcing Bob's write to occur after the initialization of {\tt alice->locked}.}\label{fig:dpdk-arm-fix}
    \end{minipage}
\end{figure}

\subsection{MCS lock of an internal Huawei product}

\begin{figure}[h]
\begin{minipage}{.40\textwidth}
    %\begin{lstlisting}[style=compact, language=C, numbers=left, escapechar=|,
    %numberstyle=\tiny, numbersep=5pt,xleftmargin=1.5em,framexleftmargin=4pt]
    \begin{lstlisting}[style=casecode]
static inline void
mcslock_acquire(volatile mcslock_t *tail,
                volatile mcs_node_t *me)
{
  mcs_node_t *prev;

  me->next = NULL;
  me->spin = 1;

  smp_wmb(); // ** consider to be SC fence **

  // equivalent to xchg_acq
  prev = __sync_lock_test_and_set(tail, me);
  if (!prev)
    return;

  prev->next = me;
  smp_mb(); // ** consider to be SC fence ** |\label{ln:huawei-useless-fence}|
  while(me->spin); |\label{ln:huawei-loop}|
  // BUG: Missing ACQ barrier, eg, smp_mb();
}


static inline void
mcslock_release(volatile mcslock_t *tail,
                volatile mcs_node_t *me)
{
  if (!me->next) {
    // SC cmpxchg
    if (__sync_val_compare_and_swap(
                   tail, me, NULL) == me) {
      return;
    }
    while(!me->next);
  }

  smp_mb(); // ** consider to be SC fence **
  me->next->spin = 0;
}
    \end{lstlisting}
    \caption{MCS implementation in a commercial OS. A barrier bug cases data races in the critical section.}\label{fig:huawei-mcs}
\end{minipage}\hfill
\begin{minipage}{.57\textwidth}
\resizebox{\linewidth}{!}{ \begin{tikzpicture}[
    node distance = 12mm and 2mm,
        %node distance = 5mm and 2mm,
        po/.style={->,thick, blue!50!black!50},
        rf/.style={->,thick, green!70!black},
        mo/.style={->,thick, red},
        ctrl/.style={->,thick, dashed},
        ev/.style={rounded corners, fill=gray!10},
        evx/.style={rounded corners, fill=red!70!black!40},
        evc/.style={rounded corners, fill=green!70!black!40},
        event/.style={ev, anchor=west},
        %event/.style={ev},
        thread/.style={text=white, fill=black!70},
        region/.style={|-|,dashed, gray!30},
    ]
    \draw[draw=none, thick, dotted, use as bounding box] (-20mm,-140mm) rectangle (98mm,5mm);
    \node[ev] (l) {$W_\mathit{init}$ {\tt x, 0}};

    \node[below right=10mm and 0mm of l,ev] (t1 init next) {$W_\rlx$ {\tt me->next, NULL}};
    \node[below=of t1 init next.west,event] (t1 init lock) {$W_\rlx$ {\tt me->spin, 1}};
    \node[below=of t1 init lock.west,event] (t1 fence0) {$F_\seqc$};
    \node[below=of t1 fence0.west,event] (t1 xchg read) {$U^\text{R}_\acq$ {\tt m, bob}};
    \node[below=of t1 xchg read.west,event] (t1 xchg tail) {$U^\text{W}_\acq$ {\tt tail, me}};
    \node[below=of t1 xchg tail.west,event] (t1 write next) {$W_\rlx$ {\tt bob->next,  me}};
    \node[below=of t1 write next.west,event] (t1 fence) {$F_\seqc$};
    \node[below=of t1 fence.west,event,evx] (t1 await) {$R_\rlx$ {\tt me->spin, 1}};
    \node[below=of t1 await.west,event,evx] (t1 await2) {$R_\rlx$ {\tt me->spin, 0}};
    \node[below=of t1 await2.west,event] (t15) {$R_\rlx$ {\tt \ x, 0}};
    \node[below=of t15.west,event]      (t16) {$W_\rlx$ {\tt \ x, 1}};

    \node[below right=10mm and 50mm of l,ev] (t2r) {$R_\rlx$ {\tt x, 0}};
    \node[below=of t2r.west,event] (t2w) {$W_\rlx$ {\tt x, 1}};
    \node[below=of t2w.west,event] (t21) {$R_\rlx$ {\tt me->next, alice}};
    \node[below=of t21.west,event] (t22) {$F_\seqc$};
    \node[below=of t22.west,event] (t23) {$R_\rlx$ {\tt me->next, alice}};
    \node[below=of t23.west, event] (t24) {$W_\rlx$ {\tt alice->spin, 0}};

    \draw[po] ($(t1 init next.south west)+(5mm,0)$) -- ($(t1 init lock.north west)+(5mm,0)$);
    \draw[po] ($(t1 init lock.south west)+(5mm,0)$) -- ($(t1 fence0.north west)+(5mm,0)$);
    \draw[po] ($(t1 fence0.south west)+(5mm,0)$) -- ($(t1 xchg read.north west)+(5mm,0)$);
    \draw[po] ($(t1 xchg read.south west)+(5mm,0)$) -- ($(t1 xchg tail.north west)+(5mm,0)$);
    \draw[po] ($(t1 xchg tail.south west)+(5mm,0)$) -- ($(t1 write next.north west)+(5mm,0)$);
    \draw[po] ($(t1 write next.south west)+(5mm,0)$) -- ($(t1 fence.north west)+(5mm,0)$);
    \draw[po] ($(t1 fence.south west)+(5mm,0)$) -- ($(t1 await.north west)+(5mm,0)$);
    \draw[po] ($(t1 await.south west)+(5mm,0)$) -- ($(t1 await2.north west)+(5mm,0)$);
    \draw[po] ($(t1 await2.south west)+(5mm,0)$) -- ($(t15.north west)+(5mm,0)$);
    \draw[po] ($(t15.south west)+(5mm,0)$) -- ($(t16.north west)+(5mm,0)$);

    %\draw[ctrl] ($(t1 init lock.south west)+(10mm,0)$) -- ($(t1 xchg read.north west)+(10mm,0)$);
    %\draw[ctrl] ($(t1 xchg read.south west)+(10mm,0)$) -- ($(t1 xchg tail.north west)+(10mm,0)$);
    %\draw[ctrl] ($(t1 xchg tail.south west)+(10mm,0)$) to[bend right] ($(t1 write next.north west)+(15mm,0)$);
    %\draw[ctrl] ($(t1 write next.north west)+(25mm,0)$)
    %to[out=30, in = 180]
    %node[midway, yshift=-4mm, fill=white, font=\footnotesize] {hb}
    %(t21.175);
    %\draw[ctrl] ($(t21.south west)+(10mm,0)$) -- ($(t23.north west)+(10mm,0)$);
    \draw[ctrl] ($(t1 init lock.south west)+(3mm,0)$) -- ($(t1 fence0.north west)+(3mm,0)$);
    \draw[ctrl] (t1 fence0.east)
    to
    node[midway, fill=white, font=\footnotesize] {hb}
    (t22.north west);
    %\draw[ctrl] ($(t22.south west)+(3mm,0)$) -- ($(t23.north west)+(3mm,0)$);
    \draw[ctrl] ($(t22.south west)+(2mm,0)$) to[out=260, in=100] ($(t24.north west)+(1mm,0)$);
    %\draw[ctrl,<-] ($(t22.south west)+(1mm,0)$)
    %to[out=260, in=0]
    %node[midway, xshift=1mm, fill=white, font=\footnotesize] {hb}
    %(t1 fence.east);

    \draw[po] ($(t2r.south west)+(5mm,0)$) -- ($(t2w.north west)+(5mm,0)$);
    \draw[po] ($(t2w.south west)+(5mm,0)$) -- ($(t21.north west)+(5mm,0)$);
    \draw[po] ($(t21.south west)+(5mm,0)$) -- ($(t22.north west)+(5mm,0)$);
    \draw[po] ($(t22.south west)+(5mm,0)$) -- ($(t23.north west)+(5mm,0)$);
    \draw[po] ($(t23.south west)+(5mm,0)$) -- ($(t24.north west)+(5mm,0)$);

    \draw[rf] (t1 init lock.west) to[bend right=50]
    node[midway, fill=white, font=\footnotesize] {rf}
    (t1 await.west);

    \draw[rf] (t1 init lock.west) to[bend right=50]
    node[midway, fill=white, font=\footnotesize] {rf}
    (t1 await.west);
    \draw[rf] (t1 write next.east) to[out=30,in=200]
    node[midway, fill=white, font=\footnotesize] {rf}
    (t21.west);
    \draw[rf,<-] (t1 await2.east) to[out=30, in=-90]
    node[midway, fill=white, font=\footnotesize] (problem) {rf}
    (t24.190);
    \draw[rf] (l.east) to[out=0, in=140]
    node[midway, fill=white, font=\footnotesize] {rf}
    (t2r);
    \draw[rf] (l) to[out=-90, in=180]
    node[midway, fill=white, font=\footnotesize] {rf}
    (t15.west);

    %\draw[mo] (n) to[bend right=50] (t13);
    %\draw[mo] (l.south) to[]
    %node[midway, fill=white, font=\footnotesize] {mo}
    %(t1 init lock.north east);
    %\draw[mo] (t1 init lock.east) to
    %node[midway, fill=white, font=\footnotesize] {mo}
    %(t22.west);
    \node[thread, above=2mm of t1 init next] {Alice};
    \node[thread, above=2mm of t2r] {Bob};

    \draw[region] ($(t21.north west)+(40mm,0)$) to[]
    node[midway, rotate=90, fill=white, font=\footnotesize, align=center] {{\tt mcslock_release()}}
    ($(t24.south west)+(40mm,0)$);
    \draw[region] ($(t2r.north west)+(20mm,0)$) to[]
    node[midway, rotate=90, fill=white, font=\footnotesize, align=center] {{\tt x++;}}
    ($(t2w.south west)+(20mm,0)$);

    \draw[region] ($(t1 init next.north west)+(-5mm,0)$) to[]
    node[midway, rotate=90, fill=white, font=\footnotesize, align=center] {{\tt mcslock_acquire()}}
    ($(t1 await2.south west)+(-5mm,0)$);
    \draw[region] ($(t15.north west)+(-5mm,0)$) to[]
    node[midway, rotate=90, fill=white, font=\footnotesize, align=center] {{\tt x++;}}
    ($(t16.south west)+(-5mm,0)$);

    %\node[fill=white, font=\footnotesize, text width=40mm, align=left, rectangle, draw, dotted] at ($(problem)+(2,-25mm)$)
    %(desc) {%
    %    In IMM, this {\sf rf} edge does not ``synchronizes with''.
    %    Alice's read of {\tt x} may happen before Bob's write to {\tt x}.
    %    In ARM, there is control dependency between Alice's reads of {\tt me->spin} and {\tt x}, but since these are two reads,
    %    the ARM model allows them to be reordered;
    %    therefore, the same reordering between
    %    Alice's read of {\tt x} and Bob's write to {\tt x} may happen.};
    %\draw[dotted] (problem) -- (desc);
    %\draw[dotted] (t15) -- (desc);
    %\draw[dotted] (t2w) -- (desc);
\end{tikzpicture}
% vim: ts=4 sts=4 sw=4 et}
\caption{In IMM, Alice's read of {\tt x} may happen before Bob's write to {\tt x}.}
%    In ARM, there is control dependency between Alice's reads of {\tt me->spin} and {\tt x}, but since these are two reads,
%    the ARM model allows them to be reordered;
%        therefore, the same reordering between
%        Alice's read of {\tt x} and Bob's write to {\tt x} may happen.
\label{fig:huawei-graph}
\end{minipage}
\end{figure}
Our next study case is concerned with the MCS lock implementation found in an internal Huawei product.
In this implementation, \sys\ identified a missing \acq\ barrier that causes serious data corruption problems.
We were able to reproduce the problem on real hardware and reported the bug along with a simple fix to the maintainers.
Here, we describe this issue to illustrate the challenges of porting x86 code to ARM, which is the reason why such a bug was introduced in the code base.
With the recent increased demand of software for ARM servers, we believe that similar bugs are going to become more and more common in production.

\paragraph{The bug on IMM.}
\Cref{fig:huawei-mcs} presents a slightly simplified version of the original MCS lock implementation.
The bug is a missing \acq\ barrier at the end of {\tt mcslock_acquire()}.
To understand the scenario, consider the execution graph in \cref{fig:huawei-graph},
where the critical section is a simple increment \verb|x++|,
Alice wants to enter, and Bob is inside the critical section.
Similarly to the DPDK bug, Bob sees Alice's node when releasing the lock and sets Alice's {\tt spin = 0}; this flag is called {\tt locked} in DPDK.
The first fence in Alice's \verb|mcslock_acquire| synchronizes with the fence in Bob's \verb|mcslock_release| due to the write and read of {\tt bob->next} field.
That establishes a happens-before relation marked with the dashed arrows in the figure.
The happens-before relation, however, does not specify whether Bob's critical-section execution happens before Alice's critical-section execution, or vice-versa.
In this execution graph, Alice and Bob run their critical sections concurrently and both read from the initial write to {\tt x}, causing one of the increments to be lost.
By introducing an \acq\ barrier in the reads of {\tt me->spin} or after them (\cref{fig:huawei-mcs}, Line~\ref{ln:huawei-loop}), Alice is guaranteed to execute her critical section after Bob.
Note that, although the ARM model also introduces control dependencies, reads of {\tt me->spin} and the read of{\tt x} inside the critical section of Alice, a reordering of these operations is not precluded because they are all read operations.

\paragraph{Discussion.}
Besides the barrier bug, some issues may be interesting to point out.
The developers that implemented this code opted in using compiler specific atomic operations.
We do not recommend their use because they hide the barrier mode used underneath.
In particular,
{\tt _sync_lock_test_and_set} has an \acq\ mode,
whereas {\tt _sync_val_compare_and_swap} has an \seqc\ mode.
Moreover, the developers overuse fences:
the {\tt smb_mb()} fence in \verb|mcs_acquire()|, Line~\ref{ln:huawei-useless-fence}, is redundant and can be eliminated.

%\subsection{seL4 CLH bug}
%\subsection{Musl 9 barriers}

\subsection{Linux qspinlock}

%(Unnecessary) introduction
The qspinlock code was originally introduced in Linux version 4.2 as a new and faster spinlock \cite{qspinlock}.
Since then experts have slightly improved the algorithm and carefully optimized the barriers, achieving an excellent performance.
Currently, qspinlock is the default spinlock inside the Linux kernel for many architectures include x86 and ARM.
%Its division of fast and slow path showed a better performance in most cases.
%huge improvement in execution time for most cases while having only a minor decrease for some cases.
%With a few further improvements to the algorithm by the developers, it's performance was further improved.
%However, even after this algorithm changes, it would still take a long time and several commits for the barriers in the code to be completely optimized.
%At version4.4, the qspinlock code was partially optimized ... and then several efforts to optimize began on the Linux kernel ...
%We show how vsync can optimize it and obtain similar barriers as the current ones from the Linux experts, v5.6 ... [before anything we add parts of the code note related to barriers so that we can have a better comparison later ... prefetch ...]
We now describe how we used \sys to automate the barrier optimization process of the Linux qspinlock and obtain similar barriers as the current code, at version 5.6 \cite{linux-qspinlock-version5.6}.

\paragraph{Baseline.}
Our optimization is based on Linux 4.4 \cite{linux-qspinlock-version4.4}, where the barriers were yet to be completely optimized, but other (algorithm) optimizations were mainly done.
%It would still take a few years and several commits for the barriers in the code to be completely optimized.
Because our purpose is exclusively the barrier optimization, we ported a few remaining algorithmic optimizations present on 5.6 back to version 4.4.
Specifically, we backported the prefetch instructions to receive the next node in the queue.

\paragraph{Code preparation.}
To optimize existing code with \sys, we may have to perform minor changes.
In particular, if the code uses custom atomics implemented in assembly, these have to be replaced
with either compiler builtin atomics or with \sys-atomics (which compile down to builtin atomics).
In the case of qspinlock, we replaced the Linux's atomic operations with \sys-atomics.
Because custom atomics do not always follow the same model as IMM and compiler builtin atomics,
some discrepancies may arise during the replacement.
We encountered one such case:
The {\tt cmpxchg} function in Linux is defined as having a full memory barrier (\ie, an \seqc\ fence) before and after the operation
only if it succeeds\cite{lkmm-openstd}.
So we replace the Linux {\tt cmpxchg} with a wrapper that calls \sys's {\tt atomic_cmpxchg} and additionally an atomic fence in the success case to mimic the original behavior (for reference, see \cref{fig:linux-cmpxchg}) .
%The other functions ({\tt linux\_cmpxchg\_acq}, {\tt linux\_cmpxchg\_rlx} and {\tt linux\_cmpxchg\_seq}) are equivalent to \sys-atomics.
Another issue we encountered were unions of different-sized variables:
The qspinlock code uses a union that allows the same memory location to be read and written with either 8, 16 or 32 bits.
Currently, AMC requires that accesses to the same memory locations always have the same size -- this limitation may be fixed in the future.
To overcome this issue, we replaced accesses to the qspinlock data structure always 32 bits.

\begin{table}[h] \centering 
	%\resizebox{\linewidth}{!}{
		\begin{tabular}{@{\extracolsep{5pt}} l|rrrrr}
			%\\[-1.8ex]\hline 
			%\hline \\[-1.8ex] 
			Version &  \Acquire & \Release & \SeqCst &  Time  & Correctness \\ 
			\hline \\[-1.8ex] 
			Linux 4.4\cite{linux-qspinlock-version4.4}	& $3$ & $6$ & $6$ & 2015/09/11      & Not verified \\
			Linux 4.5\cite{linux-qspinlock-version4.5}	& $6$ & $2$ & $1$ & 2015/11/09      & Barrier bug, fixed in \cite{linux-qspinlock-version4.16} \\ 
			Linux 4.8\cite{linux-qspinlock-version4.8}	& $6$ & $3$ & $0$ & 2016/06/03      & Barrier bug, fixed in \cite{linux-qspinlock-version4.16} \\ 
			Linux 4.16\cite{linux-qspinlock-version4.16}	& $6$ & $4$ & $0$ & 2018/02/13      & Not verified  \\ 
			Linux 5.6\cite{linux-qspinlock-version5.6}	& $6$ & $2$ & $1$ & 2020/01/07      & Not verified  \\ 
			$\sys$						& $7$ & $2$ & $1$ & \qspinopttime{} &  \sys-verified\\ 
			%	\hline \\[-1.8ex] 
			%	Musl mutex	& Musl 1.2.0					& $1$   & $0$ & $0$ & $9$  & $10$ & -      & \\ 
			%			& \sys\						& $7$   & $2$ & $1$ & $0$  & $10$ & XX min & \\ 
			%\hline \\[-1.8ex] 
		\end{tabular} 
	%}
	\caption{Barrier optimization results for Linux's qspinlock}
	\label{tab:barrier-opt}
\end{table}

%\begin{figure}[H]
%\end{figure}

%This code had read/write/await generally optimized but other parts like cmpxchg/xchg were not ...
%Then, run vsync optimize and we receive XX barriers. They are more or less equivalent to v4.5 and v4.8 ... (footnote Alpha __smp_mb_read_depends())
%It's important to notice that when the Linux experts made this change they inserted a bug ... which was fixed in v4.16. This emphasizes the difficulty of WMMs ...

%\paragraph{Optimization results.}
%By running \sys to optimize the code, we can reduce from 6 SC, 6 Rel and 6 Acq barriers to 1 SC, 2 Rel and 6 Acq barriers.
%This results mainly from improving the calls to cmpxchg, which used to be all sequential consistent, and from the unlock function, where we use the atomic\_sub with release semantics instead of a sequential consitent fence.

\colorlet{MyRed}{red!40!black!90!}
\colorlet{MyBlue}{blue!40!black!90!}

\newcommand{\cred}[1]{\tikz[baseline]\node[anchor=base,circle,text=white,fill=MyRed,font=\bf\scriptsize,inner sep=0pt, minimum size=10pt]{#1};}
\newcommand{\cblue}[1]{\tikz[baseline]\node[anchor=base,circle,text=white,fill=MyBlue,font=\bf\scriptsize,inner sep=0pt, minimum size=10pt]{#1};}

\paragraph{Optimization results.}
%\paragraph{Discussion.}
%The summary of the qspinlock optimization can be found in \cref{tab:barrier-opt}.
\sys recommended barrier modes similar to those used by the experts (see Table~\ref{tab:barrier-opt}) in roughly \qspinopttime{} -- in contrast, the expert optimization took several release cycles.
The details of the optimization can be seen in \cref{fig:qspin-optimize}.
The bold marked text refers the the optimization suggested by \sys.
The boxes \cred{1} to \cred{5} are the Linux {\tt cmpxchg} function.
We see that \sys removes all atomic fences and transfer their barrier to atomic operations.

We now relate the \sys optimization with the optimizations made by the Linux experts over the several release cycles:
%Several interesting groups can be found among \sys's optimization that are comparable to the Linux's experts optimizations in different version.
\begin{description}
	\item[Version 4.5 -- optimization of cmpxchg:]
	The first optimization by the experts was exactly in the \mbox{\tt cmpxchg} functions (\cred{1} to \cred{5}), changing it from \seqc\ to a more relaxed mode\cite{linux-qspinlock-version4.5}.
	%\sys finds a similar optimization, with only two differences occur due to the different memory models, ours being IMM, theirs LKMM.
	%In LKMM \cred{4} only requires a \rel\ barrier, but IMM needs a full \seqc\ barrier;
	%\sys extended with an LKMM module would produce the same barriers here as used in Linux.

\item[Version 4.8 -- optimization of unlock function:]
	Experts optimized the code in \cblue{6},
	removing the fence and changing the {\tt atomic_sub} to \rel\ mode\cite{linux-qspinlock-version4.8} --
	identically to \sys suggestion.

\item[Version 4.16 -- bug fix:]
	The optimization from version 4.5 introduced a bug that was only found and fixed in version 4.16 \cite{linux-qspinlock-version4.16}.
	The bug allowed the node initialization to occur after the update of \mbox{\tt prev->next}, similarly to DPDK's bug discussed in \S\ref{s:dpdk}.
	In version 4.16, the experts used a \rel\ barrier in the atomic write immediately after the {\tt decode_tail} function, but finally replace that with an atomic \seqc-fence in the current version.
	Optimizations with \sys are verified and hence not affected by such bugs.

\item[Version 5.6 -- current version:]
	\Cref{fig:qspin-curr} shows the barrier modes used in the current version of qspinlock\cite{linux-qspinlock-version5.6}.
	The dotted lines connect our barriers in \cref{fig:qspin-optimize} with the equivalent barriers in the current version.
	The few different barrier modes are due to two reasons:
	First, there exists multiple maximally-relaxed combinations that are correct.
	Second, both optimizations are based on different memory models (LKMM and IMM).
	\sys extended with an LKMM module would likely suggest the same barriers as used in Linux.
	%\cblue{9} uses a \seqc\ barrier that was added by the developers after our working version.
	%Since the \sys tool optimizes barriers already written in the program, it doesn't consider this barrier.
	%That and the explanation for the \cblue{4} mentioned in the first item explain the differences in this item.
\end{description}
%This can be seen by the red and green boxes in \cref{fig:qspin-optimize} which respectively correspond to the commits in \cref{fig:qspin-cmpxchg} and \cref{fig:qspin-unlock}.
%It's interesting to notice though that the changes made in 4.8 introduced a bug in the code, which would only be detected and solved a few years later in version 4.16.

%We can further compare with v5.6 ... they use a dmb ishst ... which adds a new atomic_fence() in the code ... our vsync can optimize barriers/atomic operations in the code, but it does not insert additional barriers so it wouldn't try this based on v4.4 ...
%We finish this section by comparing the \sys optimization with the experts optimization in the current Linux version 5.6 (\cref{fig:qspin-curr}).
%The differences were highlighted in the blue boxes present in \cref{fig:qspin-optimize} and \cref{fig:qspin-curr}.
%Notably, in the second box, the version 5.6 there is a SC fence which was added by the developers after our working version.
%Since the \sys tool optimizes barriers already written in the program, it doesn't consider this barrier. 
%Also, the first and third boxes show two different barrier combination which are both correct (on the hardware) and minimally-relaxed.
%However, due to differences between the IMM and the LKMM memory models, the Linux combination for the third box would require additional barriers in \sys.
\begin{figure}[H]
	\begin{minipage}[b]{.44\textwidth}
		%\resizebox{\linewidth}{!}{
\begin{lstlisting}[style=verbcodelarge]
lock
  |\tikz[overlay, remember picture]\coordinate (a1);
  |atomic32_cmpxchg_rel --> $acquire$|\tikz[overlay, remember picture]\coordinate (o1);|
  |\tikz[overlay, remember picture]\coordinate (b1);
  |atomic_fence --> $remove$
  queued_spin_lock_slowpath
    atomic32_await_neq_rlx
    |\tikz[overlay, remember picture]\coordinate (a2);
    |atomic32_cmpxchg_rel --> $acquire$|\tikz[overlay, remember picture]\coordinate (o2);|
    |\tikz[overlay, remember picture]\coordinate (b2);
    |atomic_fence --> $remove$
    atomic32_await_mask_eq_acq --> $relaxed$
    |\tikz[overlay, remember picture]\coordinate (b8);
    |atomic32_add_rlx --> $acquire$|\tikz[overlay, remember picture]\coordinate (o3);|
    encode_tail
    atomic32_write_rlx
    atomicptr_write_rlx
    atomic32_read_rlx
    |\tikz[overlay, remember picture]\coordinate (a3);
    |atomic32_cmpxchg_rel --> $acquire$|\tikz[overlay, remember picture]\coordinate (o4);|
    |\tikz[overlay, remember picture]\coordinate (b3);
    |atomic_fence --> $remove$
    |\tikz[overlay, remember picture]\coordinate (a9);
    |atomic32_read_rlx
    |\tikz[overlay, remember picture]\coordinate (a4);
    |atomic32_cmpxchg_rel --> $seq_cst$|\tikz[overlay, remember picture]\coordinate (o5);|
    |\tikz[overlay, remember picture]\coordinate (b4);
    |atomic_fence --> $remove$
    decode_tail
    atomicptr_write_rlx
    atomic32_await_neq_acq|\tikz[overlay, remember picture]\coordinate (o6);|
    atomicptr_read_rlx
    |\tikz[overlay, remember picture]\coordinate (a10);
    |atomic32_await_mask_eq_acq --> $relaxed$
    atomic32_or_rlx --> $acquire$|\tikz[overlay, remember picture]\coordinate (o7);|
    |\tikz[overlay, remember picture]\coordinate (a5);
    |atomic32_cmpxchg_rel --> $acquire$|\tikz[overlay, remember picture]\coordinate (o8);|
    |\tikz[overlay, remember picture]\coordinate (b5);
    |atomic_fence --> $remove$
    atomicptr_await_neq_rlx
    |\tikz[overlay, remember picture]\coordinate (b10);
    |atomic32_write_rel|\tikz[overlay, remember picture]\coordinate (o9);|
unlock
  |\tikz[overlay, remember picture]\coordinate (a6);
  |atomic_fence --> $remove$
  |\tikz[overlay, remember picture]\coordinate (b6);
  |atomic32_sub_rlx --> $release$|\tikz[overlay, remember picture]\coordinate (o10);
  \begin{tikzpicture}[overlay, remember picture, b/.style={opacity=0.2, draw=none, rounded corners}, n/.style={circle, inner sep=1pt, font=\rm\footnotesize\bf, draw=none, text=white}]

  \coordinate (rex) at ($(a1)+(58mm,3mm)$);
  \draw[b,fill=MyRed]    ($(b1)+(-1mm,-1mm)$) rectangle (rex);
  \node[n,fill=MyRed] at ($(a1)+(-3mm,-.8mm)$) {1};

  \draw[b,fill=MyRed]    ($(b2)+(-1mm,-1mm)$) rectangle ($(rex|-a2)+(4mm,3mm)$);
  \node[n,fill=MyRed] at ($(a2)+(-3mm,-.8mm)$) {2};

  \draw[b,fill=MyRed]    ($(b3)+(-1mm,-1mm)$) rectangle ($(rex|-a3)+(4mm,3mm)$);
  \node[n,fill=MyRed] at ($(a3)+(-3mm,-.8mm)$) {3};

  \draw[b,fill=MyRed]    ($(b4)+(-1mm,-1mm)$) rectangle ($(rex|-a4)+(4mm,3mm)$);
  \node[n,fill=MyRed] at ($(a4)+(-3mm,-.8mm)$) {4};

  \draw[b,fill=MyRed]    ($(b5)+(-1mm,-1mm)$) rectangle ($(rex|-a5)+(4mm,3mm)$);
  \node[n,fill=MyRed] at ($(a5)+(-3mm,-.8mm)$) {5};

  \draw[b,fill=MyBlue] ($(b6)+(-1mm,-1mm)$) rectangle ($(a6)+(50mm,3mm)$);
  \node[n,fill=MyBlue] at ($(a6)+(-3mm,-.8mm)$) {6};
  \end{tikzpicture}
  |
\end{lstlisting}
%\tikz[overlay]\draw[red, very thick] (0,43em) rectangle (6,46em);
\caption{Barrier modes in version 4.4 and \sys optimizations in {\bf bold}.
	Optimizations in the red boxes are similar to version 4.5;
those in the blue box are identical to version 4.8.}
		\label{fig:qspin-optimize}
	\end{minipage}\hfill
	\begin{minipage}[b]{.42\textwidth}
%	\resizebox{\linewidth}{!}{
\begin{lstlisting}[style=verbcodelarge]
lock
  |\tikz[overlay,remember picture]\coordinate (c1);
  |@atomic32_cmpxchg_acq@
  queued_spin_lock_slowpath
    atomic32_await_counter_neq_rlx
    |\tikz[overlay,remember picture]\coordinate (c2);
    |@atomic32_get_or_acq@
    atomic32_sub_rlx
    |\tikz[overlay,remember picture]\coordinate (c3);
    |@atomic32_await_mask_eq_acq@
    atomic32_add_rlx
    encode_tail
    grab_mcs_node
    atomic32_write_rlx
    atomicptr_write_rlx
    atomic32_read_rlx
    |\tikz[overlay,remember picture]\coordinate (c4);
    |@atomic32_cmpxchg_acq@
    |\tikz[overlay,remember picture]\coordinate (c5);
    |@atomic_fence@
    atomic32_read_rlx
    atomic32_cmpxchg_rlx
    decode_tail
    atomicptr_write_rlx
    |\tikz[overlay,remember picture]\coordinate (c6);
    |atomic32_await_neq_acq
    atomicptr_read_rlx
    |\tikz[overlay,remember picture]\coordinate (c7);
    |@atomic32_await_mask_eq_acq@
    atomic32_cmpxchg_rlx
    atomic32_or_rlx
    atomicptr_await_neq_rlx
    |\tikz[overlay,remember picture]\coordinate (c8);
    |@atomic32_write_rel@
unlock
  |\tikz[overlay,remember picture]\coordinate (c9);
  |@atomic32_sub_rel@
  
  |
  \begin{tikzpicture}[overlay, remember picture, bline/.style={dotted, thick}]
	    \draw[bline] ($(o1)+(.5mm,.5mm)$) to[out=0,in=180] ($(c1)+(-.5mm,.5mm)$);
	    \draw[bline] ($(o2)+(.5mm,.5mm)$) to[out=0,in=180] ($(c2)+(-.5mm,.5mm)$);
	    \draw[bline] ($(o3)+(.5mm,.5mm)$) to[out=0,in=180] ($(c3)+(-.5mm,.5mm)$);
	    \draw[bline] ($(o4)+(.5mm,.5mm)$) to[out=0,in=180] ($(c4)+(-.5mm,.5mm)$);
	    \draw[bline] ($(o5)+(.5mm,.5mm)$) to[out=0,in=180] ($(c5)+(-.5mm,.5mm)$);
	    \draw[bline] ($(o6)+(.5mm,.5mm)$) to[out=0,in=180] ($(c6)+(-.5mm,.5mm)$);
	    \draw[bline] ($(o7)+(.5mm,.5mm)$) to[out=0,in=180] ($(c7)+(-.5mm,.5mm)$);
	    \draw[bline] ($(o8)+(.5mm,.5mm)$) to[out=0,in=180] ($(c7)+(-.5mm,.5mm)$);
	    \draw[bline] ($(o9)+(.5mm,.5mm)$) to[out=0,in=180] ($(c8)+(-.5mm,.5mm)$);
	    \draw[bline] ($(o10)+(.5mm,.5mm)$) to[out=0,in=180] ($(c9)+(-.5mm,.5mm)$);
  \end{tikzpicture}
|
\end{lstlisting}
\caption{Barrier mode information for qspinlock in Linux version 5.6 (current version).
	Dotted lines connect related barrier optimizations of \sys and the current version.
}
		\label{fig:qspin-curr}
	\end{minipage}
\end{figure}

%\begin{figure}[p]
\begin{figure}[h]
%\begin{lstlisting}[style=tutorialcode]
		\centering
		\begin{minipage}[b]{.6\linewidth} %
			\begin{lstlisting}[style=casecode,tabsize=8]
#define __linux_atomic_cmpxchg(mod1, mod2, l, a, b)    \
({                                                     \
	typeof(a) __r = @atomic_cmpxchg@##$mod1$(l, a, b)  \
	if (__r == a) @atomic_fence@##$mod2$();            \
	__r;                                           \
})

#define linux_cmpxchg(l, a, b)		__linux_atomic_cmpxchg($_rel$,     , l, a, b)
#define linux_cmpxchg$_rlx$(l, a, b)	__linux_atomic_cmpxchg($_rlx$, $_rlx$, l, a, b)
#define linux_cmpxchg$_acq$(l, a, b)	__linux_atomic_cmpxchg($_acq$, $_rlx$, l, a, b)
#define linux_cmpxchg$_rel$(l, a, b)	__linux_atomic_cmpxchg($_rel$, $_rlx$, l, a, b)
#define linux_cmpxchg$_seq$(l, a, b)	__linux_atomic_cmpxchg(    , $_rlx$, l, a, b)
\end{lstlisting}
\end{minipage}
\caption{Using \sys atomics to implement code compatible with Linux's {\tt cmpxchg}.}
	\label{fig:linux-cmpxchg}
%\end{minipage}
\end{figure}

%\subsection{Discussion: lack of principle and fences}

% vim: ts=4 sts=4 sw=4 et

\pagebreak
\section{Optimized-code Evaluation}
\label{s:eval}

\newcommand{\plotwidth}{0.6}

In this section, we present details of the setup used in our ``optimized-code evaluation'' section.
Moreover, we discuss the results obtained with microbenchmarks in length.
See the full paper for results with real-world workloads~\cite{vsync}.

\subsection{Setup details}

        \subsubsection{Evaluation platforms}
        \label{subsub:eval-platforms}

            We conduct our experiments in the following hardware platforms:

            \begin{itemize}
                \item a Huawei \textbf{TaiShan 200} (Model 2280)\footnoteurl{
                    https://e.huawei.com/uk/products/servers/taishan-server/taishan-2280-v2
                }
                rack server that has \texttt{128 GB} of RAM and 2 \textbf{Kunpeng 920-6426} processors, a HiSilicon chip
                with \textbf{64} \texttt{ARMv8.2} 64-bit cores\footnoteurl{
                    https://en.wikichip.org/wiki/hisilicon/kunpeng/920-6426
                },
                totaling 128 cores running at a nominal 2.6 GHz frequency.
                The identifier to denote this machine in this document is \texttt{taishan200-128c}.

                \item a GIGABYTE \textbf{R182-Z91-00}\footnoteurl{
                    https://www.gigabyte.com/Rack-Server/R182-Z91-rev-100
                }
                rack server that has \texttt{128 GB} of RAM and 2 \textbf{EPYC 7352} processors, an AMD chip
                with \textbf{24} \texttt{x86\_64} cores\footnoteurl{
                    https://www.amd.com/en/products/cpu/amd-epyc-7352
                },
                totaling 48 cores (96 if counting hyperthreading) running at a nominal 2.3 GHz frequency.
                The identifier to denote this machine in this document is \texttt{gigabyte-96c}.
            \end{itemize}

            We installed on all these servers the Ubuntu 18.04.4 LTS (aarch64) operating system, with the following
            Linux kernel version: \texttt{5.3.0-42-generic}.

        \subsubsection{Environment setup}

            To produce stable benchmark results on a kernel as complex as Linux, we took some precautions in terms of
            environment configuration of our experiment target platforms.
            We list here such precautions:

            \begin{enumerate}
                \item \textbf{Atomic types isolation.}
                Linux and \sys each declare their own atomic types such as \texttt{atomic\_t} and \texttt{atomic64\_t}.
                When writing the kernel benchmark module, to avoid name conflicts between Linux kernel headers and \sys library headers, we separated into different translation units the benchmark ``main'' code (where the entry point lies and where the kernel threads are created) from the lock primitives function definitions and data structures instantiations (where the contention loops are executed, see Section~\ref{subsub:eval-micro-details}).
                Therefore, the ``main'' code of the benchmark kernel module could use the classic Linux headers for its needs while the \sys test units could include the \sys library headers (that define the required atomic types) and use from Linux only the primitives types (such as \texttt{uint32\_t} and the likes).
                This technique also enables the possibility to benchmark individually module from the Linux kernel, such as the qspinlock located in \texttt{"linux/spinlock.h"} header.

                \item \textbf{Thread to core affinity assignment.}
                The benchmark module spawns as many \emph{kernel} threads as requested on the module invocation command.
                To measure the multi-core overheads of the locks, these threads must be pinned on individual cores (both within the same NUMA nodes and on different NUMA nodes).
                For this purpose, the Linux \texttt{kthread\_bind()} function is used.

                \item \textbf{Operating frequency fixing.}
                To avoid suffering from thermal effects, and thus the OS dynamically changing the operating frequency while the benchmark were running (and by doing so skewing our results), we fixed the frequency to 1.5 GHz, a frequency point available on all the platforms used in our evaluations.
                For this purpose, we used the Linux \texttt{cpufreq} mechanism.
                We set the governor to \texttt{userspace} to be able to choose the frequency.
                We observed that using a fixed governor such as \texttt{userspace} instead of an adaptive one (such as \texttt{ondemand}) yields way better predictability in our results.

                \item \textbf{Disable network.}
                In the preliminary experiments we conducted for our work, we observed that network introduced lots of noise into the evaluations by widely spreading the distributions of results (more than 10\% difference between minimum and maximum observed throughput for a benchmark repeated with the same parameters).
                The simplest solution to avoid these interferences was to disable the network and therefore to operate the benchmarks directly on the machine workstation (on the server \texttt{tty}).

                \item \textbf{Disable IRQ balancing.}
                \texttt{irqbalance} is a Linux daemon in charge of distributing the hardware Interrupts Requests (IRQs) among the different processing cores of the platform for the purpose of overall system performance.
                However, sporadic IRQs and subsequent execution of Interrupt Service Routines (ISRs) occurring on an \emph{uncontrolled} set of cores would bring unpredictability in the system response-time and interfere with our benchmark measurements.
                We simply disable this mechanism.
                Therefore, the Linux fallback strategy is to pin all IRQs to the first core, which we remove from our thread affinity assignment to completely avoid the issue of running ISRs and benchmarks concurrently on the same cores.

                \item \textbf{Disable NUMA balancing.}
                On platforms with a large number of cores such as the ones used in these experiments (see Section~\ref{subsub:eval-platforms}), the CPU cores are organized in NUMA nodes (for \emph{Non-Uniform Memory Access}).
                This structure allows to palliate the unavoidable pressure on the memory bus due to the high amount of processing cores operating in parallel.
                Banks of memory are allocated per NUMA node, reflecting the cache hierarchy.
                \textbf{NUMA balancing} is a feature of Linux that periodically moves the tasks closer to the memory they use, \ie in the right NUMA node.
                \textbf{NUMA control} is an additional tool allowing to configure NUMA-aware task scheduling and memory allocation in a fine-grain manner, this overriding the overall system NUMA balancing.
                We disable system NUMA balancing, and we enable task-local NUMA control, using this syntax:
\begin{lstlisting}[language=bash]
sudo numactl --cpubind=0 --membind=0 <cmd_to_insert_kernel_module>
\end{lstlisting}
                This goes one step further as task affinity assignment, as it forces memory allocation to be bound on NUMA node 0.
                Our benchmark being inherently concurrent, as soon as there will be more threads than cores per NUMA node, these threads will be allocated on the next NUMA node.
		For userspace benchmarks, we use {\tt libnuma} directly in our {\tt pthread} wrapper to pin threads to cores and control the allocation of context data structures.
	        Therefore, the cross-node threads will suffer some performance loss when trying to access shared data (\eg{} spinlock data structures).

                \item \textbf{Completely isolate the cores.}
                The Linux kernel provides the possibility to isolate a subset of the CPU cores.
                To do so, the parameter \texttt{isolcpus} must be filled with the list of CPU core identifiers to isolate.
                This parameter is given on the Linux boot argument command line (\ie in Grub for our case, prior to the kernel boot).
                This has the effect of completely preventing the scheduling and other task balancing mechanisms (such as SMP balancing) to operate on these cores.
                Unless \emph{explicitly} required with task affinity configuration (with the corresponding system calls or by using the \emph{taskset} program), the OS will not schedule any task on these isolated cores.
                In our case, we decided to isolate all cores but the first, with the idea of running our benchmarks on the isolated cores, while the rest of the Linux processes would run on the Core 0 to avoid interfering with our results.

                \item \textbf{Kernel threads priority.}
                We tried several configurations of \texttt{niceness} for our kernel threads by calling the \texttt{set\_user\_nice()} Linux function, but this did not seem to impact the distribution of our results.
                This is to be expected with the precautions described above.
                The benchmark response time variability weren't influenced by the priority of the kernel threads.
            \end{enumerate}

    \subsection{Microbenchmark evaluation}

        In this section, we present the microbenchmark experiments carried-out for the paper.
        We first describe the experiment itself and then discuss the results obtained.

        \subsubsection{Experiment details}
        \label{subsub:eval-micro-details}

        The microbenchmark works as follow:
        each thread repeatedly acquires a (writer) lock, increments a shared counter, and releases the lock.
        This is summarized in pseudo-C in Listing~\ref{listing:eval-benchstructure}.

\begin{lstlisting}[style=tutorialcode,label=listing:eval-benchstructure,caption=Pseudo-C code of microbenchmark]
/* the tested lock is parameterized here */
#include <vsync/<chosen>lock.h>

/* lock variable */
static lock_s lock_var;

/*
 * supposedly already allocated,
 * we also take care of cacheline-alignment
 * to avoid false sharing.
 */
unsigned long long* shared_counter;

/* ... */

unsigned long long run()
{
    *shared_counter = 0ull;
    lock_init(&lock_var);

    do {
        lock_acquire(&lock_var);

        (*shared_counter)++;

        lock_release(&lock_var);
    } while (!thread_should_stop());

    return *shared_counter;
}
\end{lstlisting}

        % WARNING !!!
% Do not modify this file manually: its content is automatically generated.
\newcommand{\evalMicrobenchThreadRange}{1, 2, 4, 8, 16, 23, 31, 63, 95, 127}
\newcommand{\evalMicrobenchDuration}{30}
\newcommand{\evalMicrobenchNbRuns}{5}

        The returned counter is used to compute the throughput (number of times the critical sections was accessed by any thread).
        We vary the number of threads in the following set~\footnote{
            Obviously, the 127-thread case can only be run on platforms with 128 cores.
            It is omitted in the other cases.
        }:
        $\lbrace \evalMicrobenchThreadRange{} \rbrace$.
        We run each experiment for a fixed period of time (\evalMicrobenchDuration{} seconds) and measure the throughput (number of critical sections per second).
        We run the experiments \evalMicrobenchNbRuns{} times to ensure the stability of the results (for each case, we pick the median of these repeated runs).

        For spinlocks and reader-writer locks, the benchmark runs as a Linux kernel module.
        It means that we insert in the kernel a kernel module (\texttt{*.ko} file) that is linked with the code listed in Listing~\ref{listing:eval-benchstructure}.
        When inserted, the module runs an initialization routine that consists in spawning as many kernel threads as requested (as explained above, this is a parameter of the experiment), and each thread runs the code listed in Listing~\ref{listing:eval-benchstructure}, until interrupted by a timer (the \evalMicrobenchDuration{}-seconds execution).
        This timer is triggered externally, when removing the module from the kernel, as the exit routine of the module is requiring all the threads to finish their execution (it is not possible to kill a kernel thread with a kill signal).
        This kernel module is inspired by the previous work of Kashyap \etal~\cite{shflock2019}, where they (nano-)benchmarked the behavior of a hash-table in kernel-mode to evaluate several locking primitives.

        For mutexes, we replace \verb|pthread_mutex| using \verb|LD_PRELOAD|, and the benchmark runs in Linux userspace.

        About the selected locking primitives, we compare two variants of each primitive:
        an \seqc-only variant,
        and a \sys-optimized variant with barriers.

        \subsubsection{Results}

        \paragraph{Grouping and filtering records.}

            The raw experiment results look like a list of records as showed in Table~\ref{table:raw-records}.
            The \texttt{count} column represents the value returned by the function \texttt{run()} of Listing~\ref{listing:eval-benchstructure}, \ie the number of times a thread access the critical section.
            The \texttt{duration} column is the measured duration (even if it is fixed, some deviation may occur).
            Lastly, the \texttt{throughput} column is simply $\frac{\texttt{count}}{\texttt{duration}}$, effectively capturing the number of critical sections per second.

            \begin{table}
                \begin{minipage}{\textwidth}
                    \begin{center}
                        \resizebox{1.0\textwidth}{!}{%
                            % WARNING !!!
% Do not modify this file manually: its content is automatically generated.
\begin{tabular}{llllllllll}
\toprule
{} & architecture & algorithm & seqopt & threads\_nb & run\_nb & atomics &      count & duration &   throughput \\
\midrule
0    &        ARMv8 &     array &    opt &          1 &      1 &     a64 &  957109580 &  30.0122 &  3.18907e+07 \\
1    &        ARMv8 &     array &    opt &          1 &      2 &     a64 &  957161287 &  30.0133 &  3.18913e+07 \\
2    &        ARMv8 &     array &    opt &          1 &      3 &     a64 &  957576858 &   30.025 &  3.18926e+07 \\
3    &        ARMv8 &     array &    opt &          1 &      4 &     a64 &  957238417 &  30.0143 &  3.18927e+07 \\
4    &        ARMv8 &     array &    opt &          1 &      5 &     a64 &  957129609 &  30.0141 &  3.18893e+07 \\
5    &        ARMv8 &     array &    opt &          2 &      1 &     a64 &  209273223 &  30.0116 &  6.97308e+06 \\
6    &        ARMv8 &     array &    opt &          2 &      2 &     a64 &  205836422 &  30.0118 &  6.85851e+06 \\
7    &        ARMv8 &     array &    opt &          2 &      3 &     a64 &  205883982 &  30.0119 &  6.86008e+06 \\
...  &          ... &       ... &    ... &        ... &    ... &     ... &        ... &      ... &          ... \\
3697 &       x86\_64 &      musl &    seq &         63 &      3 &     a64 &   10917032 &  30.0052 &       363838 \\
3698 &       x86\_64 &      musl &    seq &         63 &      4 &     a64 &   12659470 &  30.0059 &       421900 \\
3699 &       x86\_64 &      musl &    seq &         63 &      5 &     a64 &   10882122 &  30.0047 &       362681 \\
3700 &       x86\_64 &      musl &    seq &         95 &      1 &     a64 &   11842053 &  30.0067 &       394647 \\
3701 &       x86\_64 &      musl &    seq &         95 &      2 &     a64 &   11655763 &  30.0056 &       388453 \\
3702 &       x86\_64 &      musl &    seq &         95 &      3 &     a64 &   13233013 &  30.0067 &       441002 \\
3703 &       x86\_64 &      musl &    seq &         95 &      4 &     a64 &   13896114 &  30.0062 &       463108 \\
3704 &       x86\_64 &      musl &    seq &         95 &      5 &     a64 &   13857038 &  30.0065 &       461801 \\
\bottomrule
\end{tabular}

                        }
                    \end{center}
                \end{minipage}
                \caption{Raw captured records, with parameters and output values.}
                \label{table:raw-records}
            \end{table}

            The records then get grouped together by parameters, and throughput mean, median and stability are computed, as reported on Table~\ref{table:grouped-records}.

            \begin{table}
                \begin{minipage}{\textwidth}
                    \begin{center}
                        \resizebox{1.0\textwidth}{!}{%
                            % WARNING !!!
% Do not modify this file manually: its content is automatically generated.
\begin{tabular}{llllllll}
\toprule
       &     &     &    &         mean &       median &      std & stability \\
arch & algorithm & seqopt & threads\_nb &              &              &          &           \\
\midrule
aarch64 & array & opt & 1 &  3.18913e+07 &  3.18913e+07 &  1436.94 &   1.00011 \\
       &     &     & 2 &  6.87696e+06 &  6.85993e+06 &  54935.1 &   1.02047 \\
       &     &     & 4 &  4.08881e+06 &  4.10817e+06 &  57940.8 &   1.03005 \\
       &     &     & 8 &  3.91338e+06 &  3.90199e+06 &  49164.6 &   1.03246 \\
       &     &     & 16 &  3.75618e+06 &  3.74699e+06 &    15217 &   1.00958 \\
       &     &     & 23 &  2.48333e+06 &  2.41812e+06 &   105903 &   1.09302 \\
       &     &     & 31 &  2.23512e+06 &  2.23639e+06 &   6561.7 &   1.00759 \\
       &     &     & 63 &  1.74074e+06 &  1.74047e+06 &  8351.25 &   1.01191 \\
       &     &     & 95 &  1.32026e+06 &  1.32048e+06 &  5805.31 &   1.01172 \\
       &     &     & 127 &  1.11277e+06 &  1.10708e+06 &  11596.4 &   1.02398 \\
       &     & seq & 1 &  2.67509e+07 &  2.67693e+07 &  30079.4 &    1.0026 \\
       &     &     & 2 &  5.66552e+06 &  5.69127e+06 &  62049.8 &   1.02652 \\
       &     &     & 4 &  3.47092e+06 &  3.46877e+06 &  15011.9 &   1.01014 \\
       &     &     & 8 &  3.10139e+06 &  3.10546e+06 &  7845.61 &     1.006 \\
       &     &     & 16 &  3.05401e+06 &  3.05231e+06 &  4886.13 &   1.00395 \\
       &     &     & 23 &  2.08272e+06 &  2.06173e+06 &  64897.5 &   1.07889 \\
       &     &     & 31 &  1.92977e+06 &  1.92845e+06 &  10816.1 &    1.0132 \\
       &     &     & 63 &  1.52597e+06 &  1.53138e+06 &  17730.6 &   1.02815 \\
       &     &     & 95 &  1.11215e+06 &  1.10892e+06 &  13310.7 &   1.02923 \\
       &     &     & 127 &       899750 &       898612 &  10382.7 &   1.02754 \\
... & ... & ... & ... &          ... &          ... &      ... &       ... \\
x86\_64 & ttas & seq & 63 &  1.16617e+06 &  1.16119e+06 &  14598.2 &   1.03091 \\
       &     &     & 95 &  1.16657e+06 &  1.16654e+06 &   9672.9 &   1.01849 \\
       & twa & opt & 1 &  3.64516e+07 &  3.64537e+07 &  29737.1 &   1.00216 \\
       &     &     & 2 &  6.02238e+06 &  6.02219e+06 &  1489.88 &   1.00069 \\
       &     &     & 4 &  2.10028e+06 &  2.12157e+06 &  68586.1 &   1.09087 \\
       &     &     & 8 &  2.27264e+06 &  2.28098e+06 &  34452.2 &   1.03792 \\
       &     &     & 16 &  2.06737e+06 &  2.06864e+06 &  3631.41 &   1.00465 \\
       &     &     & 23 &  1.85262e+06 &  1.85316e+06 &  1579.55 &   1.00227 \\
       &     &     & 31 &  1.48692e+06 &  1.48739e+06 &  3060.34 &   1.00574 \\
       &     &     & 63 &  1.23218e+06 &  1.23213e+06 &   3422.3 &   1.00665 \\
       &     &     & 95 &  1.11774e+06 &  1.11715e+06 &  5174.11 &   1.01166 \\
       &     & seq & 1 &  1.39532e+07 &  1.39525e+07 &  8965.07 &   1.00157 \\
       &     &     & 2 &  1.51132e+07 &  1.50894e+07 &   119766 &   1.02114 \\
       &     &     & 4 &   3.3227e+06 &  3.31492e+06 &  13600.1 &   1.00948 \\
       &     &     & 8 &  2.69598e+06 &   2.6822e+06 &  19174.8 &    1.0132 \\
       &     &     & 16 &  2.09911e+06 &  2.10757e+06 &  17185.9 &   1.01866 \\
       &     &     & 23 &  1.89835e+06 &  1.89785e+06 &  3783.31 &   1.00542 \\
       &     &     & 31 &  1.71193e+06 &   1.7124e+06 &   5194.5 &    1.0076 \\
       &     &     & 63 &  1.44299e+06 &  1.44049e+06 &  12048.1 &   1.01896 \\
       &     &     & 95 &  1.09068e+06 &  1.08852e+06 &     8834 &   1.01751 \\
\bottomrule
\end{tabular}

                        }
                    \end{center}
                \end{minipage}
                \caption{Records grouped by target platform, lock algorithm, \seqc-only/\sys-optimized version and number of threads.
                Computed values are median, mean, standard deviation and stability of the \texttt{throughput} column of Table~\ref{table:raw-records}.}
                \label{table:grouped-records}
            \end{table}

            As can be seen on the table, these values are computed for different versions of the algorithm.
            \texttt{opt} refers to the \sys-optimized version of the algorithm, while \texttt{seq} refers to the \seqc-only variant.

            Mean, median and standard deviation are computed using the usual definitions, while the stability is computed by dividing the maximum throughput by the minimum throughput, effectively giving an indication on the stability of the data set.
            The closer the stability is to $1.00$, the more stable the sample is for these fixed values of the parameters.
            Figure~\ref{fig:eval-microbench-stability} shows the repartition of the stability among the records of the above table.
            As can be observed on the density chart, most observed values are stable.

            \begin{figure}[t]
                \centering
                \includegraphics[width=\plotwidth\textwidth]{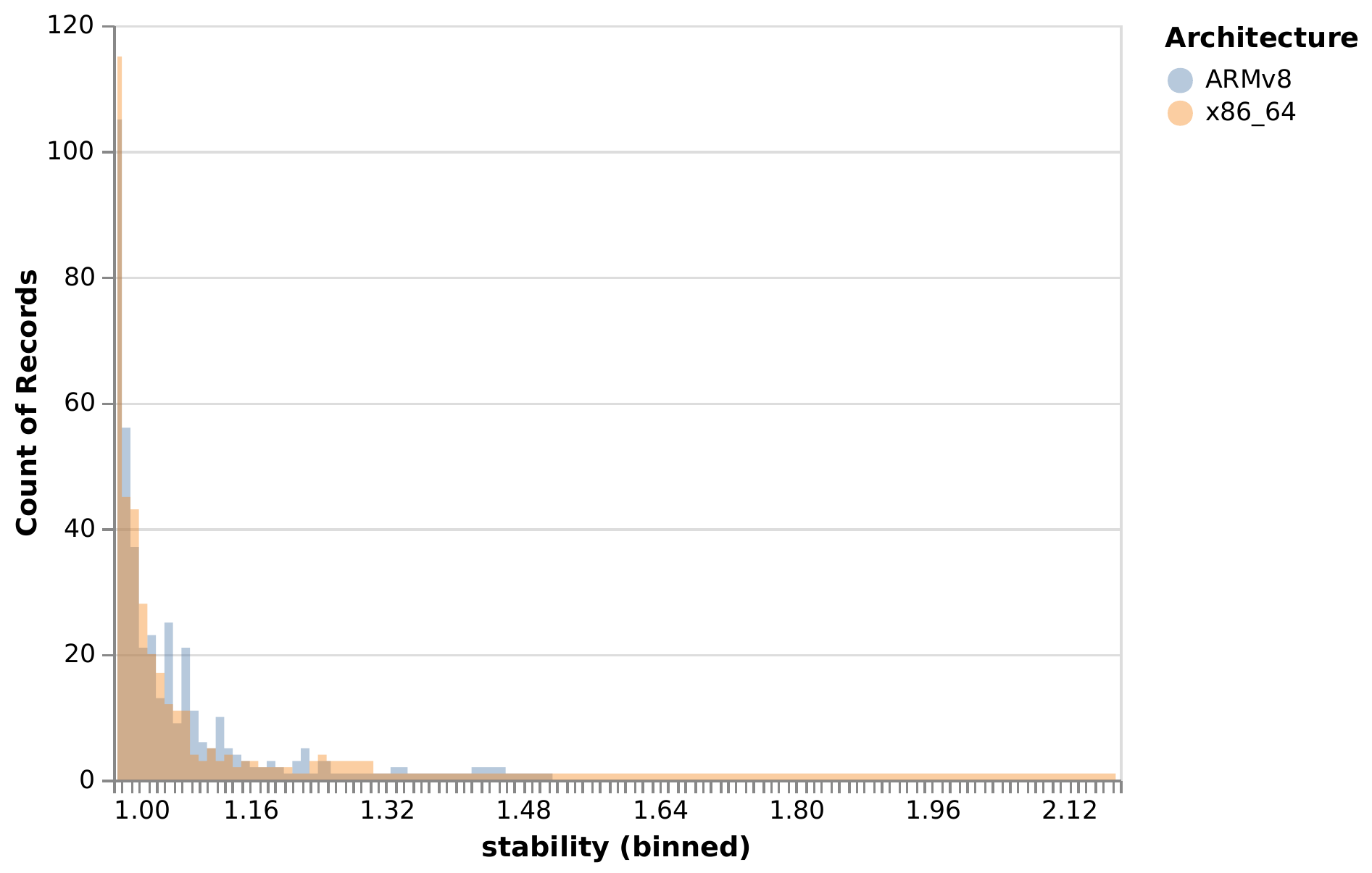}
                \caption{Density of the stability of the different records, per architecture.
                The chart exhibits the fact that most results are very stable ($<1.16$ for stability).}
                \label{fig:eval-microbench-stability}
            \end{figure}

            Another way to see them is to group the lines of Table~\ref{table:grouped-records} by stability values as done in Table~\ref{table:stability-records}.
            We see than more than 84\% of the results have a stability inferior to 10\% (value $\leq 1.1$ in the table), which we believe is satisfactory.

            \begin{table}
                \begin{minipage}{\textwidth}
                    \begin{center}
                        % WARNING !!!
% Do not modify this file manually: its content is automatically generated.
\begin{tabular}{rrr}
\toprule
Stability values & Amount (absolute) &          Amount (\%) \\
\midrule
      $\leq 1.1$ &               627 &            $84.62\%$ \\
         $> 1.1$ &             $114$ &            $15.38\%$ \\
         $> 1.2$ &              $45$ &             $6.07\%$ \\
         $> 1.3$ &              $16$ &             $2.16\%$ \\
         $> 1.4$ &               $9$ &             $1.21\%$ \\
  \textbf{Total} &    $\textbf{741}$ &  $\textbf{100.00\%}$ \\
\bottomrule
\end{tabular}

                    \end{center}
                \end{minipage}
                \caption{Number of experiments categorized by stability.
                The mentioned records are lines of Table~\ref{table:grouped-records}.}
                \label{table:stability-records}
            \end{table}

            In practice, we filter out all records with more than 20\% stability to avoid skewing the results.
            Indeed, sometimes the optimization speedup can be in the 0 to 20\% improvement range.
            Therefore, statistically-speaking computing speedups for unstable values would make little to no sense at all.
            For our evaluations, it means we had to throw less than 7\% of the results (see Table~\ref{table:stability-records}).

            Instead of simply dropping unstable records, an alternate method would be to redraw experiment samples until it becomes stable (\eg with a stability threshold value of $1.1$).
            However, for time reasons, we did not use this method.
            We believe the results obtained with this method and the method we actually used would be very similar.

            \paragraph{Analysis of speedups of \sys-optimized over \seqc-only implementations.}

            Then, we use the values in the filtered table of records to compute the speedup $\frac{T_o}{T_s} - 1$, where $T_o$ is median throughput of \sys-optimized and $T_s$ is the median throughput \seqc-only variants, respectively.
            Descriptive statistics aggregates about the observed speedups are showed in Table~\ref{tab:eval-microbench-speedup} and the density of the speedup values is showed in Figure~\ref{fig:eval-microbench-speedups}.
            In the paper, for the sake of space, we only reported maximum observed speedups (the \texttt{max} column in Table~\ref{tab:eval-microbench-speedup}).
            We can observe on Figure~\ref{fig:eval-microbench-speedups} that most speedups are close to 0.
            This effect can mainly be observed on the plot because of the highly-contended cases (number of threads from $8$ and up), where the impact of optimizing barrier is negligible.
            On the other hand, if we observe the same data but split for all measured contention levels (\ie number of threads) as depicted per architecture on Figures~\ref{fig:eval-microbench-speedups-heatmap-arm} and~\ref{fig:eval-microbench-speedups-heatmap-x86}, we can analyze the results with finer-grain details.
            For \texttt{ARMv8 (taishan200-128c)} (Fig.~\ref{fig:eval-microbench-speedups-heatmap-arm}), good results are scattered across the different contentions levels, but speedups tend to be better for low contention level (especially the $1$ thread case).
            In the case of \texttt{x86} (Fig.~\ref{fig:eval-microbench-speedups-heatmap-x86}), the tremendous low-contention speedup case (up to $7\times$ for $1$ thread) is emphasized.
            This is so big that it overshadows the other cases.
            However, the \texttt{qspinlock} column is clearly better than the others, illustrating that in the \texttt{x86} case, \texttt{qspinlock} has \emph{no negative speedup}.

            \begin{table}
            \begin{minipage}{\textwidth}
                \begin{center}
                    \resizebox{1.0\textwidth}{!}{%
                        % WARNING !!!
% Do not modify this file manually: its content is automatically generated.
\begin{tabular}{lrrrrrrrr}
\toprule
                                   Lock & \multicolumn{4}{l}{aarch64} & \multicolumn{4}{l}{x86_64} \\
                                        &       max &      mean &       min &       std &       max &      mean &       min &       std \\
\midrule
       ArrayQ lock\cite{herlihy2011art} &  0.256496 &  0.195695 &  0.136534 &  0.035925 &  3.704002 &  0.512900 &  0.050954 &  1.198285 \\
      CertiKOS MCS\cite{gu2016certikos} &  0.741137 &  0.148102 &  0.014506 &  0.217878 &  0.711755 &  0.323380 &  0.184185 &  0.191676 \\
          CLH lock\cite{herlihy2011art} &  0.326751 &  0.116884 & -0.008331 &  0.090660 &  7.034888 &  0.767551 & -0.150228 &  2.350886 \\
           c-TKT-MCS\cite{dice2012lock} &  0.633937 &  0.317441 &  0.046994 &  0.196817 &  1.379088 &  0.122309 & -0.457079 &  0.497779 \\
          c-TTAS-MCS\cite{dice2012lock} &  0.538990 &  0.265129 &  0.063337 &  0.157456 &  1.388887 &  0.114989 & -0.454959 &  0.502378 \\
                              c-MCS-TWA &  0.610119 &  0.040497 & -0.065989 &  0.201884 &  1.250191 &  0.121885 & -0.282231 &  0.487366 \\
   HCLH lock\cite{herlihy2011art, HCLH} &  0.297295 &  0.050683 & -0.084593 &  0.105824 &  1.331446 &  0.201763 & -0.019287 &  0.430856 \\
 MCS lock\cite{MCSLock, herlihy2011art} &  0.776025 &  0.111308 & -0.031085 &  0.252447 &  3.605063 &  0.387545 & -0.404894 &  1.216811 \\
                  musl mutex\cite{musl} &  0.039510 & -0.000101 & -0.031193 &  0.026376 &  0.048818 & -0.050181 & -0.206353 &  0.110743 \\
 3-state mutex\cite{drepper2005futexes} &  0.000157 & -0.013167 & -0.025476 &  0.012847 & -0.001148 & -0.140026 & -0.287827 &  0.143548 \\
              qspinlock\cite{qspinlock} &  0.235271 &  0.117755 & -0.148553 &  0.131182 &  3.966020 &  0.606631 &  0.085625 &  1.261861 \\
                          rec. CAS lock &  0.078310 &  0.010387 & -0.024441 &  0.037942 &  4.099000 &  1.950000 & -0.338620 &  1.324669 \\
                                RW lock &  0.546915 & -0.007887 & -0.405652 &  0.289225 &  1.194217 &  0.671709 & -0.015319 &  0.462634 \\
                              Semaphore &  0.112009 & -0.001584 & -0.089416 &  0.062090 &  0.004707 &  0.000205 & -0.003467 &  0.002518 \\
                               CAS lock &  0.045456 & -0.007812 & -0.060903 &  0.031734 &  4.099994 &  1.247507 & -0.255443 &  1.175026 \\
            Ticketlock\cite{Ticketlock} &  0.162076 &  0.009903 & -0.060949 &  0.059517 &  3.935485 &  0.633132 & -0.392997 &  1.260925 \\
         TTAS lock\cite{herlihy2011art} &  0.094544 & -0.011786 & -0.110472 &  0.057227 &  3.875872 &  1.322232 & -0.286289 &  1.078801 \\
             TWA lock\cite{dice2019twa} &  0.337384 &  0.026431 & -0.044020 &  0.112353 &  1.612696 &  0.023385 & -0.600899 &  0.627365 \\
\bottomrule
\end{tabular}

                    }
                \end{center}
            \end{minipage}
            \caption{Speedups of \sys-optimized version of the algorithm over \seqc-only variant.
            This descriptive summary must be read with care (especially for the values of \texttt{mean}), as they are only aggregated from \emph{our own} experiment samples (our arbitrary selected thread number, etc.).}
            \label{tab:eval-microbench-speedup}
            \end{table}

            \begin{figure}[t]
                \centering
                \includegraphics[width=\plotwidth\textwidth]{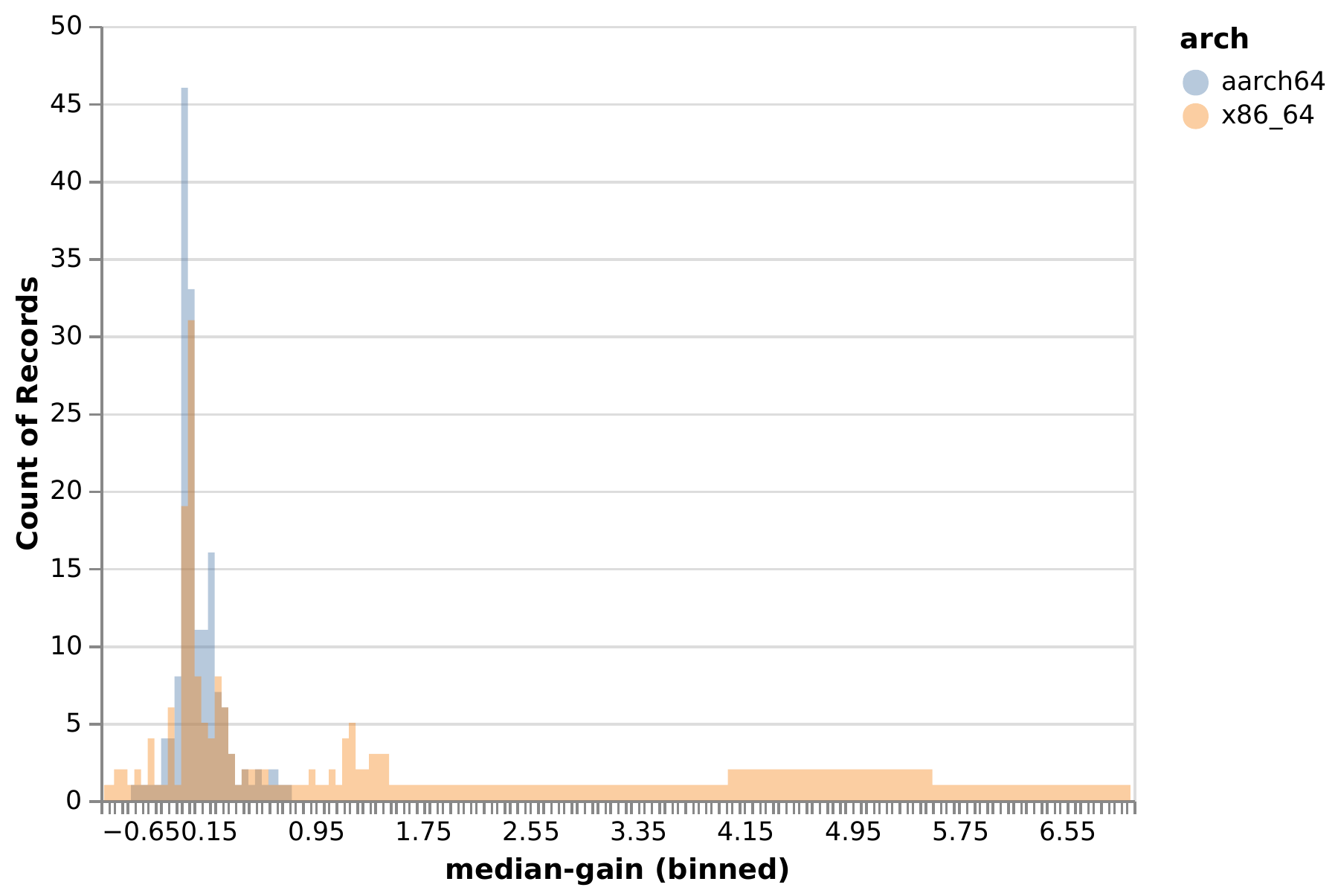}
                \caption{Density of the speedups of the different locks, per architecture.}
                \label{fig:eval-microbench-speedups}
            \end{figure}

            \begin{figure}[t]
                \centering
                \includegraphics[width=\plotwidth\textwidth]{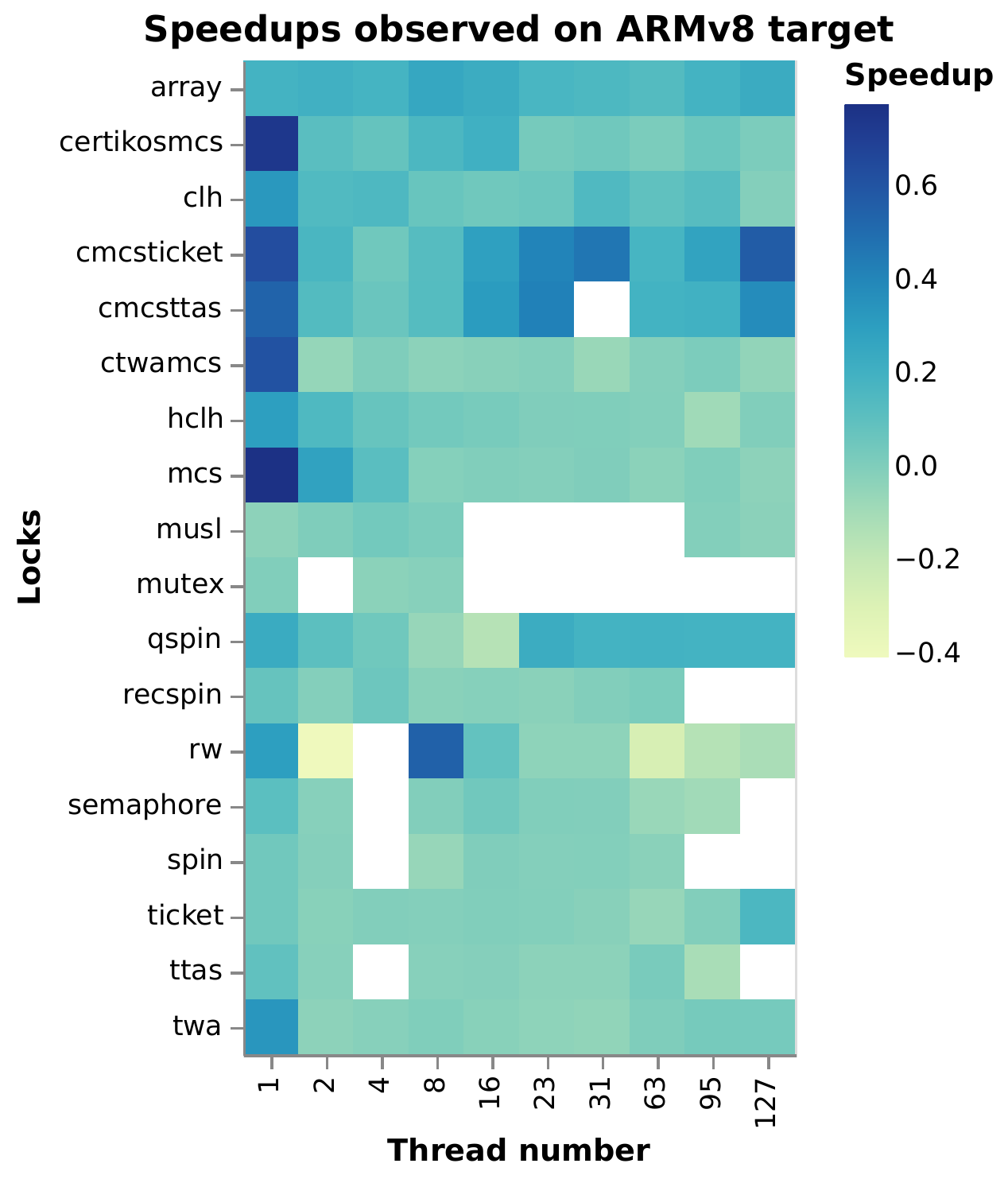}
                \caption{Heat map showing the speedups of the different locks on \texttt{ARMv8 (taishan200-128c)}.\\
                White squares correspond to data filtered-out for instability.}
                \label{fig:eval-microbench-speedups-heatmap-arm}
            \end{figure}

            \begin{figure}[t]
                \centering
                \includegraphics[width=\plotwidth\textwidth]{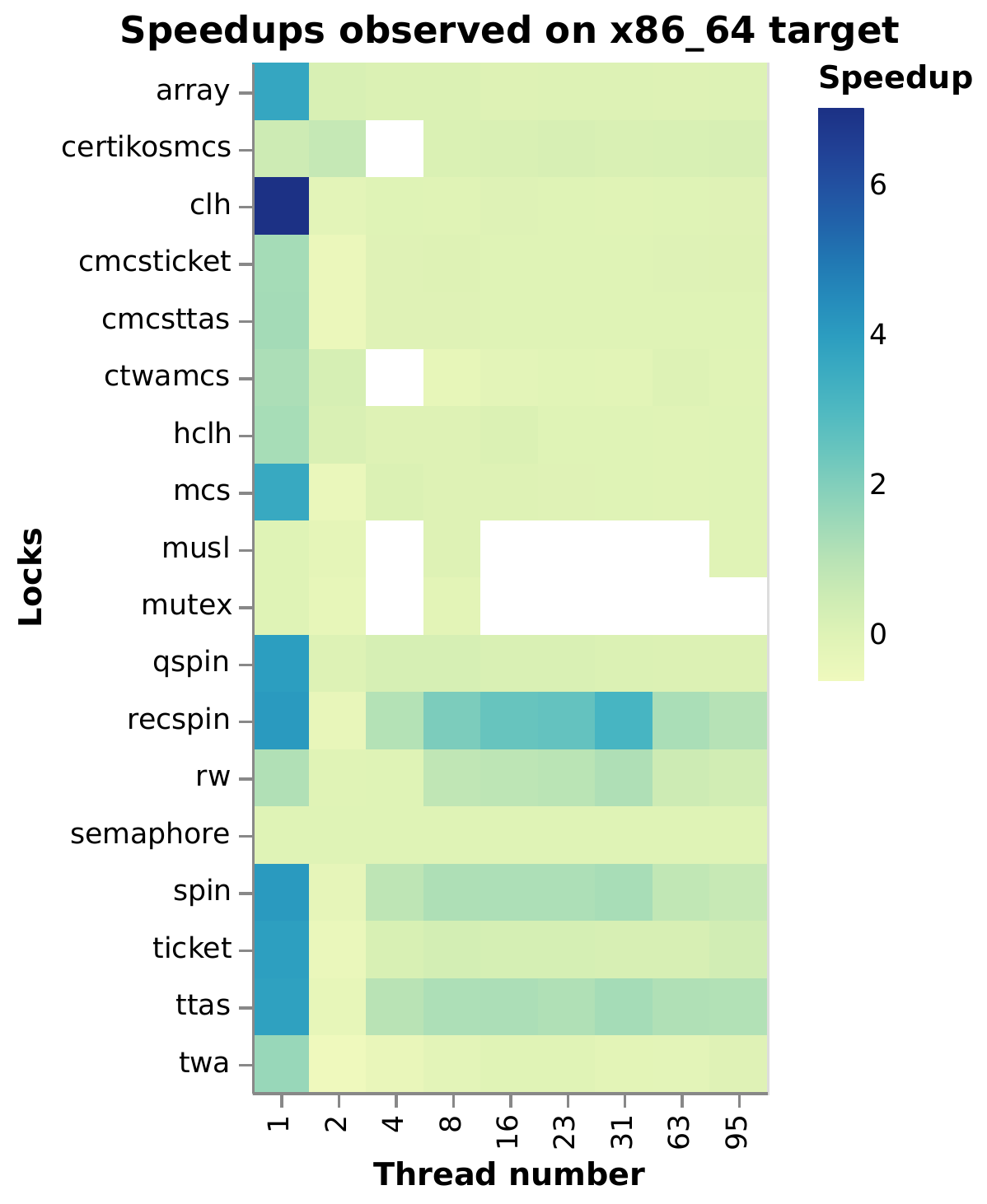}
                \caption{Heat map showing the speedups of the different locks on \texttt{x86\_64 (gigabyte-96c)}.\\
                White squares correspond to data filtered-out for instability.}
                \label{fig:eval-microbench-speedups-heatmap-x86}
            \end{figure}

            \paragraph{MCS lock comparisons.}

            Figure~\ref{fig:eval-microbench-mcs} compares the performance of several MCS lock implementations on \texttt{ARMv8 (taishan200-128c)} and \texttt{x86\_64 (gigabyte-96c)}.
            As reported in the paper, the different MCS lock implementations are: DPDK\cite{DPDK}, Concurrency Kit (ck)\cite{CK}, CertiKOS\cite{gu2016certikos} and our \sys-optimized.

            \begin{figure}[t]
                \centering
                \includegraphics[width=1.0\textwidth]{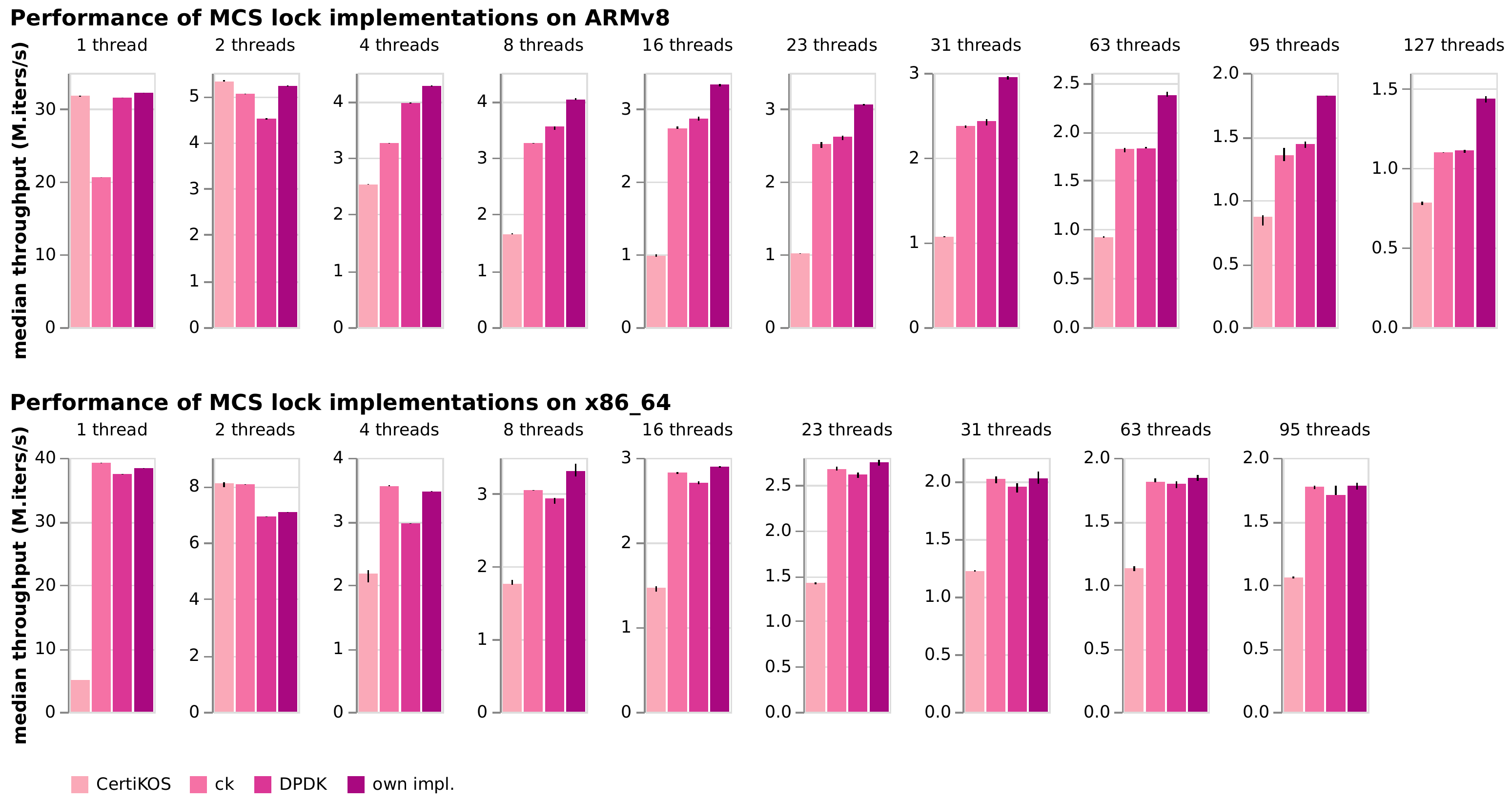}
                \caption{Comparisons of performance of different MCS lock implementations on \texttt{ARMv8 (taishan-128c)} and \texttt{x86\_64 (gigabyte-96c)}.}
                \label{fig:eval-microbench-mcs}
            \end{figure}

            \paragraph{Critical and non-critical section sizes.}

            We observed a few other things by conducting this campaign of experiments.
            Our benchmark setting allows for additional parameters: \texttt{cs\_size} and \texttt{es\_size} (not reported in the charts of this report).
            \begin{itemize}
                \item The \texttt{cs\_size} parameter (for ``critical section size'') allows to artificially increase (and control) the size of the critical section.
            Instead of only touching one cache line by increasing a counter (as depicted in Listing~\ref{listing:eval-benchstructure}), we can touch an arbitrary number of cache lines, which would corresponds to the value of the \texttt{cs\_size} parameter.

                \item The \texttt{es\_size} parameter allows to set an arbitrary number of cache lines touched \emph{outside} the critical section, to simulate different relative sizes of the critical section with regards to the size of the whole program.
            \end{itemize}
            We observed the following:
            \begin{enumerate}
                \item The \texttt{es\_size} parameter did not influence the results, meaning that the lock primitive performances and speedup obtained in \sys-optimized over \seqc-only are not affected by the size of the program that is not in the critical section.

                \item The \texttt{cs\_size} parameter strongly influenced the results in the following way: the bigger the critical section was, the less was the impact of the barrier optimization.
                Additionally, overall, all locking primitives are converging towards the same performance value for an increasing critical section, which is expected as the entry/exit protocols are negligible relatively to a sufficiently large critical section.
            \end{enumerate}
            From this, we can conclude that barrier optimizations and locking protocols make sense especially for small critical sections and fine-grain locking.
            For the final results of the paper, we decided to set \texttt{cs\_size} to $1$ and \texttt{es\_size} to $0$.

            \paragraph{Hash table benchmarks.}

            Linked to these last findings, prior running our custom-made kernel benchmark module, we tried to use the work of Kashyap \etal~\cite{shflock2019} (which is publicly available on Github~\footnote{\url{https://github.com/sslab-gatech/shfllock/tree/master/benchmarks/kernel-syncstress}}), but we were not able to produce predictable results.
            Indeed, the variability of such results was very high, and each time we changed a small parameter it produced different output values.
            It happened even for details that should not influence the results, such as changing the linking order of the object modules in the makefile.
            This was unusable for our work, and could be explained in the following way:
            basically, the critical section in the kernel \emph{syncstress} module of Kashyap~\etal{} is accessing nodes of a hash table.
            The data of this hash table is randomly populated.
            However, accessing a hash table is not a predictable operation in terms of run-time, and different access can yield very different execution times (especially if the hash table is seeded with different random values at each run).
            This would lead to critical section size being very different for different runs (even with the same parameter values), and was therefore not usable to compare different techniques.
            In comparison, our microbenchmark framework, although being simpler in its structure, produce very predictable results (with very small deviations and good stability, as showcased above in this section) and allow to precisely measure the overheads of barriers and the performance of different locking primitive implementations.

%\subsection{Real-world workload}
%\input{eval-kyoto}

%\clearpage
% vim: ts=4 sts=4 sw=4 et

%\input{primitives}
%\pagebreak
%\input{loxical}

\pagebreak
%-------------------------------------------------------------------------------
\bibliographystyle{plain}
\bibliography{references}

\end{document}